\documentclass[12pt]{article}
\usepackage{amsfonts}
\usepackage{stmaryrd}
\usepackage{array}
\usepackage{graphicx}
\usepackage{graphics}
\usepackage{epstopdf}
\usepackage{epsfig}
\usepackage{amsmath}
\usepackage{amsthm}
\usepackage{amssymb}
\usepackage{float}
\usepackage{cite}
\usepackage{multirow}
\usepackage{verbatim}
\usepackage{fancyhdr}
\usepackage{subfigure}
\usepackage{color}
\usepackage{mathtools}
\usepackage{sectsty}
\usepackage[title]{appendix}
\usepackage{threeparttable}
\usepackage{dcolumn}
\usepackage{booktabs}
\usepackage{indentfirst}
\usepackage{setspace}
\usepackage{bm}
\usepackage{enumerate}
\usepackage[labelfont=bf,labelsep=period]{caption}

\newcolumntype{d}[1]{D{.}{.}{#1}}
\theoremstyle{definition}

\theoremstyle{plain}

\newtheorem{Pro}{Proposition}[section]

\numberwithin{equation}{section}

\makeatletter
\renewcommand{\maketag@@@}[1]{\hbox{\m@th\normalsize\normalfont#1}}%
\makeatother

\allowdisplaybreaks
\begin{document}
\pagenumbering{arabic}
\baselineskip=1.4pc

\vspace*{0.5in}

\begin{center}

{{\bf \Large
A Conservative Time-Accurate Local Time-Stepping DG Scheme Based on a Weakly Compressible Model for Unsteady Low-Mach-Number Flows}}

\end{center}

\vspace{.03in}

\centerline{
Shihao Liu \footnote{
	Digital Twin Research Center, Institute of Engineering Thermophysics, Chinese Academy of Sciences, Beijing 100190, PR China. Supported by the Strategic Priority Research Program of the Chinese Academy of Sciences, Grant No. XDA0390501
	E-mail: liushihao@iet.cn.},
Keli Zhang \footnote{
	School of Aeronautic Science and Engineering, Beihang University, 100191 Beijing, PR China.
    Rankyee Technology Beijing, 100036 Beijing,  PR China.
	E-mail: zkl70@buaa.edu.cn.},
Yuning Luan \footnote{
	School of Mechanical Engineering, Tianjin University, 300072 Tianjin, PR China.
	E-mail: luanyuning@163.com},
Kai Liu \footnote{
	Rankyee Technology Beijing, 100036 Beijing,  PR China.
	E-mail: liukai@rankyee.com.}}

\vspace{.1in}

\noindent
\textbf{Abstract:}
This paper presents a conservative high-order discontinuous Galerkin (DG) method with time-accurate local time stepping for the accurate simulation of low-Mach-number unsteady flows based on a weakly compressible formulation. In this model, pressure is prescribed solely as a function of density, which avoids the solution of a global pressure Poisson equation required in traditional incompressible approaches, while retaining the locality and conservation properties of compressible formulations. This formulation is suitable for the numerical simulation of low-speed unsteady flows and related aeroacoustic problems. The spatial discretization is constructed in a strong-form nodal DG spectral element method (DGSEM) framework based on Gauss--Lobatto--Legendre (GLL) points, in which inviscid interface coupling is realized through numerical fluxes designed for the weakly compressible system, while viscous terms are discretized by the incomplete interior penalty Galerkin (IIPG) method. For the convective part, in addition to the local Lax--Friedrichs flux, a two-rarefaction approximate Riemann flux is formulated for the present constant-sound-speed barotropic equation of state by specializing the Riemann-invariant relations to the local normal direction. For temporal discretization, the method adopts a CERK continuous extension to obtain a cell-local predictor polynomial for continuous-in-time reconstruction of the volume terms. The face correction is split into a purely interior contribution and a common-flux contribution: the former is integrated with the same temporal quadrature as the volume term, whereas the latter is integrated by face-wise piecewise Gaussian quadrature using interface states evaluated from the continuous predictors. This split treatment preserves the discrete summation-by-parts cancellation in the nodal DGSEM formulation and keeps the common-flux exchange conservative across element interfaces. Across the numerical examples, periodic tests verify roundoff-level mass conservation, the cylinder test indicates weaker far-field pressure fluctuations with the TR flux, and the cavity and 30P30N cases demonstrate viscous-flow resolution and applicability to complex three-dimensional configurations. The proposed method combines high-order accuracy, geometric flexibility, temporal accuracy for unsteady simulations, support for local time stepping, and conservation under local time stepping, and provides a conservative LTS-DG framework for high-fidelity simulations of complex low-Mach-number unsteady flows.

\bigskip

\textbf{Key Words:} low-Mach-number unsteady flows; weakly compressible model; discontinuous Galerkin method; local time stepping; high-order methods

\pagenumbering{arabic}

\baselineskip=1.4pc

\section{Introduction}

Low-Mach-number flows arise in a wide range of engineering applications, including aeroacoustic prediction, flow control, and complex unsteady vortex-dominated flows. High-fidelity numerical simulation of such problems usually requires not only accurate resolution of local fine-scale flow structures, but also long-time accuracy in the evolution of both hydrodynamic and acoustic fields. However, conventional compressible flow solvers often suffer from pronounced numerical stiffness in the low-Mach-number regime, because the acoustic wave speed is much larger than the characteristic flow velocity. As a result, the stable time step of explicit schemes is severely restricted by both the smallest mesh size and the largest local wave speed, leading to substantial computational cost, especially for high-order simulations on nonuniform meshes.

For low-speed flow computation, traditional approaches more commonly rely on incompressible models together with pressure-velocity coupling algorithms. Typical examples include SIMPLE-type methods based on pressure correction \cite{PatankarSpalding1972,Patankar1980} and projection methods based on Helmholtz decomposition \cite{Chorin1968,Temam1969}. These methods have been widely used in incompressible flow simulations and are well known for their maturity, robustness, and clear physical interpretation. However, a common feature of such approaches is that they require the solution of a pressure Poisson equation, or an equivalent global elliptic problem, in order to enforce the divergence constraint on the velocity field. For large-scale unsteady simulations, complex geometries, and high-order local discretizations, this global coupling step often introduces considerable algorithmic complexity and communication overhead, and may partially offset the locality and parallel efficiency of explicit methods.

To improve the applicability of compressible-type numerical methods to low-speed flows, weakly compressible models have attracted continued attention in recent years. By introducing a barotropic equation of state in which pressure depends only on density, such models retain the conservative form and the implementation convenience of compressible formulations, while improving their suitability for low-Mach-number unsteady flow simulations. Compared with traditional incompressible models, one direct advantage of weakly compressible formulations is that no additional pressure Poisson equation needs to be solved. Instead, pressure evolution is obtained through local updates of the conservative variables together with the equation of state, which makes it easier to construct locally dependent numerical algorithms. Furthermore, this formulation is compatible with explicit time integration, thereby avoiding large-scale matrix assembly and the solution of global linear systems. As a consequence, it reduces global-coupling overhead and preserves locality, which is important for parallel implementation, local time stepping, and discretization on complex unstructured meshes. In addition, weakly compressible models can be naturally coupled with aeroacoustic post-processing techniques, which has motivated their use in the prediction of low-speed flows and sound generation.

Among numerical approaches based on weakly compressible formulations, the lattice Boltzmann method (LBM) has been widely used for low-Mach-number flow simulations \cite{ChenDoolen1998,Succi2001,AidunClausen2010}. Owing to its explicit streaming-collision structure, LBM possesses strong locality, high parallel efficiency, and advantages for local time advancement. However, LBM generally relies on structured lattice arrangements and imposes relatively strict requirements on mesh regularity, lattice connectivity, and the compatibility between discrete velocity directions and the underlying grid. Consequently, it is less flexible for strongly nonuniform meshes, local mesh refinement, curved boundaries, and general unstructured discretizations. Moreover, boundary treatment in LBM is often nontrivial. The accurate implementation of no-slip walls, curved boundaries, inflow/outflow conditions, and coupled boundary physics usually requires specially designed bounce-back, interpolation, or reconstruction procedures \cite{ZouHe1997,Bouzidi2001,GuoZhengShi2002,AnsumaliKarlin2002}, whose robustness and accuracy may depend sensitively on the local lattice arrangement. Another representative approach is the smoothed particle hydrodynamics (SPH) method, which also often adopts a weakly compressible formulation in order to avoid solving a pressure Poisson equation \cite{Monaghan1992,Monaghan1994,MorrisFoxZhu1997,HuAdams2006}. As a meshfree Lagrangian framework, SPH is attractive for problems involving large deformation, free surfaces, and moving boundaries, but it also faces challenges related to particle disorder, numerical dissipation, kernel consistency, and reduced boundary accuracy.

Meanwhile, high-order discontinuous Galerkin (DG) methods have become an important tool for high-fidelity flow computation because of their high-order accuracy, compact stencil, geometric flexibility, and favorable parallel performance \cite{CockburnShu2001,HesthavenWarburton2008}. For multiscale unsteady problems with local mesh refinement, local time-stepping (LTS) strategies can effectively alleviate the efficiency bottleneck caused by the globally smallest allowable time step. In recent years, several LTS frameworks have been developed for DG methods \cite{GassnerHindenlangMunz2011,GassnerDumbserHindenlangMunz2011,DumbserBalsaraToroMunz2008}. In particular, continuous explicit Runge--Kutta (CERK) methods provide a natural continuous-in-time predictor for one-step formulations and time-accurate local time stepping \cite{OwrenZennaro1992,GassnerHindenlangMunz2011}. For weakly compressible low-Mach-number flows, it is of interest to incorporate the corresponding physical model, spatial discretization, numerical flux treatment, and time integration strategy effectively into a high-order DG framework with local time stepping.

Motivated by these considerations, this paper studies a weakly compressible DG method with local time stepping for low-Mach-number flows. First, a weakly compressible model is incorporated into a high-order DG framework with local time stepping, so as to combine the suitability of the model for low-Mach-number flows with the flexibility of high-order discretization. Second, the viscous terms are discretized by the incomplete interior penalty Galerkin (IIPG) method \cite{Arnold1982,ArnoldBrezziCockburnMarini2002,Riviere2008} to obtain a stable viscous treatment. For the inviscid part, a two-rarefaction approximate Riemann flux is formulated according to the characteristics of the weakly compressible system and the philosophy of approximate Riemann solvers \cite{Toro2009}, so that interfacial information can be transmitted in a consistent and robust manner. Since the present spatial discretization is a nodal DGSEM based on GLL points, conservation under local time stepping also requires the discrete summation-by-parts cancellation between the volume term and the purely interior part of the strong-form face correction to be retained. For this reason, the surface contribution is split into a purely interior part and a common-flux part: the former is integrated together with the volume term using element-local quadrature, whereas only the common-flux part is integrated by face-wise time matching.

The methodology developed here has already been integrated into the commercial software DIMAXER, and has been extended to the flux reconstruction (FR) framework and applied to aeroacoustic simulations \cite{zhang2026sweep}. The present paper focuses on a description of its original DG formulation.

Therefore, the main contributions of this work are to incorporate a weakly compressible model into a high-order DG framework with local time stepping, to adopt the IIPG treatment for viscous terms, to formulate a two-rarefaction approximate Riemann flux specialized to the weakly compressible system, and to construct a CERK-based time-accurate local time-stepping formulation with a conservation-preserving split treatment of the face contribution. With these ingredients, the resulting method retains high-order accuracy while supporting local time stepping and preserving the conservative interface exchange of the underlying nodal DG discretization. A series of numerical experiments are carried out to assess the accuracy, conservation behavior, robustness, and applicability of the proposed method.

\section{Governing Equations}

We consider a weakly compressible model for low-Mach-number flows. In two space dimensions, the governing equations are written in conservative form as
\begin{equation}
\frac{\partial \rho}{\partial t}
+\frac{\partial (\rho u)}{\partial x}
+\frac{\partial (\rho v)}{\partial y}=0,
\label{eq:wc_cont}
\end{equation}
\begin{equation}
\frac{\partial (\rho u)}{\partial t}
+\frac{\partial (\rho u^2+p)}{\partial x}
+\frac{\partial (\rho uv)}{\partial y}
=\mu\left(\frac{\partial^2 u}{\partial x^2}
+\frac{\partial^2 u}{\partial y^2}\right),
\label{eq:wc_momx}
\end{equation}
\begin{equation}
\frac{\partial (\rho v)}{\partial t}
+\frac{\partial (\rho uv)}{\partial x}
+\frac{\partial (\rho v^2+p)}{\partial y}
=\mu\left(\frac{\partial^2 v}{\partial x^2}
+\frac{\partial^2 v}{\partial y^2}\right),
\label{eq:wc_momy}
\end{equation}
where $\rho$ is the density, $(u,v)^T$ is the velocity vector, $p$ is the pressure, and $\mu$ is the dynamic viscosity. When a reference density is used, the corresponding reference kinematic viscosity is $\mu/\rho_0$. The pressure is assumed to satisfy a barotropic equation of state,
\begin{equation*}
p=p(\rho),
\end{equation*}
with
\begin{equation*}
p'(\rho)>0,\qquad \rho>0,
\end{equation*}
so that the convective subsystem remains hyperbolic.

Introducing the conservative variable vector
\begin{equation*}
U=
\begin{pmatrix}
\rho\\
\rho u\\
\rho v
\end{pmatrix}
=
\begin{pmatrix}
\rho\\
m_1\\
m_2
\end{pmatrix},
\end{equation*}
where $m_1=\rho u$ and $m_2=\rho v$ denote the momentum components in the $x$- and $y$-directions, respectively, the system can be expressed in a compact form as
\begin{equation}
\frac{\partial U}{\partial t}
+\frac{\partial F_c(U)}{\partial x}
+\frac{\partial G_c(U)}{\partial y}
=
\frac{\partial F_d(U,\nabla U)}{\partial x}
+\frac{\partial G_d(U,\nabla U)}{\partial y},
\label{eq:wc_compact}
\end{equation}
where the convective fluxes are
\begin{equation*}
F_c(U)=
\begin{pmatrix}
\rho u\\
\rho u^2+p\\
\rho uv
\end{pmatrix},
\qquad
G_c(U)=
\begin{pmatrix}
\rho v\\
\rho uv\\
\rho v^2+p
\end{pmatrix},
\end{equation*}
and the diffusive fluxes are
\begin{equation*}
F_d(U,\nabla U)=
\begin{pmatrix}
0\\
\mu\,u_x\\
\mu\,v_x
\end{pmatrix},
\qquad
G_d(U,\nabla U)=
\begin{pmatrix}
0\\
\mu\,u_y\\
\mu\,v_y
\end{pmatrix}.
\end{equation*}
Since the numerical flux construction relies on the characteristic structure of the hyperbolic part, we next consider the corresponding one-dimensional convective subsystem in the $x$-direction.

Neglecting the $y$-derivative and viscous terms, the governing equations reduce to
\begin{equation*}
\frac{\partial \rho}{\partial t}
+\frac{\partial (\rho u)}{\partial x}=0,
\end{equation*}
\begin{equation*}
\frac{\partial m_1}{\partial t}
+\frac{\partial \left(\dfrac{m_1^2}{\rho}+p\right)}{\partial x}=0,
\end{equation*}
\begin{equation*}
\frac{\partial m_2}{\partial t}
+\frac{\partial \left(\dfrac{m_1m_2}{\rho}\right)}{\partial x}=0.
\end{equation*}

To derive the Riemann invariants, it is convenient to rewrite the one-dimensional system in primitive variables $(\rho,u,v)$. Expanding the governing equations yields
\begin{equation*}
\rho_t+u\rho_x+\rho u_x=0,
\end{equation*}
\begin{equation*}
u_t+uu_x+\frac{1}{\rho}p'(\rho)\rho_x=0,
\end{equation*}
\begin{equation*}
v_t+uv_x=0.
\end{equation*}
Equivalently, one has
\begin{equation*}
\begin{pmatrix}
\rho\\ u\\ v
\end{pmatrix}_t
+
B(\rho,u,v)
\begin{pmatrix}
\rho\\ u\\ v
\end{pmatrix}_x
=0,
\end{equation*}
with
\begin{equation*}
B(\rho,u,v)=
\begin{pmatrix}
u & \rho & 0\\[1mm]
\dfrac{p'(\rho)}{\rho} & u & 0\\[2mm]
0 & 0 & u
\end{pmatrix}.
\end{equation*}

\begin{Pro}
Let
\begin{equation*}
c=\sqrt{p'(\rho)}.
\end{equation*}
Then the characteristic speeds of the one-dimensional barotropic system are
\begin{equation}
\lambda_1=u-c,\qquad
\lambda_2=u,\qquad
\lambda_3=u+c.
\label{eq:eigenvalues_A}
\end{equation}
Moreover, a corresponding set of Riemann invariants is given by
\begin{equation}
w_1=u-\int \frac{\sqrt{p'(\rho)}}{\rho}\,d\rho,\qquad
w_2=v,\qquad
w_3=u+\int \frac{\sqrt{p'(\rho)}}{\rho}\,d\rho.
\label{eq:riemann_general}
\end{equation}
In smooth regions, these quantities satisfy
\begin{equation*}
(w_i)_t+\lambda_i(w_i)_x=0,\qquad i=1,2,3.
\end{equation*}
\end{Pro}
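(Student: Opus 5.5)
The plan is to work directly with the quasi-linear primitive form $V_t + B(\rho,u,v)V_x = 0$, $V=(\rho,u,v)^T$, derived just before the proposition. I would compute the spectrum of $B$ and then use left eigenvectors to obtain the Riemann invariants.

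\emph{Step 1: eigenvalues.} The matrix $B$ is block diagonal: a $2\times 2$ block acting on $(\rho,u)$ and the scalar entry $u$ acting on $v$. This gives $\lambda_2=u$ immediately. For the $2\times2$ block, the characteristic polynomial is $(u-\lambda)^2-\rho\cdot p'(\rho)/\rho=(u-\lambda)^2-c^2$, whose roots are $\lambda=u\mp c$. Since $p'(\rho)>0$, $c$ is real and positive, so the three speeds are real and distinct. This yields the eigenvalues listed in the statement and confirms strict hyperbolicity.

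\emph{Step 2: left eigenvectors.} For the block, I look for $\ell=(\alpha,\beta)$ with $\ell B = \lambda \ell$. Taking $\lambda=u\pm c$, the first column gives $\alpha u+\beta c^2/\rho=(u\pm c)\alpha$, so $\beta c^2/\rho=\pm c\,\alpha$. Choosing $\beta=1$ gives $\alpha=\pm c/\rho$. Hence the left eigenvectors of the full $B$ are
\[
\ell_{3}=\bigl(c/\rho,\,1,\,0\bigr),\qquad \ell_{1}=\bigl(-c/\rho,\,1,\,0\bigr),\qquad \ell_2=(0,0,1).
\]
One should check that the second column is also consistent: $\alpha\rho+\beta u = \pm c + u = \lambda\beta$, which holds.

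\emph{Step 3: integrating factors and the transport equations.} The key observation is that each left eigenvector is already an exact differential in the $(\rho,u,v)$ variables. The component $\pm c(\rho)/\rho$ depends only on $\rho$, so
\[
\ell_{1,3}\cdot dV = du \pm \frac{\sqrt{p'(\rho)}}{\rho}\,d\rho = dw_{1,3},
\]
with $w_{1,3}$ defined by $w=u\pm\int \sqrt{p'}/\rho\,d\rho$. Likewise $\ell_2\cdot dV=dv=dw_2$.

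Multiplying the system on the left by $\ell_i$ in smooth regions then gives
\[
\ell_i V_t+\ell_i B V_x=\ell_i V_t+\lambda_i\ell_i V_x=0,
\]
that is, $(w_i)_t+\lambda_i(w_i)_x=0$ by the chain rule.

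The only point needing care is that the integrating factor is trivial, namely that $\ell_i$ is exactly a gradient with no extra multiplier. This works because $c/\rho$ is a function of $\rho$ alone, so no mixed-derivative compatibility condition arises. The argument also requires smoothness so that the chain rule applies; this is why the statement is restricted to smooth regions. I do not expect any genuine obstacle beyond this check.
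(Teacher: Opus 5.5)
Your proof is correct and follows a slightly different route from the paper's. The eigenvalue step is the same: the paper factors $\det(B-\lambda I)=(u-\lambda)\bigl[(u-\lambda)^2-p'(\rho)\bigr]$.

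The invariant step is organized differently. The paper takes $w_1,w_2,w_3$ as given and sets $H'(\rho)=c/\rho$. It then verifies $(w_i)_t+\lambda_i(w_i)_x=0$ directly: it applies the chain rule, uses the continuity equation to eliminate $\rho_t$, and finishes with the momentum equation. You instead compute the left eigenvectors, note that each $\ell_i$ is exactly $\nabla_V w_i$, and contract $V_t+BV_x=0$ with $\ell_i$.

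The underlying algebra is identical. The paper's expression $u_t+(u-c)u_x-\frac{c}{\rho}\bigl(\rho_t+(u-c)\rho_x\bigr)$ is precisely $\ell_1\cdot(V_t+\lambda_1 V_x)$. Your route derives the invariants rather than just checking them. It also shows why no integrating factor is needed: the $d\rho$-coefficient $\pm c/\rho$ depends on $\rho$ alone, so each $\ell_i\cdot dV$ is exact. The paper's direct verification is shorter and needs no eigenvector computation.

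One cosmetic slip: since $\ell_1=(-c/\rho,1,0)$ and $\ell_3=(c/\rho,1,0)$, your display $\ell_{1,3}\cdot dV=du\pm\frac{\sqrt{p'(\rho)}}{\rho}\,d\rho$ should read $du\mp\frac{\sqrt{p'(\rho)}}{\rho}\,d\rho$, or be indexed $\ell_{3,1}$, so that the signs pair correctly.
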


\begin{proof}
We verify the statement directly. The characteristic polynomial of $B(\rho,u,v)$ is
\begin{equation*}
\det(B-\lambda I)
=(u-\lambda)\left[(u-\lambda)^2-p'(\rho)\right].
\end{equation*}
Thus, with $c=\sqrt{p'(\rho)}$, the characteristic speeds are
\begin{equation*}
\lambda_1=u-c,\qquad
\lambda_2=u,\qquad
\lambda_3=u+c.
\end{equation*}

It remains to verify the stated Riemann invariants. Define
\begin{equation*}
H(\rho)=\int \frac{\sqrt{p'(\rho)}}{\rho}\,d\rho,
\qquad H'(\rho)=\frac{c}{\rho}.
\end{equation*}
For $w_2=v$, the third primitive equation gives immediately
\begin{equation*}
(w_2)_t+\lambda_2(w_2)_x
=v_t+uv_x=0.
\end{equation*}

For $w_1=u-H(\rho)$, using the first two primitive equations yields
\begin{equation*}
\begin{aligned}
(w_1)_t+(u-c)(w_1)_x
&=
u_t+(u-c)u_x
-\frac{c}{\rho}\bigl(\rho_t+(u-c)\rho_x\bigr)\\
&=
u_t+u u_x+\frac{c^2}{\rho}\rho_x\\
&=0.
\end{aligned}
\end{equation*}

Similarly, for $w_3=u+H(\rho)$,
\begin{equation*}
\begin{aligned}
(w_3)_t+(u+c)(w_3)_x
&=
u_t+(u+c)u_x
+\frac{c}{\rho}\bigl(\rho_t+(u+c)\rho_x\bigr)\\
&=
u_t+u u_x+\frac{c^2}{\rho}\rho_x\\
&=0.
\end{aligned}
\end{equation*}
Therefore the quantities in \eqref{eq:riemann_general} satisfy
$(w_i)_t+\lambda_i(w_i)_x=0$ in smooth regions, as claimed.
\end{proof}

In the present work, the equation of state is specified as
\begin{equation}
p(\rho)=c_0^2(\rho-\rho_0)+p_0,
\label{eq:specific_eos}
\end{equation}
where $c_0$ is a prescribed artificial sound speed, and $\rho_0$ and $p_0$ denote the reference density and pressure, respectively. This weakly compressible relation is also widely used in SPH-type formulations to avoid solving a pressure Poisson equation \cite{Monaghan1994,HuAdams2006}. Following common weakly compressible SPH practice, the artificial sound speed is usually chosen sufficiently larger than the characteristic flow speed, often about ten times the maximum velocity scale, so that the artificial Mach number remains small and density fluctuations are limited \cite{Monaghan1994,MorrisFoxZhu1997,HuAdams2006}. From \eqref{eq:specific_eos}, one has
\begin{equation*}
p'(\rho)=c_0^2,
\qquad
c=\sqrt{p'(\rho)}=c_0.
\end{equation*}
Therefore, the integral in \eqref{eq:riemann_general} can be evaluated explicitly:
\begin{equation*}
\int \frac{\sqrt{p'(\rho)}}{\rho}\,d\rho
=
\int \frac{c_0}{\rho}\,d\rho
=
c_0\ln\left(\frac{\rho}{\rho_0}\right)+C.
\end{equation*}
The Riemann invariants therefore reduce to
\begin{equation}
w_1=u-c_0\ln\left(\frac{\rho}{\rho_0}\right),\qquad
w_2=v,\qquad
w_3=u+c_0\ln\left(\frac{\rho}{\rho_0}\right).
\label{eq:riemann_special}
\end{equation}
Compared with a general barotropic law, this constant-sound-speed barotropic equation of state yields an explicit logarithmic form of the characteristic relations, which considerably simplifies the formulation of interface relations and the construction of numerical fluxes.

For a general barotropic law satisfying $p'(\rho)>0$, the characteristic structure $u-c$, $u$, and $u+c$ remains unchanged. The difference is that the sound speed $c=\sqrt{p'(\rho)}$ now depends on $\rho$, and the integral in \eqref{eq:riemann_general} cannot, in general, be expressed in closed form. In such cases, the characteristic relations may still be formulated through the integral representation \eqref{eq:riemann_general}. In contrast, for the present constant-sound-speed barotropic equation of state, the explicit form \eqref{eq:riemann_special} is available and will be exploited in the numerical flux construction presented in the next section.

\section{Formulation}

In this section, the numerical formulation of the proposed method is presented. We first describe the spatial discretization framework on curved quadrilateral and hexahedral elements, with emphasis on the strong-form nodal DGSEM formulation and the associated elemental semi-discrete system. The common-flux treatment for the weakly compressible convective terms and the IIPG discretization for the viscous terms are then introduced. These ingredients together provide the basis for the subsequent time-accurate local time-stepping strategy.

\subsection{Nodal DGSEM spatial discretization on curved quadrilateral and hexahedral elements}
\label{subsec:spatial_dgsem}

Since only quadrilateral and hexahedral meshes are considered in the present work, the spatial discretization is formulated within a tensor-product nodal DG spectral element method (DGSEM) framework on reference elements. In this work, DGSEM denotes the strong-form nodal DG discretization in which the interpolation, quadrature rules, and discrete derivative operators are built from Gauss--Lobatto--Legendre (GLL) points. For curved quadrilateral or hexahedral elements, the geometric treatment is based on a reference-element mapping: each physical element is mapped to a standard reference element, and the basis functions, quadrature rules, and discrete operators are constructed in the computational coordinates. For quadrilateral and hexahedral elements, this strategy not only provides a natural representation of curved boundaries and curved geometries, but also fully exploits the efficiency and locality associated with tensor-product structures. The basic geometric treatment adopted here is consistent with the spectral-element DG idea used by Min and Lee \cite{MinLee2011}.

Let $\Omega^e$ denote a physical element. In two dimensions, the corresponding reference element is
\begin{equation*}
I=[-1,1]^2,
\end{equation*}
while in three dimensions it is
\begin{equation*}
I=[-1,1]^3.
\end{equation*}
For a two-dimensional quadrilateral element, a mapping from the reference coordinates $(\xi,\eta)\in I$ to the physical coordinates $(x,y)\in\Omega^e$ is introduced as
\begin{equation*}
x=x^e(\xi,\eta),\qquad y=y^e(\xi,\eta).
\end{equation*}
The associated Jacobian matrix and determinant are
\begin{equation*}
J^e(\xi,\eta)=
\begin{pmatrix}
x_\xi & x_\eta\\
y_\xi & y_\eta
\end{pmatrix},
\qquad
|J^e|=x_\xi y_\eta-x_\eta y_\xi.
\end{equation*}
Accordingly, an integral over the physical element can be transformed to the reference element as
\begin{equation*}
\int_{\Omega^e} q(x,y)\,d\Omega
=
\int_I q\bigl(x^e(\xi,\eta),y^e(\xi,\eta)\bigr)\,|J^e(\xi,\eta)|\,d\xi d\eta.
\end{equation*}
Meanwhile, the differential operators in the physical coordinates are computed from the reference-space derivatives through the geometric mapping as
\begin{equation*}
\nabla_x q = J^{-T}\nabla_{\xi} q,
\end{equation*}
where the two-dimensional Jacobian matrix is
\begin{equation*}
J=
\begin{pmatrix}
x_\xi & x_\eta\\
y_\xi & y_\eta
\end{pmatrix}.
\end{equation*}
The extension to the three-dimensional hexahedral case is completely analogous.

For notational convenience, define the total flux by
\begin{equation*}
\bm{F}(U,\nabla U)=\bm{F}_c(U)-\bm{F}_d(U,\nabla U),
\end{equation*}
where $\bm{F}_c$ denotes the convective flux and $\bm{F}_d$ denotes the diffusive flux. The governing equations may then be written as
\begin{equation*}
\frac{\partial U}{\partial t}+\nabla\cdot \bm{F}(U,\nabla U)=0.
\end{equation*}
Here, the flux $\bm{F}$ corresponds to the full flux operator introduced in the governing equations, namely, the combined contribution of the convective and diffusive parts. In the DG discretization, however, the convective and diffusive parts will be treated separately in order to account for their different numerical properties.

On each element, a local approximation is introduced using tensor-product interpolation at Gauss--Lobatto--Legendre (GLL) points. In the two-dimensional case, the approximate solution is written as
\begin{equation*}
U_h^e(\xi,\eta,t)
=
\sum_{i=0}^{k}\sum_{j=0}^{k}
\hat{U}_{ij}^e(t)\,\ell_i(\xi)\ell_j(\eta),
\end{equation*}
where $k$ is the polynomial degree, and $\ell_i(\xi)$ and $\ell_j(\eta)$ are one-dimensional Lagrange basis functions. The same tensor-product construction can be extended naturally to three-dimensional hexahedral elements. Here, $\hat{U}_{ij}^e$ denotes the local degree of freedom coefficient associated with the corresponding nodal basis function. By stacking these coefficients at all interpolation nodes in the element according to a fixed ordering, one obtains the elemental degree of freedom vector $\hat{U}^e$. In other words, $\hat{U}^e$ is the column vector formed by all coefficients associated with the nodal basis functions on $\Omega^e$, and it contains all local unknowns of the discrete solution on that element. A key property of this nodal GLL construction is the summation-by-parts (SBP) relation between the GLL quadrature matrix and the nodal derivative matrix, which provides the discrete analogue of integration by parts and will be used in the conservation discussion below.

Let $\phi_h^e$ be an arbitrary local test function. The elemental weak form reads
\begin{equation*}
\int_{\Omega^e}\frac{\partial U_h^e}{\partial t}\phi_h^e\,d\Omega
-
\int_{\Omega^e}\bm{F}(U_h^e,\nabla U_h^e)\cdot\nabla\phi_h^e\,d\Omega
+
\int_{\partial\Omega^e}(\widehat{\bm{F}}\cdot\bm{n})\phi_h^e\,dS
=0.
\end{equation*}
Applying integration by parts yields the corresponding strong form
\begin{equation*}
\int_{\Omega^e}
\left(
\frac{\partial U_h^e}{\partial t}
+\nabla\cdot\bm{F}(U_h^e,\nabla U_h^e)
\right)\phi_h^e\,d\Omega
+
\int_{\partial\Omega^e}
\left(
\widehat{\bm{F}}-\bm{F}^{-}
\right)\cdot\bm{n}\,\phi_h^e\,dS
=0,
\end{equation*}
where $\bm{F}^{-}$ denotes the physical flux evaluated from the interior trace of the element solution.

To obtain the semi-discrete system corresponding to all local degrees of freedom in the element, we further collect the nodal basis functions in the same ordering as $\hat{U}^e$ and denote the resulting basis-function vector by $\bm{\varphi}^e$. Weighting the above strong form componentwise by $\bm{\varphi}^e$ and integrating over the element then gives the elemental semi-discrete system
\begin{equation*}
M^e\frac{d\hat{U}^e}{dt}
=
R_{\mathrm{vol}}^e(\hat{U}^e)+R_{\mathrm{surf}}^e(\hat{U}^{-},\hat{U}^{+}),
\end{equation*}
where $M^e$ is the elemental mass matrix, and the volume residual and interface correction are defined by
\begin{equation*}
R_{\mathrm{vol}}^e(\hat{U}^e)
=
-\int_{\Omega^e}\nabla\cdot\bm{F}(U_h^e,\nabla U_h^e)\,\bm{\varphi}^e\,d\Omega,
\end{equation*}
\begin{equation*}
R_{\mathrm{surf}}^e(\hat{U}^{-},\hat{U}^{+})
=
-\int_{\partial\Omega^e}
\left(
\widehat{\bm{F}}-\bm{F}^{-}
\right)\cdot\bm{n}\,\bm{\varphi}^e\,dS.
\end{equation*}
Here, both the volume residual and the interface correction are vector-valued residuals associated with all local degrees of freedom of the element. Therefore, their weighting object is no longer an arbitrary scalar test function, but rather the basis-function vector $\bm{\varphi}^e$ formed by the nodal basis functions in an ordering consistent with $\hat{U}^e$. With this definition, every term in the semi-discrete system is dimensionally and structurally consistent with the elemental degree of freedom vector $\hat{U}^e$, which is convenient for the subsequent construction of time integration schemes.

We now explain the treatment of the mass matrix. If the elemental $L^2$ inner product is retained in its standard form, the general mass matrix is defined by
\begin{equation*}
(M^e)_{\alpha\beta}
=
\int_{\Omega^e}\varphi_\alpha^e\,\varphi_\beta^e\,d\Omega,
\end{equation*}
where $\{\varphi_\alpha^e\}$ denotes the local nodal basis functions on the element.

In the present work, the mass matrix is evaluated by numerical quadrature on the GLL points instead of exact integration, which yields a diagonal mass matrix, referred to here as the lumped mass matrix. Accordingly, it is written as
\begin{equation*}
M_L^e=\mathrm{diag}(M_\alpha^e),
\end{equation*}
where $M_\alpha^e$ denotes the diagonal mass coefficient associated with the $\alpha$-th local degree of freedom. The elemental semi-discrete system therefore becomes
\begin{equation*}
M_L^e\frac{d\hat{U}^e}{dt}
=
R_{\mathrm{vol}}^e(\hat{U}^e)+R_{\mathrm{surf}}^e(\hat{U}^{-},\hat{U}^{+}).
\end{equation*}

The main purpose of this treatment is to simplify the inversion of the mass matrix in time integration, improve the compactness and data locality of the semi-discrete system, and provide a convenient algebraic structure for future extensions to implicit--explicit hybrid time integration. Under the present nodal GLL discretization, the interface correction acts directly on the corresponding boundary degrees of freedom.

Based on this unified spatial-discretization framework, the following subsections specify the numerical treatment of the weakly compressible convective flux, the IIPG discretization of the viscous terms, and the semi-discrete formulation compatible with local time stepping.

\subsection{Numerical flux evaluation}

In the present DG discretization, the coupling between adjacent elements is realized through numerical fluxes defined on the common interfaces. For the convective part, two different Riemann-type numerical fluxes are implemented, namely the local Lax--Friedrichs (local LF) flux and an approximate Riemann flux based on the two-rarefaction-wave assumption. The former introduces interfacial dissipation through a unified local wave-speed estimate, whereas the latter makes use of the characteristic structure of the weakly compressible model and evaluates the numerical flux through an interfacial star state. For the viscous part, an IIPG-type numerical diffusive flux is employed to enhance the stability of the diffusion discretization.

Before introducing the specific flux formulas, a local normal--tangential coordinate system is first established on the interface. Let the unit normal and tangential vectors be
\begin{equation}
\bm{n}=(n_x,n_y)^T,\qquad
\bm{t}=(-n_y,n_x)^T,
\end{equation}
respectively. For any velocity $\bm{u}=(u,v)^T$, the normal velocity and tangential velocity are defined as
\begin{equation}
q=\bm{u}\cdot\bm{n},\qquad
s=\bm{u}\cdot\bm{t},
\end{equation}
so that
\begin{equation*}
\bm{u}=q\bm{n}+s\bm{t}.
\end{equation*}
For convenience, once the face normal is fixed, the interior and exterior traces on the interface are also denoted by the left and right states, respectively, i.e. $U_L=U^-$ and $U_R=U^+$. Accordingly, the normal velocities on the two sides of the interface are denoted by
\begin{equation*}
q_L=\bm{u}_L\cdot\bm{n},\qquad
q_R=\bm{u}_R\cdot\bm{n},
\end{equation*}
and the tangential velocities are denoted by
\begin{equation*}
s_L=\bm{u}_L\cdot\bm{t},\qquad
s_R=\bm{u}_R\cdot\bm{t}.
\end{equation*}

\subsubsection*{(1) Local LF convective numerical flux}

Let $U$ denote the conservative variables and $\bm{F}_c(U)$ the convective flux tensor. The normal convective flux is defined as
\begin{equation}
\bm{F}_{c,n}(U)=\bm{F}_c(U)\cdot\bm{n}.
\end{equation}
For the left and right states $U_L$ and $U_R$, the local LF numerical flux is written as
\begin{equation}
\widehat{\bm{F}}_{c,n}^{\,LF}
=
\frac{1}{2}\Big(\bm{F}_{c,n}(U_L)+\bm{F}_{c,n}(U_R)\Big)
-\frac{1}{2}\alpha(U_R-U_L),
\end{equation}
where $\alpha$ is a local estimate of the maximum characteristic speed on the interface. For the present weakly compressible model, $\alpha$ is taken as
\begin{equation}
\alpha=\max\big(|q_L|+c_0,\;|q_R|+c_0\big),
\end{equation}
where $c_0$ is the prescribed artificial sound speed.

\subsubsection*{(2) Convective numerical flux based on the two-rarefaction-wave assumption}

In addition to the local LF flux, a two-rarefaction approximate Riemann flux is also employed. This assumption is motivated by the fact that the present work mainly concerns low-Mach-number flows, for which strong shock structures are generally absent. Therefore, the local interfacial Riemann problem can be reasonably approximated by a wave pattern composed of two rarefaction waves. The cylinder-flow results presented later suggest that, compared with the local LF flux, this flux produces weaker far-field pressure fluctuations in the present open-boundary test.

The two-rarefaction approximation is a standard approximate-state Riemann-solver idea in compressible-flow computations \cite{Toro2009}, and related two-rarefaction or exact Riemann solvers have also been used in artificial-compressibility, artificial-equation-of-state, and weakly compressible formulations \cite{ElsworthToro1992,Massa2022,ZhangHuAdams2017}. In the present work, this idea is specialized to the constant-sound-speed barotropic equation of state \eqref{eq:specific_eos}. Owing to the explicit logarithmic Riemann invariants of this model, the interfacial star density and normal velocity can be written in closed form. These closed-form interfacial states are used to define a low-dissipation flux for smooth weakly compressible low-Mach-number flows within the conservative LTS-DG framework.

The present specialization considers a local one-dimensional Riemann problem in the interface-normal direction and determines the interfacial star state from the invariance of the Riemann invariants across rarefaction waves. Since the present model is weakly compressible, the artificial sound speed $c_0$ is much larger than the characteristic flow velocity, namely
\begin{equation}
c_0\gg |\bm{u}|.
\end{equation}
As a result, for the local normal system the acoustic characteristic speeds satisfy
\begin{equation}
\lambda_1=q-c_0<0,\qquad
\lambda_3=q+c_0>0.
\end{equation}
This implies that, under the present weakly compressible assumption, the interface location $\hat{x}/t=0$ lies between the left- and right-running acoustic waves, i.e. within the star region, where $\hat{x}$ denotes the local coordinate along the interface-normal direction. Accordingly, under the present weakly compressible assumption, the interfacial state at $\hat{x}/t=0$ is identified with the star state, and the numerical convective flux is evaluated from that state. Figure~\ref{fig:riemann-star-region} illustrates the corresponding local two-dimensional Riemann structure in the $\hat{x}$--$t$ plane.

\begin{figure}[htbp]
\centering
\includegraphics[width=0.62\textwidth]{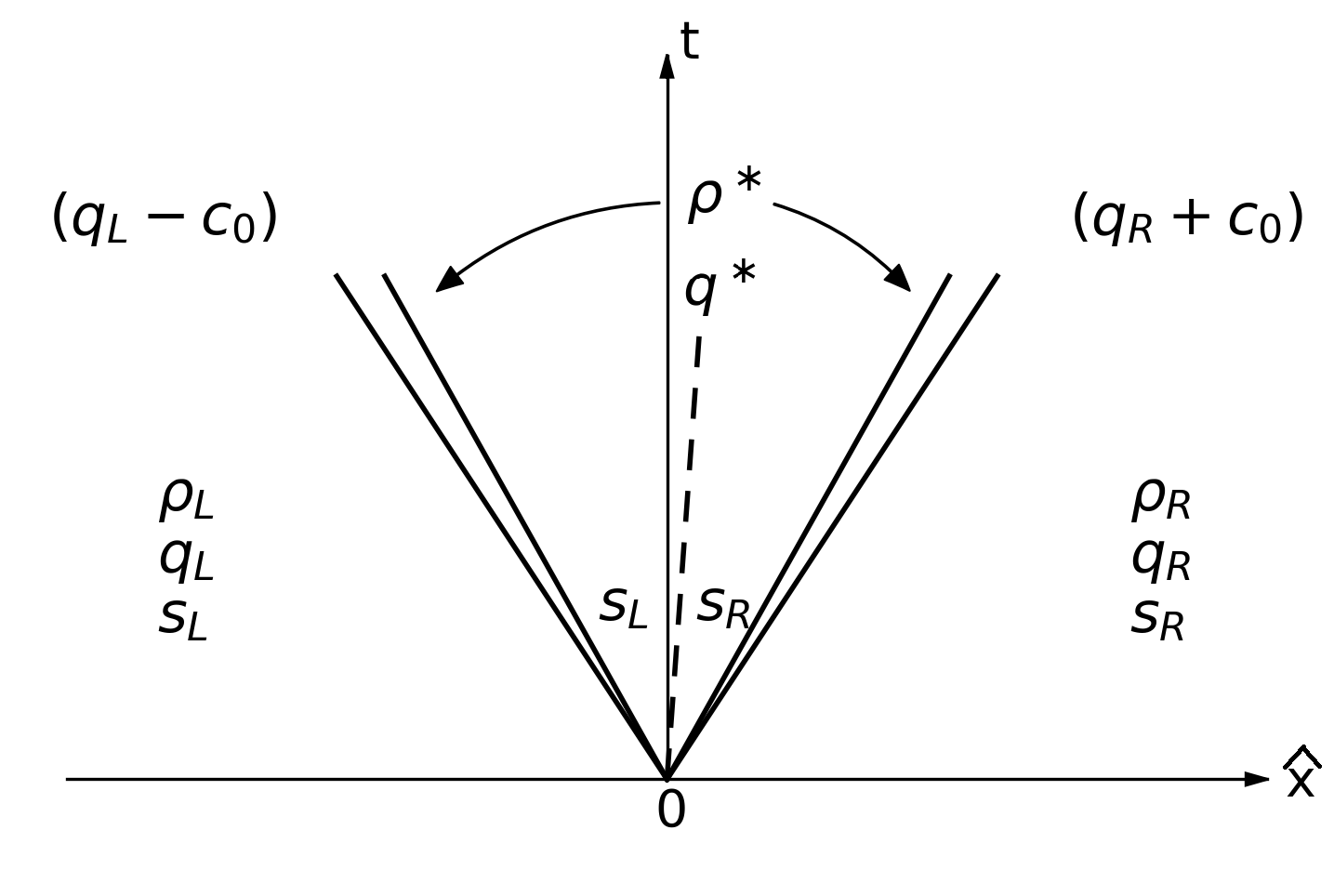}
\caption{Schematic of the local two-dimensional Riemann problem under the two-rarefaction-wave assumption. Under the present weakly compressible assumption, the interfacial state at $\hat{x}/t=0$ is identified with the star state located between the left- and right-running acoustic waves.}
\label{fig:riemann-star-region}
\end{figure}

In the above local normal coordinate system, the weakly compressible convective subsystem is treated as a one-dimensional hyperbolic system in terms of $(\rho,q,s)^T$, with characteristic speeds
\begin{equation}
\lambda_1=q-c_0,\qquad
\lambda_2=q,\qquad
\lambda_3=q+c_0.
\end{equation}
A set of Riemann invariants can then be chosen as
\begin{equation}
w_1=q-c_0\ln\left(\frac{\rho}{\rho_0}\right),\qquad
w_2=s,\qquad
w_3=q+c_0\ln\left(\frac{\rho}{\rho_0}\right),
\end{equation}
where $\rho_0$ is the reference density.

These invariants are related to the three characteristic fields as follows: across a 1-wave, $w_2$ and $w_3$ remain invariant; across a 2-wave, $w_1$ and $w_3$ remain invariant; across a 3-wave, $w_1$ and $w_2$ remain invariant. Therefore, under the two-rarefaction-wave approximation, with a 1-rarefaction wave connecting the left state to the star region and a 3-rarefaction wave connecting the right state to the star region, one has
\begin{equation}
q^\ast+c_0\ln\left(\frac{\rho^\ast}{\rho_0}\right)
=
q_L+c_0\ln\left(\frac{\rho_L}{\rho_0}\right),
\end{equation}
and
\begin{equation}
q^\ast-c_0\ln\left(\frac{\rho^\ast}{\rho_0}\right)
=
q_R-c_0\ln\left(\frac{\rho_R}{\rho_0}\right).
\end{equation}
For convenience, define
\begin{equation*}
r_L=q_L+c_0\ln\left(\frac{\rho_L}{\rho_0}\right),\qquad
r_R=q_R-c_0\ln\left(\frac{\rho_R}{\rho_0}\right).
\end{equation*}

\paragraph{Star-state normal velocity and density.}

Adding the above two relations yields the star-state normal velocity
\begin{equation}
q^\ast=\frac{1}{2}(r_L+r_R).
\end{equation}
Subtracting them gives
\begin{equation}
2c_0\ln\left(\frac{\rho^\ast}{\rho_0}\right)=r_L-r_R,
\end{equation}
from which the star-state density is obtained as
\begin{equation}
\rho^\ast=\rho_0\exp\left(\frac{r_L-r_R}{2c_0}\right).
\end{equation}

\paragraph{Star-state tangential velocity and velocity reconstruction.}

Once the star-state normal velocity $q^\ast$ is obtained, the tangential velocity is selected according to the upwind rule,
\begin{equation}
s^\ast=
\begin{cases}
s_L, & q^\ast>0,\\
s_R, & q^\ast\le 0.
\end{cases}
\end{equation}
The star-state velocity is then reconstructed as
\begin{equation*}
\bm{u}^\ast=q^\ast\bm{n}+s^\ast\bm{t}.
\end{equation*}

\paragraph{Remark.}

For additional transported variables such as temperature, no separate intermediate state is constructed from the acoustic Riemann relations. Instead, they are taken according to the upwind direction of the interfacial mass transport, i.e. the left state is used when $q^\ast>0$, and the right state is used when $q^\ast\le 0$. This treatment remains consistent with the convective transport direction at the interface while avoiding the introduction of additional approximate wave structures for non-dominant variables.

\paragraph{Star-state variables and numerical convective flux.}

The star-state conservative variables $U^\ast$ are then constructed from $\rho^\ast$ and $\bm{u}^\ast$, while the corresponding pressure is evaluated from the equation of state,
\begin{equation*}
p^\ast=p(\rho^\ast).
\end{equation*}
The numerical convective flux under the two-rarefaction-wave approximation is finally defined as the physical normal flux evaluated at the star state,
\begin{equation}
\widehat{\bm{F}}_{c,n}^{\,TR}
=
\bm{F}_{c,n}(U^\ast).
\end{equation}
Compared with the local LF flux, this flux evaluates the interfacial state from an approximate solution of the local Riemann problem and thereby more naturally represents acoustic propagation, density variation, and pressure coupling in weakly compressible flows.

\paragraph{Remark.}

The two-rarefaction flux used here is designed for smooth weakly compressible low-Mach-number flows, for which $c_0\gg |\bm{u}|$ and density fluctuations remain small. It is not intended as a general-purpose shock-capturing Riemann solver for strongly compressible, transonic, or near-vacuum flows.

\begin{Pro}[Consistency and interface-orientation consistency of the TR flux]
Let $\widehat{\bm{F}}_{c,n}^{\,TR}(U_L,U_R;\bm n)$ denote the TR numerical flux evaluated with unit normal $\bm n$ pointing from the left state to the right state. For any admissible state $U$,
\begin{equation*}
\widehat{\bm{F}}_{c,n}^{\,TR}(U,U;\bm n)=\bm{F}_c(U)\cdot\bm n .
\end{equation*}
Moreover, if the same interface is viewed from the opposite side, namely the normal is changed to $-\bm n$ and the left and right states are exchanged, then
\begin{equation*}
\widehat{\bm{F}}_{c,n}^{\,TR}(U_R,U_L;-\bm n)
=
-\widehat{\bm{F}}_{c,n}^{\,TR}(U_L,U_R;\bm n).
\end{equation*}
\end{Pro}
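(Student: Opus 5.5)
The plan is to prove both identities by tracking the star-state quantities $\rho^\ast$, $q^\ast$, $s^\ast$ through the closed-form formulas above. The flux $\widehat{\bm{F}}_{c,n}^{\,TR}=\bm{F}_c(U^\ast)\cdot\bm n$ depends only on $U^\ast$ and $\bm n$, so it suffices to show how $U^\ast$ behaves under each transformation. Throughout I use the explicit form
\begin{equation*}
\bm{F}_c(U)\cdot\bm n=\begin{pmatrix}\rho\,(\bm u\cdot\bm n)\\ \rho\,\bm u\,(\bm u\cdot\bm n)+p(\rho)\,\bm n\end{pmatrix}.
\end{equation*}

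\emph{Consistency.} Set $U_L=U_R=U$, with density $\rho$, normal velocity $q$ and tangential velocity $s$. Then $r_L=q+c_0\ln(\rho/\rho_0)$ and $r_R=q-c_0\ln(\rho/\rho_0)$. Hence $q^\ast=\tfrac12(r_L+r_R)=q$ and $\rho^\ast=\rho_0\exp\bigl((r_L-r_R)/(2c_0)\bigr)=\rho$. Since $s_L=s_R=s$, the upwind selection gives $s^\ast=s$ regardless of the sign of $q^\ast$. Therefore $\bm u^\ast=\bm u$, $U^\ast=U$ and $p^\ast=p(\rho)$, which gives $\widehat{\bm{F}}_{c,n}^{\,TR}(U,U;\bm n)=\bm F_c(U)\cdot\bm n$.

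\emph{Orientation.} Replace $\bm n$ by $\bm n'=-\bm n$. Then $\bm t'=(n_y,-n_x)^T=-\bm t$. The new left state is $U_R$ and the new right state is $U_L$, so
\begin{equation*}
q'_L=-q_R,\qquad q'_R=-q_L,\qquad s'_L=-s_R,\qquad s'_R=-s_L.
\end{equation*}
Substituting into the definitions gives $r'_L=-q_R+c_0\ln(\rho_R/\rho_0)=-r_R$ and $r'_R=-r_L$. Consequently $q'^\ast=-q^\ast$ and $\rho'^\ast=\rho_0\exp\bigl((r_L-r_R)/(2c_0)\bigr)=\rho^\ast$, so in particular $p'^\ast=p^\ast$.

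For the tangential component there are two cases.
\begin{itemize}
\item If $q^\ast>0$, then $q'^\ast<0$, so $s'^\ast=s'_R=-s_L=-s^\ast$. This gives $\bm u'^\ast=q'^\ast\bm n'+s'^\ast\bm t'=q^\ast\bm n+s^\ast\bm t=\bm u^\ast$. The case $q^\ast<0$ is symmetric.
\item The tie $q^\ast=0$ is the one delicate point, because the rule $q^\ast\le0$ is not symmetric. Here $s^\ast=s_R$ but $s'^\ast=-s_L$, so $\bm u'^\ast$ and $\bm u^\ast$ may differ in their tangential part. I would handle this by noting that the normal velocity in the star state is zero for both orientations. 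Then the mass flux and the term $\rho\bm u(\bm u\cdot\bm n)$ vanish for both orientations, and the flux reduces to $(0,\,p^\ast\bm n)^T$, which is independent of the tangential velocity.
\end{itemize}

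Finally, in every case $\rho'^\ast=\rho^\ast$, and either $\bm u'^\ast=\bm u^\ast$ or both star normal velocities vanish. Substituting into the explicit flux form, $\widehat{\bm{F}}_{c,n}^{\,TR}(U_R,U_L;-\bm n)=\bm F_c(U'^\ast)\cdot(-\bm n)=-\bm F_c(U^\ast)\cdot\bm n$. This holds because the flux is linear in the normal vector once the state is fixed. The identity extends to any additional upwinded transported variables by the same case analysis. I expect the $q^\ast=0$ tie to be the only real obstacle; the rest is direct substitution.
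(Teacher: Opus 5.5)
Your proof is correct and follows the paper's argument essentially step for step. Consistency comes from $U^\ast=U$, and orientation from $r'_L=-r_R$, $r'_R=-r_L$, which give $q'^\ast=-q^\ast$, $\rho'^\ast=\rho^\ast$, and $\bm u'^\ast=\bm u^\ast$ whenever $q^\ast\neq 0$; in the tie case $q^\ast=0$, $\bm u'^\ast$ and $\bm u^\ast$ can genuinely differ, but the normal flux reduces to $(0,\,p^\ast\bm n)^T$, which is the same resolution the paper uses, stated more explicitly (your closing sentence about linearity in $\bm n$ relies on that bullet, since $U'^\ast\neq U^\ast$ there).
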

\begin{proof}
When $U_L=U_R=U$, one has $q_L=q_R=q$, $\rho_L=\rho_R=\rho$, and $s_L=s_R=s$. Hence
\begin{equation*}
r_L=q+c_0\ln\left(\frac{\rho}{\rho_0}\right),\qquad
r_R=q-c_0\ln\left(\frac{\rho}{\rho_0}\right),
\end{equation*}
which gives $q^\ast=q$ and $\rho^\ast=\rho$. The upwind selection also gives $s^\ast=s$, and therefore $U^\ast=U$. The consistency relation follows directly from the definition of the flux.

For the orientation property, let $\bm n'=-\bm n$ and $\bm t'=-\bm t$, and exchange the left and right states. The corresponding normal and tangential velocities satisfy
\begin{equation*}
q'_L=-q_R,\qquad q'_R=-q_L,\qquad
s'_L=-s_R,\qquad s'_R=-s_L .
\end{equation*}
Consequently,
\begin{equation*}
r'_L=-r_R,\qquad r'_R=-r_L,
\end{equation*}
and thus $q'^\ast=-q^\ast$ and $\rho'^\ast=\rho^\ast$. For $q^\ast\ne0$, the upwind rule gives $s'^\ast=-s^\ast$, so the reconstructed physical velocity is unchanged:
\begin{equation*}
\bm u'^\ast=q'^\ast\bm n'+s'^\ast\bm t'
=q^\ast\bm n+s^\ast\bm t=\bm u^\ast .
\end{equation*}
If $q^\ast=0$, the advective part of the normal convective flux is proportional to $q^\ast$, and the remaining pressure contribution is independent of the tangential velocity; hence the same conclusion holds at the level of the normal flux. Therefore the star-state physical flux is the same while the normal direction is reversed, which yields
\[
\bm F_c(U^\ast)\cdot(-\bm n)
=-\bm F_c(U^\ast)\cdot\bm n .
\]
This proves the stated interface-orientation consistency.
\end{proof}

\subsubsection*{(3) Boundary treatment}

Only solid-wall and far-field boundaries are considered in the present computations. Boundary faces are treated within the same numerical-flux framework as interior faces by constructing a ghost-neighbor state and a ghost-neighbor derivative value at each boundary quadrature point. Let $U^-$ and $(\nabla U)^-$ denote the interior trace and its derivative on a boundary face. For a far-field boundary, the exterior boundary state is prescribed by the far-field state,
\begin{equation}
U_b^+=U_\infty,
\end{equation}
while the derivative value used in the diffusive flux is extrapolated from the interior element,
\begin{equation}
(\nabla U)_b^+=(\nabla U)^- .
\end{equation}
For a solid wall with wall velocity $\bm{u}_w$, the exterior density is copied from the interior trace and the exterior velocity is reflected with respect to the wall velocity,
\begin{equation}
\rho_b^+=\rho^-,\qquad
\bm{u}_b^+=2\bm{u}_w-\bm{u}^- .
\end{equation}
Thus, for a stationary wall, $\bm{u}_b^+=-\bm{u}^-$. The conservative exterior state is then assembled from $\rho_b^+$ and $\bm{u}_b^+$, and the derivative value on the exterior side is again taken from the interior derivative unless additional wall-gradient data are prescribed. After these ghost-neighbor values are constructed, the convective Riemann flux and the IIPG diffusive flux are evaluated in exactly the same way as on an interior face.

\subsubsection*{(4) IIPG diffusive numerical flux}

For the diffusive part, an IIPG-type numerical diffusive flux is adopted. Let the normal diffusive physical flux be
\begin{equation}
\bm{F}_{d,n}(U,\nabla U)=\bm{F}_d(U,\nabla U)\cdot\bm{n}.
\end{equation}
Since the diffusive flux has a zero mass component and acts only on the momentum equations, the penalty term is also applied only to the momentum components. For the two-dimensional system considered above, with the same idea extending to three dimensions, define
\begin{equation*}
\bm{D}_{\mathrm{m}}=\operatorname{diag}(0,1,1),
\end{equation*}
where the first entry corresponds to the mass-conservation equation.
The numerical diffusive flux on the interface is written as
\begin{equation}
\widehat{\bm{F}}_{d,n}
=
\{\!\!\{\bm{F}_{d,n}\}\!\!\}
-\tau\bm{D}_{\mathrm{m}}[\![U]\!],
\end{equation}
where $\{\!\!\{\cdot\}\!\!\}$ denotes the average of the two traces, $[\![U]\!]$ the jump of the solution, and $\tau$ the penalty parameter.

For a polynomial approximation of degree $k$, the penalty parameter is taken as
\begin{equation}
\tau=C_{\mathrm{IP}}\frac{\mu}{\rho_0}\frac{(k+1)^2}{h_f},
\end{equation}
where $h_f$ is the characteristic length associated with the interface, and $C_{\mathrm{IP}}$ is a dimensionless constant depending on the mesh type and the stability requirement.

For nonuniform meshes, let $h_L$ and $h_R$ denote the local characteristic lengths of the left and right elements in the interface-normal direction. One may further choose
\begin{equation*}
h_f=\min(h_L,h_R),
\end{equation*}
so that
\begin{equation}
\tau=C_{\mathrm{IP}}\frac{\mu}{\rho_0}\frac{(k+1)^2}{\min(h_L,h_R)}.
\end{equation}
Here, $h_L$ and $h_R$ represent the local mesh sizes of the two adjacent elements measured along the interface-normal direction. Taking the smaller one helps provide sufficient penalization on the finer side of the interface, which is beneficial for maintaining the stability of the diffusive discretization.

The above penalty term weakly suppresses inter-element jumps in the momentum equations and thereby improves the stability of the diffusion discretization without adding a penalty contribution to the mass equation. Therefore, the interfacial numerical-flux treatment adopted in the present work can be summarized as follows: for the convective part, either the local LF flux or the two-rarefaction-wave approximate Riemann flux is used; for the diffusive part, the IIPG numerical diffusive flux is employed.

\subsection{Time-Accurate Local Time-Stepping Formulation}
\label{sec:3.3}

To support long-time integration accuracy for weakly compressible low-Mach-number flows together with element-local time advancement on nonuniform meshes, a time-accurate local time-stepping method is constructed on the semi-discrete DG system, following the predictor--corrector philosophy of time-accurate local time stepping developed by Gassner and co-workers \cite{GassnerHindenlangMunz2011,GassnerDumbserHindenlangMunz2011}. The algorithm proceeds as follows. For each element, a local time step is selected independently. Inside the element, a continuous explicit Runge--Kutta method is used to construct a continuous-in-time predictor driven by the volume residual. In the subsequent correction, the face contribution is split into a purely interior part and a common-flux part. The purely interior face contribution is local to the element and is integrated with the same temporal quadrature as the volume contribution, whereas the common-flux contribution is integrated face by face using piecewise temporal quadrature based on the time matching between neighboring elements. Once the element reaches its target time, these integrated contributions are combined to obtain the updated elemental degrees of freedom.

Based on the spatial discretization introduced above, the semi-discrete equation on element $\Omega^e$ is written as
\begin{equation}
M_L^e \frac{d\hat U^e}{dt}
=
R_{\mathrm{vol}}^e(\hat U^e)
+
R_{\mathrm{surf}}^e(\hat U^{e,-},\hat U^{e,+}).
\label{eq:lts_semidiscrete}
\end{equation}
Here $M_L^e$ is the lumped mass matrix, $R_{\mathrm{vol}}^e$ denotes the elemental volume residual, and $R_{\mathrm{surf}}^e$ denotes the surface residual assembled from the interface correction terms.

For convenience, the semi-discrete operators associated with the volume and surface terms are defined by
\begin{equation}
\mathcal L^e(\hat U^e)
=
(M_L^e)^{-1}R_{\mathrm{vol}}^e(\hat U^e),
\label{eq:lts_volume_operator}
\end{equation}
\begin{equation}
\mathcal C^e(\hat U^{e,-},\hat U^{e,+})
=
(M_L^e)^{-1}R_{\mathrm{surf}}^e(\hat U^{e,-},\hat U^{e,+}).
\label{eq:lts_surface_operator}
\end{equation}
Then the ordinary differential equation on element $\Omega^e$ can be written as
\begin{equation}
\frac{d\hat U^e}{dt}
=
\mathcal L^e(\hat U^e)
+
\mathcal C^e(\hat U^{e,-},\hat U^{e,+}).
\label{eq:lts_ode}
\end{equation}

If the boundary of the element is decomposed into individual faces $f\subset\partial\Omega^e$, the surface operator can be further written as
\begin{equation}
\mathcal C^e(\hat U^{e,-},\hat U^{e,+})
=
\sum_{f\subset\partial\Omega^e}
\left[
\mathcal C_{f,\mathrm{int}}^e(\hat U^{e,-})
+
\mathcal C_{f,\mathrm{cf}}^e(\hat U^{e,-},\hat U^{e,+})
\right],
\label{eq:lts_surface_face_split}
\end{equation}
where $\mathcal C_{f,\mathrm{int}}^e$ denotes the purely interior contribution on face $f$, associated with the interior physical flux, and $\mathcal C_{f,\mathrm{cf}}^e$ denotes the common-flux contribution, associated with the numerical flux shared by the two elements adjacent to the face.

Each element $\Omega^e$ is assigned an independent local time step $\Delta t^e$, whose value is determined by the local mesh scale, the flow state, and the stability constraints. Accordingly, the current local time interval of element $\Omega^e$ is denoted by
\begin{equation}
I_n^e=[t_n^e,\; t_{n+1}^e],\qquad
t_{n+1}^e=t_n^e+\Delta t^e.
\label{eq:lts_local_interval}
\end{equation}
Since neighboring elements may use different local time steps, the common-flux contribution generally requires the reconstruction of neighboring states at requested times. Therefore, a continuous-in-time predictor plays a key role in the local time-stepping procedure.

On the interval $I_n^e$, the normalized local time variable is introduced as
\begin{equation}
\theta=\frac{t-t_n^e}{\Delta t^e},\qquad 0\le \theta\le 1.
\label{eq:lts_theta}
\end{equation}

For element $\Omega^e$, an $s$-stage continuous explicit Runge--Kutta predictor is constructed as in the CERK-based local predictor framework \cite{OwrenZennaro1992,GassnerHindenlangMunz2011,GassnerDumbserHindenlangMunz2011}
\begin{equation}
\mathcal P^e(t)
=
\hat U^{e,n}
+
\Delta t^e\sum_{i=1}^{s} b_i(\theta)\,K_i^e,
\qquad t\in I_n^e.
\label{eq:lts_cerk_predictor}
\end{equation}
Here $b_i(\theta)$ are the continuous extension weights, and $K_i^e$ are the stage residuals. In the predictor step, the stage residuals are generated only by the volume operator, namely,
\begin{equation}
K_i^e
=
\mathcal L^e\!\left(
\hat U^{e,n}
+
\Delta t^e\sum_{j=1}^{i-1} a_{ij}K_j^e
\right),
\qquad i=1,2,\ldots,s.
\label{eq:lts_stage_residual}
\end{equation}

The resulting continuous predictor $\mathcal P^e(t)$ describes the local temporal evolution of the elemental degree of freedom vector driven by the volume term. In the present formulation, the volume contribution and the purely interior face contribution are both local to element $\Omega^e$, and are therefore evaluated with the same Gaussian quadrature over the local time interval $[t_n^e,t_{n+1}^e]$. The common-flux contribution on a given face, however, depends on the neighboring predictor state. Its temporal integration interval is therefore partitioned into subintervals according to the local time levels of the two adjacent elements, and Gaussian quadrature is applied separately on each subinterval.

When element $\Omega^e$ reaches the target time $t_{n+1}^e$, its updated degrees of freedom are obtained from the time-integrated volume and surface contributions as
\begin{equation}
\begin{aligned}
&\hat U^{e,n+1}
=
\hat U^{e,n}
+
\int_{t_n^e}^{t_{n+1}^e}\mathcal L^e(\mathcal P^e(t))\,dt
\\
&\quad
+
\sum_{f\subset\partial\Omega^e}
\int_{t_n^e}^{t_{n+1}^e}
\mathcal C_{f,\mathrm{int}}^e\bigl(\mathcal P^{e,-}(t)\bigr)\,dt
\\
&\quad
+
\sum_{f\subset\partial\Omega^e}
\int_{t_n^e}^{t_{n+1}^e}
\mathcal C_{f,\mathrm{cf}}^e\bigl(\mathcal P^{e,-}(t),\mathcal P^{e,+}(t)\bigr)\,dt.
\end{aligned}
\label{eq:lts_update_integral}
\end{equation}

Accordingly, the accumulated local contribution over the interval $[t_n^e,t_{n+1}^e]$, consisting of the volume term and the purely interior face terms, can be written as
\begin{equation*}
\int_{t_n^e}^{t_{n+1}^e}
\left[
\mathcal L^e(\mathcal P^e(t))
+
\sum_{f\subset\partial\Omega^e}
\mathcal C_{f,\mathrm{int}}^e(\mathcal P^{e,-}(t))
\right]dt.
\end{equation*}
Using the same temporal quadrature points as for the volume term, one obtains
\begin{equation*}
\begin{aligned}
&
\int_{t_n^e}^{t_{n+1}^e}
\left[
\mathcal L^e(\mathcal P^e(t))
+
\sum_{f\subset\partial\Omega^e}
\mathcal C_{f,\mathrm{int}}^e(\mathcal P^{e,-}(t))
\right]dt
\\
&\quad
\approx
\sum_{m=1}^{N_{\mathrm{vol}}^e}
\omega_m^{\,e}\,
\left[
\mathcal L^e\bigl(\mathcal P^e(t_m^{\,e})\bigr)
+
\sum_{f\subset\partial\Omega^e}
\mathcal C_{f,\mathrm{int}}^e\bigl(\mathcal P^{e,-}(t_m^{\,e})\bigr)
\right],
\end{aligned}
\end{equation*}
where $t_m^{\,e}$ and $\omega_m^{\,e}$ denote the temporal quadrature points and weights used for the volume contribution in element $\Omega^e$. The same points and weights are used for $\mathcal C_{f,\mathrm{int}}^e$ because this term involves only the interior trace of the same element.

For any face $f\subset\partial\Omega^e$ of element $\Omega^e$, let
\begin{equation*}
\mathcal C_{f,\mathrm{cf}}^e(\hat U^{e,-},\hat U^{e,+})
\end{equation*}
denote the common-flux contribution to the surface operator. At time $t$, the states on the two sides of the interface are reconstructed from the continuous predictors of the neighboring elements as
\begin{equation*}
\mathcal P^{e,-}(t),\qquad \mathcal P^{e,+}(t),
\end{equation*}
which gives the common-flux contribution at time $t$ in the form
\begin{equation*}
\mathcal C_{f,\mathrm{cf}}^e\bigl(\mathcal P^{e,-}(t),\mathcal P^{e,+}(t)\bigr).
\end{equation*}

Since the temporal matching relations are generally different from face to face, the temporal quadrature points for the common-flux contribution are, in general, not identical on different faces. Therefore, the temporal integration of the common-flux term must be carried out separately on each face, which is consistent with the face-wise time-accurate LTS treatment advocated in \cite{GassnerHindenlangMunz2011}. For face $f$, if the interval associated with that face is partitioned into $N_{\mathrm{sub}}^f$ subintervals,
\begin{equation*}
[t_n^e,t_{n+1}^e]
=
\bigcup_{k=1}^{N_{\mathrm{sub}}^f}
[t_{k,\mathrm{L}}^{\,f},\,t_{k,\mathrm{R}}^{\,f}],
\end{equation*}
then the common-flux contribution is evaluated in a piecewise manner as
\begin{equation*}
\int_{t_n^e}^{t_{n+1}^e}
\mathcal C_{f,\mathrm{cf}}^e\bigl(\mathcal P^{e,-}(t),\mathcal P^{e,+}(t)\bigr)\,dt
=
\sum_{k=1}^{N_{\mathrm{sub}}^f}
\int_{t_{k,\mathrm{L}}^{\,f}}^{t_{k,\mathrm{R}}^{\,f}}
\mathcal C_{f,\mathrm{cf}}^e\bigl(\mathcal P^{e,-}(t),\mathcal P^{e,+}(t)\bigr)\,dt.
\end{equation*}
Applying Gaussian quadrature on each subinterval gives
\begin{equation*}
\int_{t_n^e}^{t_{n+1}^e}
\mathcal C_{f,\mathrm{cf}}^e\bigl(\mathcal P^{e,-}(t),\mathcal P^{e,+}(t)\bigr)\,dt
\approx
\sum_{k=1}^{N_{\mathrm{sub}}^f}
\sum_{g=1}^{N_t^{f,k}}
w_g^{\,f,k}\,
\mathcal C_{f,\mathrm{cf}}^e\bigl(\mathcal P^{e,-}(t_g^{\,f,k}),\mathcal P^{e,+}(t_g^{\,f,k})\bigr),
\end{equation*}
where $t_g^{\,f,k}$ and $w_g^{\,f,k}$ are the Gaussian quadrature points and weights on the $k^\text{th}$ subinterval of face $f$.

Figure~\ref{fig:lts_1d_time_quadrature} illustrates the temporal quadrature points used in a one-dimensional local time-stepping setting. The blue circles indicate the Gaussian quadrature points $t_m^{\,e}$ employed for both the volume contribution and the purely interior face contribution of element $\Omega^e$ over its local interval $[t_n^e,t_{n+1}^e]$, whereas the red triangles indicate the Gaussian quadrature points $t_g^{\,f,k}$ used for the piecewise temporal integration of the common-flux contribution. Owing to the mismatch between the local time step of $\Omega^e$ and those of its neighboring elements, the common-flux integrals are evaluated over several subintervals, and each subinterval is equipped with its own set of quadrature points.

\begin{figure}[htbp]
\centering
\includegraphics[width=0.42\textwidth]{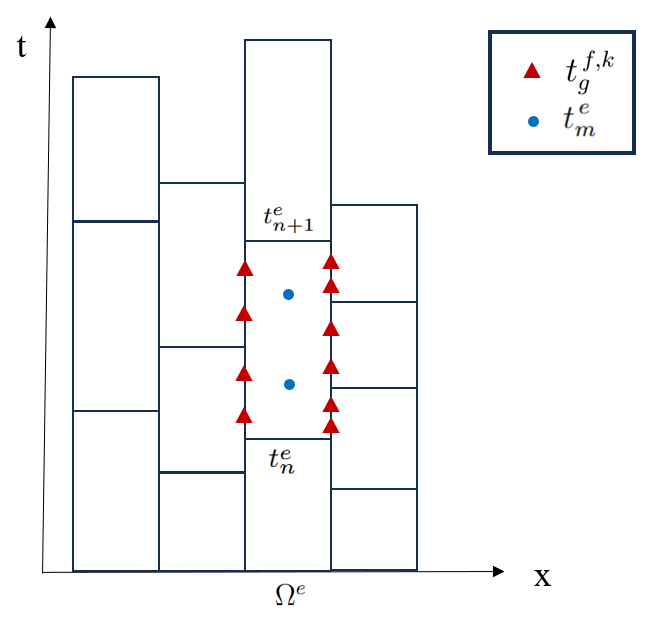}
\caption{Schematic illustration of temporal quadrature points for the volume and face contributions in a one-dimensional local time-stepping procedure. The blue circles denote the quadrature points $t_m^{\,e}$ used for the volume contribution and the purely interior face contribution of element $\Omega^e$, while the red triangles denote the quadrature points $t_g^{\,f,k}$ used for the common-flux contribution on the piecewise subintervals induced by the time-step mismatch across neighboring elements.}
\label{fig:lts_1d_time_quadrature}
\end{figure}

Accordingly, the total surface contribution over the local time interval is obtained by summing the purely interior and common-flux contributions from all faces, namely,
\begin{equation*}
\int_{t_n^e}^{t_{n+1}^e}
\mathcal C^e\bigl(\mathcal P^{e,-}(t),\mathcal P^{e,+}(t)\bigr)\,dt
=
\sum_{f\subset\partial\Omega^e}
\int_{t_n^e}^{t_{n+1}^e}
\left[
\mathcal C_{f,\mathrm{int}}^e\bigl(\mathcal P^{e,-}(t)\bigr)
+
\mathcal C_{f,\mathrm{cf}}^e\bigl(\mathcal P^{e,-}(t),\mathcal P^{e,+}(t)\bigr)
\right]dt.
\end{equation*}
Substituting the two temporal quadrature rules gives
\begin{equation*}
\begin{aligned}
&
\int_{t_n^e}^{t_{n+1}^e}
\mathcal C^e\bigl(\mathcal P^{e,-}(t),\mathcal P^{e,+}(t)\bigr)\,dt
\approx
\sum_{f\subset\partial\Omega^e}
\sum_{m=1}^{N_{\mathrm{vol}}^e}
\omega_m^{\,e}\,
\mathcal C_{f,\mathrm{int}}^e\bigl(\mathcal P^{e,-}(t_m^{\,e})\bigr)
\\
&\quad
+
\sum_{f\subset\partial\Omega^e}
\sum_{k=1}^{N_{\mathrm{sub}}^f}
\sum_{g=1}^{N_t^{f,k}}
w_g^{\,f,k}\,
\mathcal C_{f,\mathrm{cf}}^e\bigl(\mathcal P^{e,-}(t_g^{\,f,k}),\mathcal P^{e,+}(t_g^{\,f,k})\bigr).
\end{aligned}
\end{equation*}

The complete integral update has been given in \eqref{eq:lts_update_integral}. If these terms are written in quadrature form, the update formula becomes
\begin{equation}
\begin{aligned}
&\hat U^{e,n+1}
\approx
\hat U^{e,n}
+
\sum_{m=1}^{N_{\mathrm{vol}}^e}
\omega_m^{\,e}\,
\left[
\mathcal L^e\bigl(\mathcal P^e(t_m^{\,e})\bigr)
+
\sum_{f\subset\partial\Omega^e}
\mathcal C_{f,\mathrm{int}}^e\bigl(\mathcal P^{e,-}(t_m^{\,e})\bigr)
\right]
\\
&\quad
+
\sum_{f\subset\partial\Omega^e}
\sum_{k=1}^{N_{\mathrm{sub}}^f}
\sum_{g=1}^{N_t^{f,k}}
w_g^{\,f,k}\,
\mathcal C_{f,\mathrm{cf}}^e\bigl(\mathcal P^{e,-}(t_g^{\,f,k}),\mathcal P^{e,+}(t_g^{\,f,k})\bigr).
\end{aligned}
\label{eq:lts_update_quadrature}
\end{equation}

This update formula shows that the local right-hand-side contributions are separated according to their data dependence. The volume term and the purely interior face contribution are evaluated with the element-local quadrature points $t_m^{\,e}$, whereas only the common-flux contribution requires face-wise temporal matching and reconstruction of neighboring predictor states. The updated state at time $t_{n+1}^e$ is therefore obtained by combining the integrated volume contribution, the integrated purely interior face contribution, and the face-wise integrated common-flux contribution.

\subsection{Conservation of the split time integration}
\label{sec:lts_conservation}

We next verify that, even when different elements use different local time intervals, the split integration described above preserves discrete conservation by pairing the volume term with the purely interior face contribution and treating the common-flux contribution through face-wise time matching. Let $\mathbf{1}^e$ denote the vector corresponding to the constant test function on element $\Omega^e$. The discrete conserved quantity in this element is
\begin{equation}
Q^e(t)
=
(\mathbf{1}^e)^T M_L^e \hat U^e(t)
\approx
\int_{\Omega^e} U_h^e(t)\,d\Omega,
\label{eq:lts_element_conserved_quantity}
\end{equation}
where the relation is understood componentwise for the conservative variables. Acting with $(\mathbf{1}^e)^T M_L^e$ on the volume operator and the purely interior face contribution gives
\begin{equation}
\begin{aligned}
&(\mathbf{1}^e)^T M_L^e
\left[
\mathcal L^e(\hat U^e)
+
\sum_{f\subset\partial\Omega^e}
\mathcal C_{f,\mathrm{int}}^e(\hat U^{e,-})
\right]
\\
&\quad
=
-\int_{\Omega^e}
\nabla\cdot\bm F(U_h^e,\nabla U_h^e)\,d\Omega
+
\sum_{f\subset\partial\Omega^e}
\int_f
\bm F^-\cdot\bm n^e\,dS
=0.
\end{aligned}
\label{eq:lts_local_interior_cancellation}
\end{equation}
For the nodal DGSEM introduced in Section~\ref{subsec:spatial_dgsem}, the interpolation nodes and quadrature nodes coincide. In one spatial direction, let $W=\operatorname{diag}(\omega_0,\omega_1,\ldots,\omega_k)$ be the diagonal GLL quadrature matrix and let $D$ be the derivative matrix of the Lagrange basis evaluated at the GLL nodes. These operators satisfy the SBP identity
\begin{equation}
WD+D^T W=B,\qquad
B=\operatorname{diag}(-1,0,\ldots,0,1),
\end{equation}
which is the discrete analogue of integration by parts for the nodal GLL collocation form of DGSEM \cite{HesthavenWarburton2008}. After tensor-product assembly in multiple dimensions, this identity implies that, for a constant test function, the discrete volume divergence and the interior physical-flux part of the strong-form surface correction cancel elementwise, leaving only boundary or common-flux contributions. In the fully discrete implementation, \eqref{eq:lts_local_interior_cancellation} is precisely this SBP cancellation. It holds at each element-local temporal quadrature point because the same temporal quadrature is used for the volume term and the purely interior face contribution.

Thus, over any local interval $I_r^e=[t_r^e,t_{r+1}^e]$ of element $\Omega^e$, the change of the element conserved quantity is due only to the common-flux contributions:
\begin{equation}
Q^{e,r+1}-Q^{e,r}
=
-\sum_{f\subset\partial\Omega^e}
\int_{I_r^e}
\int_f
\widehat{\bm F}_{f,n}^e(t)\,dS\,dt,
\label{eq:lts_element_local_conservation_update}
\end{equation}
where $\widehat{\bm F}_{f,n}^e$ denotes the normal common flux on face $f$ with the outward normal of element $\Omega^e$. Equation~\eqref{eq:lts_element_local_conservation_update} is local in time; no global time interval is assumed in its definition.

To show global conservation, consider a synchronization window $[T^N,T^{N+1}]$ that contains several element-local intervals. For an interior face $f=\partial\Omega^e\cap\partial\Omega^{e'}$, form a face-time partition by taking the ordered union of all local time levels of the two adjacent elements within this window,
\begin{equation}
T^N=\tau_0^f<\tau_1^f<\cdots<\tau_{M_f}^f=T^{N+1}.
\label{eq:lts_face_time_partition}
\end{equation}
On each face-time slab $f\times[\tau_\ell^f,\tau_{\ell+1}^f]$, both adjacent continuous predictors are available, and the common flux is integrated once with the physical quadrature associated with that slab. With the outward normal of $\Omega^e$, define
\begin{equation}
\Phi_{f,\ell}^{e}
=
\int_{\tau_\ell^f}^{\tau_{\ell+1}^f}
\int_f
\widehat{\bm F}_{f,n}^{e}(t)\,dS\,dt
\approx
\sum_{g=1}^{N_t^{f,\ell}}
w_g^{\,f,\ell}
\int_f
\widehat{\bm F}_{f,n}^{e}(t_g^{\,f,\ell})\,dS .
\label{eq:lts_face_time_slab_flux}
\end{equation}
Because the same numerical flux is used on the same physical face and the two outward normals are opposite,
\begin{equation}
\widehat{\bm F}_{f,n}^{e'}(t)=-\widehat{\bm F}_{f,n}^{e}(t),
\qquad
\Phi_{f,\ell}^{e'}=-\Phi_{f,\ell}^{e}.
\label{eq:lts_common_flux_antisymmetry}
\end{equation}
The face-wise LTS quadrature therefore assigns equal and opposite contributions to the two elements for every face-time slab. These two contributions may be applied during different local element updates, or one of them may be stored until the neighboring element reaches the corresponding local time level, but the algebraic contribution associated with the slab is antisymmetric:
\begin{equation}
\Phi_{f,\ell}^{e}+\Phi_{f,\ell}^{e'}=0.
\label{eq:lts_pairwise_face_cancellation}
\end{equation}

Summing \eqref{eq:lts_element_local_conservation_update} over all local intervals of all elements contained in $[T^N,T^{N+1}]$, every interior face-time slab appears exactly twice with opposite signs. Hence all interior common-flux contributions cancel pairwise, and the total discrete conserved quantity at synchronization times satisfies
\begin{equation}
\sum_e Q^e(T^{N+1})-\sum_e Q^e(T^N)
=
-\sum_{f\subset\partial\Omega}
\sum_{\ell=0}^{M_f-1}
\int_{\tau_\ell^f}^{\tau_{\ell+1}^f}
\int_f
\widehat{\bm F}_{f,n}^{\,b}(t)\,dS\,dt,
\label{eq:lts_global_conservation}
\end{equation}
where the right-hand side contains only physical boundary faces. For periodic boundaries, the boundary faces are paired in the same way as interior faces; for closed boundaries with zero normal flux, the boundary contribution vanishes. Therefore, even though $t_r^e$ and $t_{r+1}^e$ are local quantities and differ from element to element, the split evaluation of the volume, purely interior face, and common-flux contributions is conservative over each synchronization window.

\begin{Pro}[Conservation of the split local time integration]
For the split LTS update described above, define the total discrete conserved quantity by
\begin{equation*}
Q_{\mathrm{tot}}(T)=\sum_e(\mathbf{1}^e)^T M_L^e \hat U^e(T).
\end{equation*}
Over any synchronization window $[T^N,T^{N+1}]$, the change of this quantity is produced only by the physical boundary faces,
\begin{equation*}
Q_{\mathrm{tot}}(T^{N+1})-Q_{\mathrm{tot}}(T^N)
=
-\sum_{f\in\mathcal F_{\mathrm b}}
\int_{T^N}^{T^{N+1}}
\int_f
\widehat{\bm F}_{f,n}^{\,b}(t)\,dS\,dt,
\end{equation*}
where $\mathcal F_{\mathrm b}$ denotes the set of physical boundary faces. In particular, in the absence of physical boundary fluxes, for example under periodic boundary conditions, the total discrete conserved quantity is unchanged.
\end{Pro}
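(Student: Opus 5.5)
The plan is to argue element by element and then sum, following the structure already set up in Section~\ref{sec:lts_conservation}. First, for a fixed element $\Omega^e$ and one of its local intervals $I_r^e=[t_r^e,t_{r+1}^e]\subset[T^N,T^{N+1}]$, I will apply $(\mathbf 1^e)^T M_L^e$ to the quadrature form of the update \eqref{eq:lts_update_quadrature}. The volume term and the purely interior face terms are evaluated at the same element-local temporal nodes $t_m^e$. At each such node the combination $\mathcal L^e(\mathcal P^e(t_m^e))+\sum_f\mathcal C_{f,\mathrm{int}}^e(\mathcal P^{e,-}(t_m^e))$ is annihilated by $(\mathbf 1^e)^TM_L^e$, by the SBP identity $WD+D^TW=B$ in its tensor-product form, i.e.\ by \eqref{eq:lts_local_interior_cancellation}. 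The predictor $\mathcal P^e(t_m^e)$ is just some element-local polynomial state, and the cancellation is an algebraic identity valid for any nodal state. Hence the weighted sum over $m$ vanishes identically, and this holds regardless of the quadrature weights $\omega_m^e$. What remains is the discrete form of \eqref{eq:lts_element_local_conservation_update}: the element's change $Q^{e,r+1}-Q^{e,r}$ equals the sum over faces and over face subintervals of $w_g^{f,k}$ times $(\mathbf 1^e)^TM_L^e\,\mathcal C_{f,\mathrm{cf}}^e$. Because $M_L^e$ cancels against its inverse in $\mathcal C_{f,\mathrm{cf}}^e$, this equals $-\sum_g w_g^{f,k}\int_f\widehat{\bm F}^e_{f,n}(t_g^{f,k})\,dS$, with the face integral computed by the GLL surface quadrature.

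Second, I will sum over all local intervals of $\Omega^e$ in the window; the left side telescopes to $Q^e(T^{N+1})-Q^e(T^N)$. I then sum over all elements and regroup the right side face by face. The key bookkeeping claim is about the face-time partition \eqref{eq:lts_face_time_partition}. The subintervals $[t_{k,\mathrm L}^f,t_{k,\mathrm R}^f]$ used by $\Omega^e$ on face $f$, taken over all its local steps, coincide with the slabs $[\tau_\ell^f,\tau_{\ell+1}^f]$. The same holds for the neighbor $\Omega^{e'}$, and both elements use the same Gauss nodes and weights on each slab. Consequently each slab contributes $\Phi^e_{f,\ell}$ to $\Omega^e$ and $\Phi^{e'}_{f,\ell}$ to $\Omega^{e'}$, each exactly once.

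Third, I will invoke antisymmetry. For the LF flux this is immediate from its symmetric form. For the TR flux it is the interface-orientation consistency proposition proved above, which gives $\widehat{\bm F}^{TR}_{c,n}(U_R,U_L;-\bm n)=-\widehat{\bm F}^{TR}_{c,n}(U_L,U_R;\bm n)$. For the IIPG diffusive flux, the average $\{\!\!\{\bm F_{d,n}\}\!\!\}$ flips sign with the normal, and so does the penalty term, since the jump is oriented with the normal. Both sides evaluate their fluxes at the same physical quadrature points from the same two predictors $\mathcal P^{e,\pm}(t_g)$, and the surface GLL weights and face Jacobians agree on conforming faces. This yields \eqref{eq:lts_common_flux_antisymmetry} and hence \eqref{eq:lts_pairwise_face_cancellation}. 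All interior slab contributions therefore cancel, leaving only the boundary-face terms, which is the stated formula. Periodic faces are paired in the same way and also cancel, which gives the final claim.

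The main obstacle is the second step: verifying that the two neighbors really integrate the common flux over identical slabs with identical nodes. This requires that the face subdivision used by each element inside \eqref{eq:lts_update_quadrature} be defined as the restriction of the common partition \eqref{eq:lts_face_time_partition}, rather than chosen independently. It also requires that the neighbor predictor be the one actually in effect on each slab. The deferred-application issue (one side's contribution being stored until the neighbor reaches that time level) must also be handled explicitly. I would address both points by phrasing the argument purely as a sum over face-time slabs, which makes the order of application irrelevant once the window endpoints are synchronization times.
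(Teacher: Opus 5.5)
Your proposal is correct and follows essentially the same route as the paper: SBP cancellation of the volume and purely interior face terms at the shared element-local temporal nodes \eqref{eq:lts_local_interior_cancellation}, telescoping over the local steps inside the synchronization window, regrouping the common-flux terms over the face-time partition \eqref{eq:lts_face_time_partition}, and slab-wise cancellation through the antisymmetry \eqref{eq:lts_common_flux_antisymmetry}. Your flux-by-flux check of that antisymmetry (LF directly, TR via the orientation-consistency proposition, IIPG using that $\tau$ is a face quantity through $h_f=\min(h_L,h_R)$), and your requirement that each element's face subdivision be the restriction of the common partition, only make explicit what the paper states in a single line.
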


\paragraph{Remark.}

The split treatment used here should be distinguished from the CERK-based LTS formulation of Gassner and co-workers \cite{GassnerHindenlangMunz2011,GassnerDumbserHindenlangMunz2011}. In their modal DG setting, the volume contribution is incorporated into the element-local CERK predictor, while the face contribution is advanced as an unsplit surface term. Reusing the notation introduced above, this can be written schematically as
\begin{equation}
\hat U^{e,n+1}
=
\mathcal P^e(t_{n+1}^e)
+
\int_{t_n^e}^{t_{n+1}^e}
\mathcal C^e\bigl(\mathcal P^{e,-}(t),\mathcal P^{e,+}(t)\bigr)\,dt .
\label{eq:gassner_lts_unsplit_surface}
\end{equation}
Here $\mathcal P^e(t_{n+1}^e)$ is the continuous predictor defined in \eqref{eq:lts_cerk_predictor}, evaluated at the end of the local interval. For a modal DG discretization, this form is natural because the cell-average equation is associated directly with the constant modal coefficient, and the surface term represents the conservative exchange across element interfaces. In the nodal GLL DG discretization used in this work, however, the cancellation between the volume divergence and the interior physical-flux part of the strong-form surface correction is a discrete summation-by-parts cancellation. If the unsplit surface term in \eqref{eq:gassner_lts_unsplit_surface} is applied directly together with face-wise LTS quadrature, the purely interior part $\mathcal C_{f,\mathrm{int}}^e$ in \eqref{eq:lts_surface_face_split} is no longer evaluated at the same temporal quadrature points as the volume term, and the cancellation in \eqref{eq:lts_local_interior_cancellation} is generally lost. Therefore, the present formulation separates the purely interior face contribution from the common-flux contribution: the former is integrated together with the volume term using element-local quadrature, while only the latter is integrated with face-wise temporal matching. This modification is introduced to preserve the discrete conservation property of the nodal DG scheme under local time stepping.

\subsection{Algorithmic organization for CPU--GPU execution}
\label{subsec:cpu_gpu_algorithm}

The CPU--GPU implementation follows the data-dependence structure of the local time-stepping algorithm rather than introducing a separate numerical formulation. During each advancement stage, the CPU maintains mesh and geometric data, identifies the active element groups, builds the face-time tasks required by the common-flux quadrature, and dispatches batches of element-local or face-local work. The GPU evaluates the tensor-product interpolation, differentiation, volume residuals, predictor stages, purely interior face contributions, and common-flux quadrature for the dispatched batches. This organization allows CPU-side task preparation and GPU-side numerical evaluation to overlap, while keeping the device-resident working set limited to the active computational units \cite{LiuLu2019KeplerGPU}.

The resulting algorithm is naturally split into two types of kernels. Element-local kernels require only data from a single element and cover the predictor construction, modal decomposition when used, volume-residual evaluation, and the purely interior part of the face correction. Face-local kernels require data from the two adjacent predictors and are used only for the common-flux contribution on each face-time slab. This separation mirrors the split residual formulation and reduces the amount of neighbor-dependent communication.

For tensor-product discretizations on quadrilateral and hexahedral elements, the multidimensional operations are further evaluated by applying one-dimensional operators successively in each coordinate direction. Thus derivatives, interpolations, modal transforms, quadrature evaluations, and related operator applications are assembled from one-dimensional data rather than full multidimensional matrices. This is the computational form used by the GPU kernels and reduces both memory traffic and storage requirements for operator data.

\section{Numerical Experiments}
\label{sec:results}

In this section, a series of numerical experiments are carried out to assess the accuracy, robustness, and applicability of the proposed weakly compressible DG method with time-accurate local time stepping. The tests include a one-dimensional weakly compressible simple wave and a two-dimensional weakly compressible shear wave for accuracy and conservation verification, the $Re=100$ flow past a circular cylinder for numerical-flux comparison, the lid-driven cavity flow, and a three-dimensional 30P30N slat noise benchmark case. These examples are chosen to examine different aspects of the method, including formal order of accuracy, conservation under local time stepping, pressure-field behavior of the numerical fluxes in the cylinder test, viscous-flow resolution, and applicability to a complex three-dimensional flow configuration.

Unless otherwise stated, the weakly compressible equation of state \eqref{eq:specific_eos} is used in all computations. The spatial discretization is based on the strong-form nodal DGSEM formulation described in Section~3, and the temporal advancement is performed by the proposed CERK-based time-accurate local time-stepping procedure. The element-local time step is selected to satisfy
\begin{equation*}
\Delta t_e\le \mathrm{CFL}\,
\min\!\left\{
\frac{h_e}{(2k+1)\max_{\Omega^e}(\|\bm{u}\|+c)},\;
\frac{h_e^2}{(2k+1)^2(\mu/\rho_0)}
\right\},
\end{equation*}
where $h_e$ is the characteristic size of element $\Omega^e$, and $c=\sqrt{p'(\rho)}=c_0$ for the equation of state \eqref{eq:specific_eos}; the second term is omitted for inviscid cases. All numerical examples use $\mathrm{CFL}=0.8$. In each example, the degree of the temporal predictor polynomial constructed by the CERK continuous extension is chosen to match the spatial DG polynomial degree $k$. For viscous flows, the IIPG method is employed with a penalty constant $C_{\mathrm{IP}}=1.0$. Unless otherwise specified, the convective flux is handled by the two-rarefaction approximate Riemann flux.

\subsection{One-dimensional weakly compressible simple wave}
\label{subsec:wc_shear_1d}

The first verification problem is a one-dimensional periodic calculation on $x\in[0,2\pi]$ with conservative state $U=(\rho,\rho u)^T$. The initial condition is
\begin{equation*}
\rho(x,0)=\rho_0(1+\epsilon\sin x),\qquad
u(x,0)=u_0+c_0\ln\frac{\rho(x,0)}{\rho_0},\qquad
p(x,0)=c_0^2(\rho(x,0)-\rho_0)+p_0 .
\end{equation*}
Thus the Riemann invariant $u-c_0\ln(\rho/\rho_0)=u_0$ is constant, and the smooth right-running exact solution is obtained from the characteristic footpoint $\xi$ satisfying
\begin{equation*}
x=\xi+\big(u(\xi,0)+c_0\big)t\pmod{2\pi}.
\end{equation*}
The exact state is
\begin{equation*}
\rho(x,t)=\rho(\xi,0),\qquad
u(x,t)=u(\xi,0),\qquad
p(x,t)=c_0^2(\rho(x,t)-\rho_0)+p_0.
\end{equation*}
The parameters are
\[
c_0=5,\qquad p_0=25,\qquad \rho_0=1,\qquad
\epsilon=0.02,\qquad u_0=0.05,\qquad t_f=0.2,
\]
and the viscous coefficient is set to zero for this inviscid simple-wave test. The reported accuracy quantity is the density error evaluated at the DG solution/quadrature points.

This test uses nonuniform one-dimensional meshes. The coarsest line is generated by
\begin{equation*}
x_i=x_{\min}+(x_{\max}-x_{\min})
\left[
s_i+\frac{0.16}{2\pi}\sin(2\pi s_i)-\frac{0.04}{4\pi}\sin(4\pi s_i)
\right],
\qquad s_i=\frac{i}{n_0},
\end{equation*}
with $x_{\min}=0$, $x_{\max}=2\pi$, and $n_0=4$. The finer $n_x=8,16,32,64$ meshes are obtained by midpoint insertion on every existing cell. This produces nested nonuniform periodic grids. To assess conservation, the mass variation is computed as $\Delta M=M_h(t_f)-M_h(0)$.

\begin{table}[htbp]
\centering
\caption{Density errors and mass variation for the one-dimensional simple wave on nonuniform meshes.}
\label{tab:wc_shear_1d_accuracy}
{\small
\setlength{\tabcolsep}{3.5pt}
\begin{tabular}{c c c c c c c c}
\toprule
Flux & $k$ & $n_x$ & $L^2$ error & Order & $L^\infty$ error & Order & $\Delta M$ \\
\midrule
LF & 1 & 4  & $1.571{\times}10^{-2}$ & --    & $1.065{\times}10^{-2}$ & --    & $0.000{\times}10^{0}$ \\
LF & 1 & 8  & $4.580{\times}10^{-3}$ & 1.778 & $3.691{\times}10^{-3}$ & 1.529 & $2.665{\times}10^{-15}$ \\
LF & 1 & 16 & $1.247{\times}10^{-3}$ & 1.877 & $9.807{\times}10^{-4}$ & 1.912 & $-1.776{\times}10^{-15}$ \\
LF & 1 & 32 & $3.182{\times}10^{-4}$ & 1.971 & $2.620{\times}10^{-4}$ & 1.904 & $1.776{\times}10^{-15}$ \\
LF & 1 & 64 & $8.006{\times}10^{-5}$ & 1.991 & $6.662{\times}10^{-5}$ & 1.975 & $-8.882{\times}10^{-16}$ \\
TR & 1 & 4  & $1.576{\times}10^{-2}$ & --    & $1.071{\times}10^{-2}$ & --    & $-8.882{\times}10^{-16}$ \\
TR & 1 & 8  & $4.606{\times}10^{-3}$ & 1.774 & $3.706{\times}10^{-3}$ & 1.530 & $1.776{\times}10^{-15}$ \\
TR & 1 & 16 & $1.255{\times}10^{-3}$ & 1.876 & $9.865{\times}10^{-4}$ & 1.910 & $-8.882{\times}10^{-16}$ \\
TR & 1 & 32 & $3.205{\times}10^{-4}$ & 1.970 & $2.632{\times}10^{-4}$ & 1.906 & $5.329{\times}10^{-15}$ \\
TR & 1 & 64 & $8.065{\times}10^{-5}$ & 1.991 & $6.709{\times}10^{-5}$ & 1.972 & $-2.665{\times}10^{-15}$ \\
\midrule
LF & 2 & 4  & $2.010{\times}10^{-3}$ & --    & $2.707{\times}10^{-3}$ & --    & $-8.882{\times}10^{-16}$ \\
LF & 2 & 8  & $2.329{\times}10^{-4}$ & 3.109 & $3.018{\times}10^{-4}$ & 3.165 & $-8.882{\times}10^{-16}$ \\
LF & 2 & 16 & $3.079{\times}10^{-5}$ & 2.919 & $4.172{\times}10^{-5}$ & 2.855 & $0.000{\times}10^{0}$ \\
LF & 2 & 32 & $3.841{\times}10^{-6}$ & 3.003 & $5.320{\times}10^{-6}$ & 2.971 & $-8.882{\times}10^{-16}$ \\
LF & 2 & 64 & $4.858{\times}10^{-7}$ & 2.983 & $6.625{\times}10^{-7}$ & 3.006 & $-5.329{\times}10^{-15}$ \\
TR & 2 & 4  & $2.017{\times}10^{-3}$ & --    & $2.729{\times}10^{-3}$ & --    & $-1.776{\times}10^{-15}$ \\
TR & 2 & 8  & $2.332{\times}10^{-4}$ & 3.112 & $3.016{\times}10^{-4}$ & 3.178 & $-1.776{\times}10^{-15}$ \\
TR & 2 & 16 & $3.077{\times}10^{-5}$ & 2.922 & $4.171{\times}10^{-5}$ & 2.854 & $1.776{\times}10^{-15}$ \\
TR & 2 & 32 & $3.837{\times}10^{-6}$ & 3.003 & $5.320{\times}10^{-6}$ & 2.971 & $0.000{\times}10^{0}$ \\
TR & 2 & 64 & $4.850{\times}10^{-7}$ & 2.984 & $6.627{\times}10^{-7}$ & 3.005 & $1.776{\times}10^{-15}$ \\
\midrule
LF & 3 & 4  & $1.937{\times}10^{-4}$ & --    & $2.538{\times}10^{-4}$ & --    & $-8.882{\times}10^{-16}$ \\
LF & 3 & 8  & $1.110{\times}10^{-5}$ & 4.126 & $1.323{\times}10^{-5}$ & 4.262 & $-1.776{\times}10^{-15}$ \\
LF & 3 & 16 & $7.006{\times}10^{-7}$ & 3.986 & $9.374{\times}10^{-7}$ & 3.819 & $-1.776{\times}10^{-15}$ \\
LF & 3 & 32 & $4.199{\times}10^{-8}$ & 4.060 & $5.461{\times}10^{-8}$ & 4.102 & $2.665{\times}10^{-15}$ \\
LF & 3 & 64 & $2.633{\times}10^{-9}$ & 3.995 & $3.389{\times}10^{-9}$ & 4.010 & $-2.665{\times}10^{-15}$ \\
TR & 3 & 4  & $1.937{\times}10^{-4}$ & --    & $2.523{\times}10^{-4}$ & --    & $0.000{\times}10^{0}$ \\
TR & 3 & 8  & $1.117{\times}10^{-5}$ & 4.117 & $1.341{\times}10^{-5}$ & 4.234 & $8.882{\times}10^{-16}$ \\
TR & 3 & 16 & $7.053{\times}10^{-7}$ & 3.985 & $9.420{\times}10^{-7}$ & 3.831 & $8.882{\times}10^{-16}$ \\
TR & 3 & 32 & $4.229{\times}10^{-8}$ & 4.060 & $5.506{\times}10^{-8}$ & 4.097 & $3.553{\times}10^{-15}$ \\
TR & 3 & 64 & $2.652{\times}10^{-9}$ & 3.995 & $3.403{\times}10^{-9}$ & 4.016 & $-5.329{\times}10^{-15}$ \\
\bottomrule
\end{tabular}
}
\end{table}

The results show the expected convergence behavior for the one-dimensional simple wave. Both LF and TR fluxes produce essentially the same density errors. The observed orders on the finest levels are approximately two, three, and four for $k=1,2,3$, respectively. The mass variation remains at roundoff level, demonstrating that the conservative LTS implementation preserves the total mass for this periodic one-dimensional test.

\subsection{Two-dimensional weakly compressible shear wave}
\label{subsec:wc_shear_2d}

The second exact-solution test is a two-dimensional diagonal shear wave on the periodic domain $[0,1]\times[0,1]$. Let
\[
\theta=2\pi(x+y)-2\pi(U_c+V_c)t+\phi .
\]
The parameters are
\[
c_0=1.5,\qquad p_0=2.25,\qquad \rho_0=\rho_\ast=1,\qquad
A=0.05,\qquad U_c=0.1,\qquad V_c=0.05,
\]
with $\mu=0.1$, $\phi=0.3$, and $t_f=0.05$. The initial condition is the exact state at $t=0$, and the full exact solution is
\begin{equation*}
\rho(x,y,t)=\rho_\ast,\qquad
p(x,y,t)=c_0^2(\rho_\ast-\rho_0)+p_0,
\end{equation*}
\begin{equation*}
u(x,y,t)=U_c-\frac{A}{\sqrt{2}}\exp(-8\pi^2\mu t/\rho_\ast)\sin\theta,\qquad
v(x,y,t)=V_c+\frac{A}{\sqrt{2}}\exp(-8\pi^2\mu t/\rho_\ast)\sin\theta .
\end{equation*}
The reported error is the velocity-vector error
\[
e_u=\sqrt{(u_h-u_{\mathrm{exact}})^2+(v_h-v_{\mathrm{exact}})^2}.
\]

All rows use the nonuniform midpoint quadrilateral mesh family with $n_x=n_y=4,8,16,32$. The coarse $4\times4$ grid is formed from tensor-product vertex lines. The $x$ vertex line uses the same smooth perturbation as the one-dimensional mesh, with coefficients $(0.16,-0.04)$, while the $y$ vertex line uses coefficients $(-0.12,0.03)$. The finer levels are generated by midpoint subdivision of every coarse cell edge in both coordinate directions, so the nonuniform grids remain nested and the periodic faces are connected by unit translations. The mass variation is computed as $\Delta M=M_h(t_f)-M_h(0)$.

\begin{table}[htbp]
\centering
\caption{Velocity errors and mass variation for the two-dimensional weakly compressible shear wave on nonuniform meshes.}
\label{tab:wc_shear_2d_accuracy}
{\small
\setlength{\tabcolsep}{3.5pt}
\begin{tabular}{c c c c c c c c}
\toprule
Flux & $k$ & Grid & $L^2$ error & Order & $L^\infty$ error & Order & $\Delta M$ \\
\midrule
LF & 1 & $4\times4$   & $1.766{\times}10^{-3}$ & --    & $3.487{\times}10^{-3}$ & --    & $-2.220{\times}10^{-16}$ \\
LF & 1 & $8\times8$   & $6.822{\times}10^{-4}$ & 1.372 & $1.482{\times}10^{-3}$ & 1.235 & $-2.665{\times}10^{-15}$ \\
LF & 1 & $16\times16$ & $1.738{\times}10^{-4}$ & 1.973 & $3.856{\times}10^{-4}$ & 1.942 & $4.108{\times}10^{-15}$ \\
LF & 1 & $32\times32$ & $4.428{\times}10^{-5}$ & 1.973 & $1.085{\times}10^{-4}$ & 1.829 & $1.876{\times}10^{-14}$ \\
TR & 1 & $4\times4$   & $2.001{\times}10^{-3}$ & --    & $3.926{\times}10^{-3}$ & --    & $-1.110{\times}10^{-16}$ \\
TR & 1 & $8\times8$   & $8.950{\times}10^{-4}$ & 1.161 & $1.851{\times}10^{-3}$ & 1.085 & $-1.998{\times}10^{-15}$ \\
TR & 1 & $16\times16$ & $2.480{\times}10^{-4}$ & 1.852 & $5.306{\times}10^{-4}$ & 1.803 & $4.108{\times}10^{-15}$ \\
TR & 1 & $32\times32$ & $5.626{\times}10^{-5}$ & 2.140 & $1.411{\times}10^{-4}$ & 1.911 & $1.776{\times}10^{-14}$ \\
\midrule
LF & 2 & $4\times4$   & $9.873{\times}10^{-4}$ & --    & $3.428{\times}10^{-3}$ & --    & $2.220{\times}10^{-16}$ \\
LF & 2 & $8\times8$   & $2.308{\times}10^{-4}$ & 2.097 & $6.865{\times}10^{-4}$ & 2.320 & $1.998{\times}10^{-15}$ \\
LF & 2 & $16\times16$ & $5.711{\times}10^{-5}$ & 2.015 & $1.281{\times}10^{-4}$ & 2.422 & $-4.885{\times}10^{-15}$ \\
LF & 2 & $32\times32$ & $1.489{\times}10^{-5}$ & 1.939 & $2.708{\times}10^{-5}$ & 2.242 & $-1.910{\times}10^{-14}$ \\
TR & 2 & $4\times4$   & $1.268{\times}10^{-3}$ & --    & $4.643{\times}10^{-3}$ & --    & $2.220{\times}10^{-16}$ \\
TR & 2 & $8\times8$   & $2.796{\times}10^{-4}$ & 2.181 & $8.694{\times}10^{-4}$ & 2.417 & $1.998{\times}10^{-15}$ \\
TR & 2 & $16\times16$ & $6.394{\times}10^{-5}$ & 2.128 & $1.453{\times}10^{-4}$ & 2.581 & $-6.772{\times}10^{-15}$ \\
TR & 2 & $32\times32$ & $1.584{\times}10^{-5}$ & 2.013 & $2.901{\times}10^{-5}$ & 2.325 & $-1.998{\times}10^{-14}$ \\
\midrule
LF & 3 & $4\times4$   & $5.284{\times}10^{-5}$ & --    & $2.527{\times}10^{-4}$ & --    & $9.992{\times}10^{-16}$ \\
LF & 3 & $8\times8$   & $3.578{\times}10^{-6}$ & 3.884 & $1.583{\times}10^{-5}$ & 3.996 & $-2.887{\times}10^{-15}$ \\
LF & 3 & $16\times16$ & $2.409{\times}10^{-7}$ & 3.893 & $1.029{\times}10^{-6}$ & 3.943 & $-3.553{\times}10^{-15}$ \\
LF & 3 & $32\times32$ & $1.606{\times}10^{-8}$ & 3.907 & $6.410{\times}10^{-8}$ & 4.005 & $1.112{\times}10^{-13}$ \\
TR & 3 & $4\times4$   & $5.849{\times}10^{-5}$ & --    & $2.904{\times}10^{-4}$ & --    & $1.443{\times}10^{-15}$ \\
TR & 3 & $8\times8$   & $3.718{\times}10^{-6}$ & 3.976 & $1.633{\times}10^{-5}$ & 4.153 & $-2.220{\times}10^{-15}$ \\
TR & 3 & $16\times16$ & $2.447{\times}10^{-7}$ & 3.926 & $1.040{\times}10^{-6}$ & 3.972 & $1.554{\times}10^{-15}$ \\
TR & 3 & $32\times32$ & $1.617{\times}10^{-8}$ & 3.920 & $6.496{\times}10^{-8}$ & 4.002 & $1.090{\times}10^{-13}$ \\
\bottomrule
\end{tabular}
}
\end{table}

The results show the expected convergence behavior and roundoff-level mass conservation for the proposed split LTS treatment. The LF and TR fluxes remain close in accuracy. The $k=1$ and $k=3$ cases recover approximately second- and fourth-order behavior, respectively. For $k=2$, the observed order is close to two rather than the optimal $k+1=3$, which is attributed to the IIPG viscous discretization: for even polynomial degrees it typically yields only $k$-th order accuracy, consistent with the known parity-dependent behavior of incomplete or nonsymmetric interior-penalty DG discretizations \cite{ArnoldBrezziCockburnMarini2002,Riviere2008}.

To highlight the role of the split treatment in the nodal DG discretization, an additional comparison is performed on the same two-dimensional nonuniform shear-wave problem with $k=1$ on the $16\times16$ grid. The comparison directly uses the Gassner-type unsplit surface update represented by \eqref{eq:gassner_lts_unsplit_surface} instead of the conservative split update used in the present method. Table~\ref{tab:wc_shear_2d_nonconserved} compares the result with the conservative formulation.

\begin{table}[htbp]
\centering
\caption{Conservation comparison for the two-dimensional nonuniform shear wave with $k=1$ on the $16\times16$ grid.}
\label{tab:wc_shear_2d_nonconserved}
{\small
\begin{tabular}{l c c c c}
\toprule
Update & Flux & $L^2$ error & $L^\infty$ error & $\Delta M$ \\
\midrule
Conservative split update & LF & $1.738{\times}10^{-4}$ & $3.856{\times}10^{-4}$ & $4.108{\times}10^{-15}$ \\
Gassner-type unsplit update & LF & $1.398{\times}10^{-4}$ & $4.194{\times}10^{-4}$ & $8.345{\times}10^{-8}$ \\
Conservative split update & TR & $2.480{\times}10^{-4}$ & $5.306{\times}10^{-4}$ & $4.108{\times}10^{-15}$ \\
Gassner-type unsplit update & TR & $2.098{\times}10^{-4}$ & $5.537{\times}10^{-4}$ & $8.160{\times}10^{-8}$ \\
\bottomrule
\end{tabular}
}
\end{table}

The comparison shows that directly applying the unsplit Gassner-type update to the nodal DGSEM discretization destroys the roundoff-level mass conservation: the mass drift increases from about $10^{-15}$ to about $10^{-8}$. This behavior is consistent with the discussion in Section~\ref{sec:lts_conservation}: in the nodal GLL setting, the purely interior face contribution must be integrated together with the volume term in order to retain the discrete SBP cancellation and hence the conservation property.

\subsection{Flow past a circular cylinder at $Re=100$}
\label{subsec:cylinder_re100}

The flow past a circular cylinder at $Re=100$ is used to compare the pressure-field behavior of different inviscid numerical fluxes in the weakly compressible formulation. The free-stream velocity is
\begin{equation*}
U_\infty=1.0~\mathrm{m\,s^{-1}},
\end{equation*}
the artificial sound speed is set to
\begin{equation*}
c_0=10~\mathrm{m\,s^{-1}},
\end{equation*}
and the cylinder diameter is
\begin{equation*}
D=0.012~\mathrm{m}.
\end{equation*}
The reference kinematic viscosity is therefore $\mu/\rho_0=U_\infty D/Re=1.2\times10^{-4}~\mathrm{m^2\,s^{-1}}$. The polynomial degree is $k=3$, and the computation is advanced to the final time $t_f=3.0~\mathrm{s}$.
\begin{figure}[htbp]
    \centering
    \begin{minipage}[t]{0.48\textwidth}
        \centering
        \includegraphics[width=\linewidth]{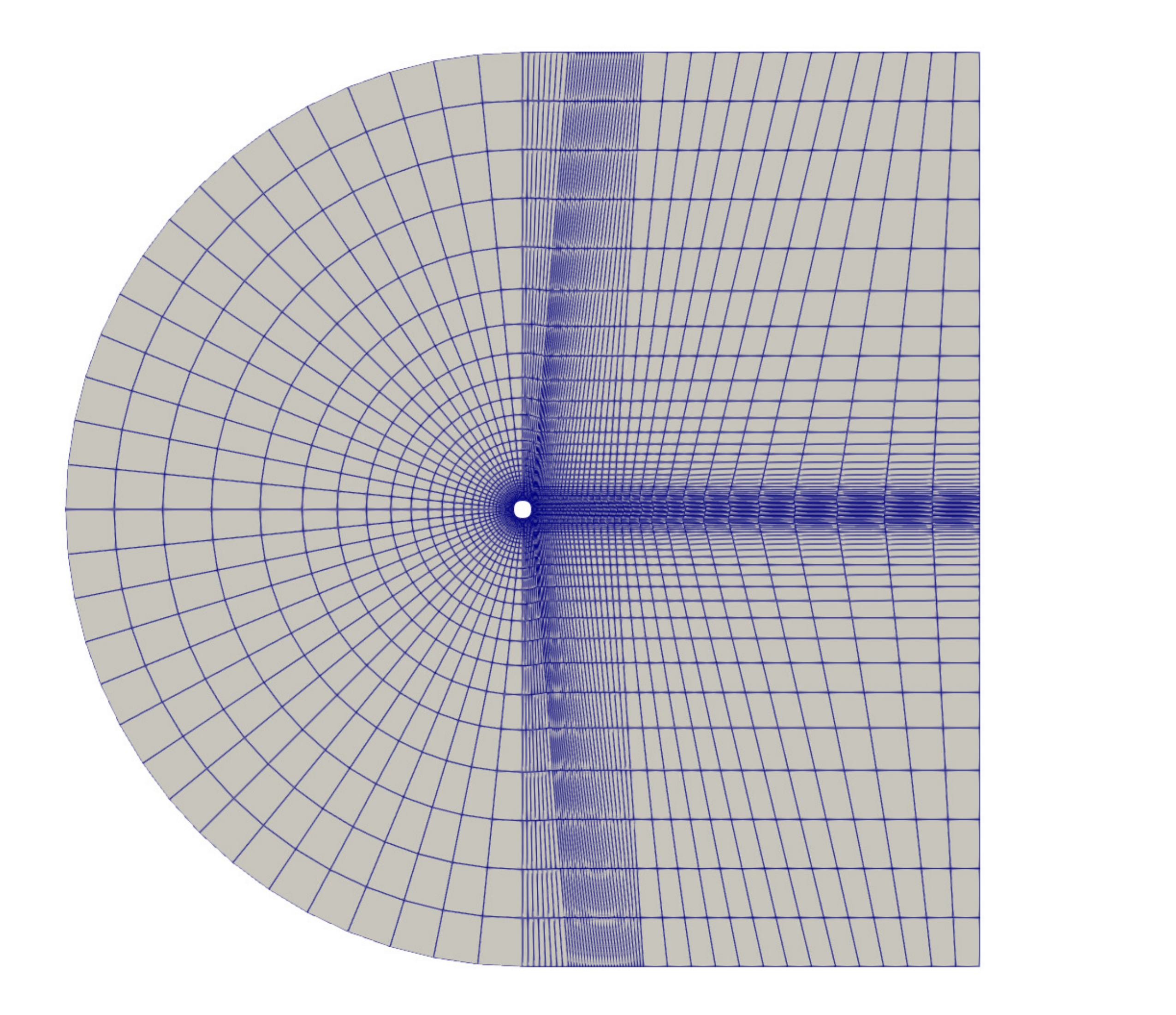}\\[-1mm]
        {\footnotesize (a) Computational mesh}
    \end{minipage}
    \hfill
    \begin{minipage}[t]{0.48\textwidth}
        \centering
        \includegraphics[width=\linewidth]{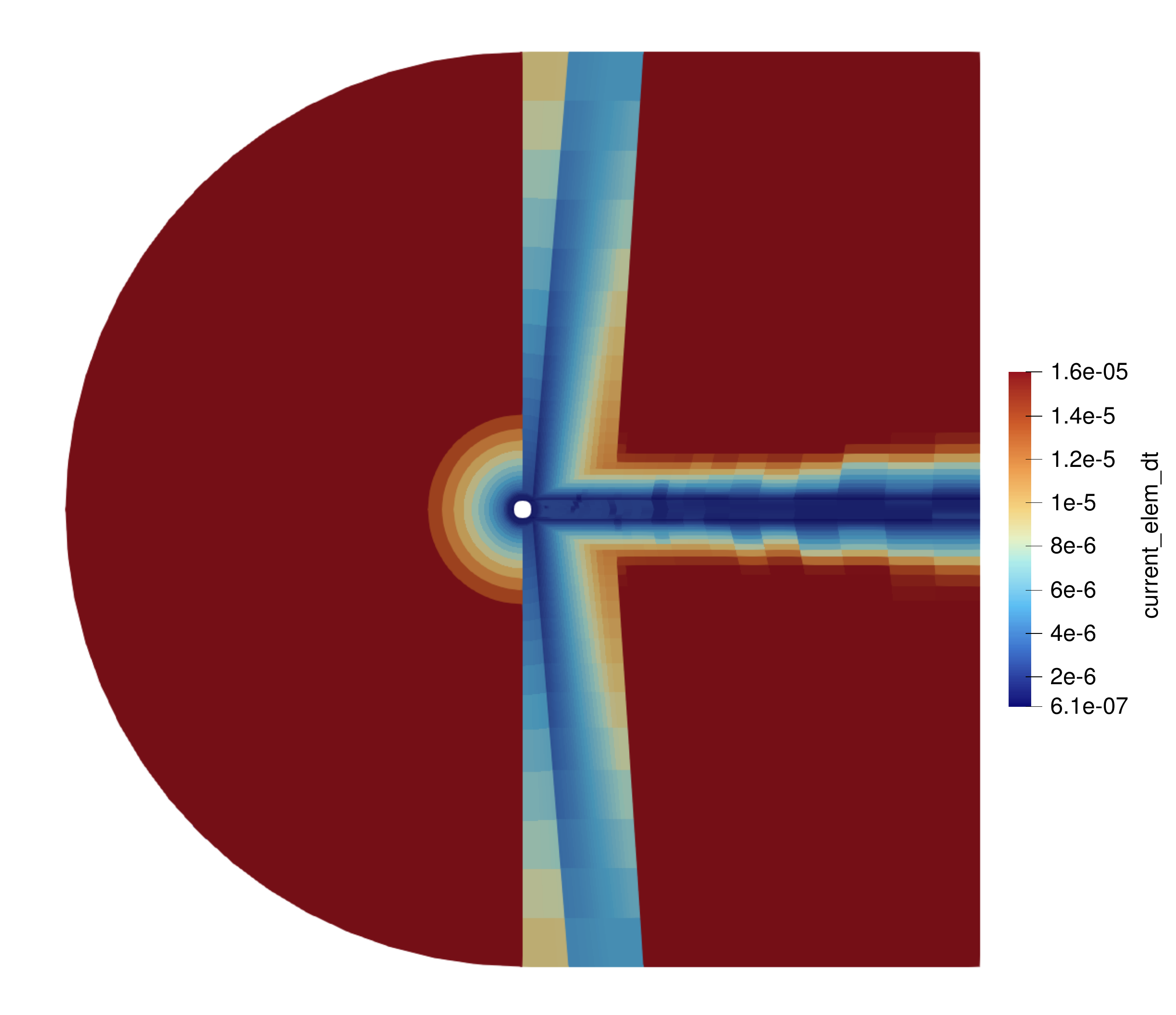}\\[-1mm]
        {\footnotesize (b) Local time-step size at $t=3.0~\mathrm{s}$}
    \end{minipage}
    \caption{Mesh and instantaneous local time-step distribution for the $Re=100$ cylinder-flow computation.}
    \label{fig:re100_mesh_dt}
\end{figure}

Figure~\ref{fig:re100_mesh_dt} shows the computational mesh and the local time-step distribution at $t=3.0~\mathrm{s}$. The smallest local time steps are concentrated near the cylinder and in the refined wake region, where the mesh size and local gradients impose the strongest stability constraint. Away from the body, the larger elements allow larger local steps, which is consistent with the intended local time-stepping behavior.

\begin{figure}[htbp]
    \centering
    \begin{minipage}[t]{0.47\textwidth}
        \centering
        \includegraphics[width=\linewidth]{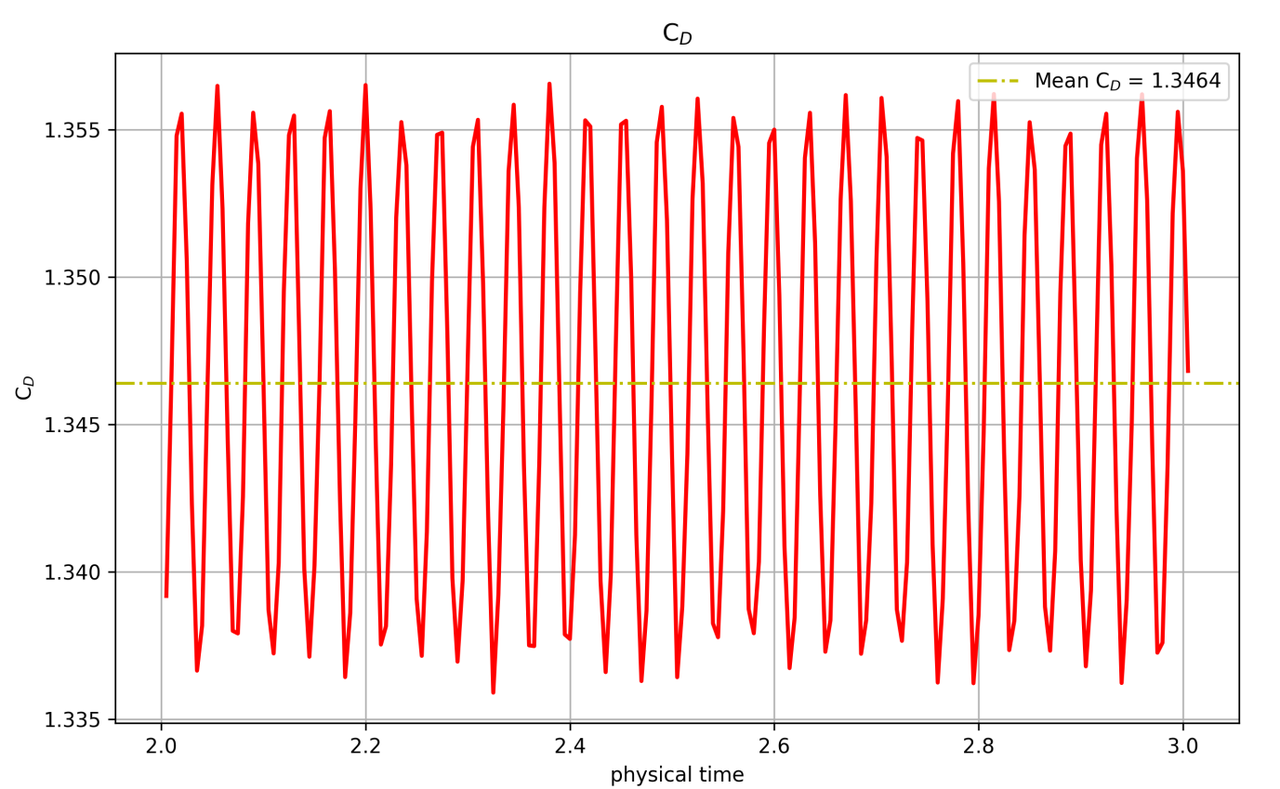}\\[-1mm]
        {\footnotesize (a) Drag coefficient, LF flux}
    \end{minipage}
    \hfill
    \begin{minipage}[t]{0.47\textwidth}
        \centering
        \includegraphics[width=\linewidth]{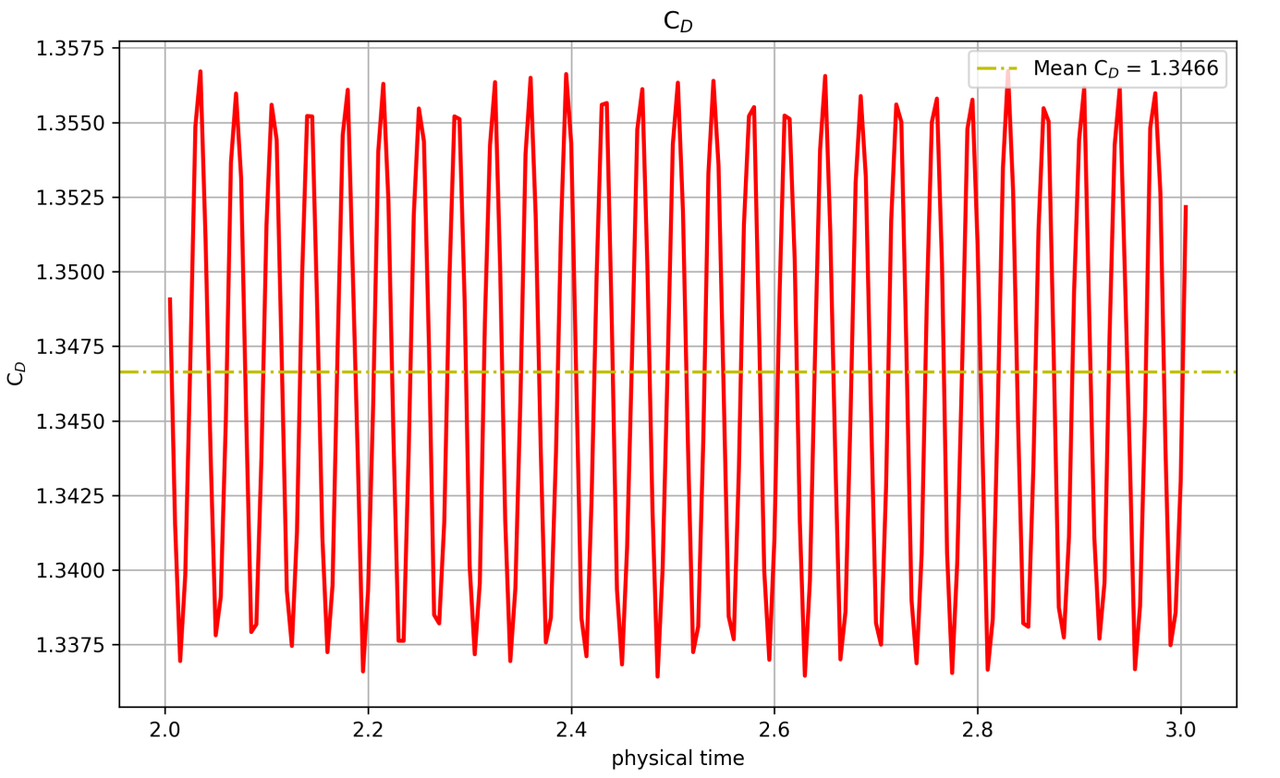}\\[-1mm]
        {\footnotesize (b) Drag coefficient, TR flux}
    \end{minipage}

    \vspace{0.15cm}

    \begin{minipage}[t]{0.47\textwidth}
        \centering
        \includegraphics[width=\linewidth]{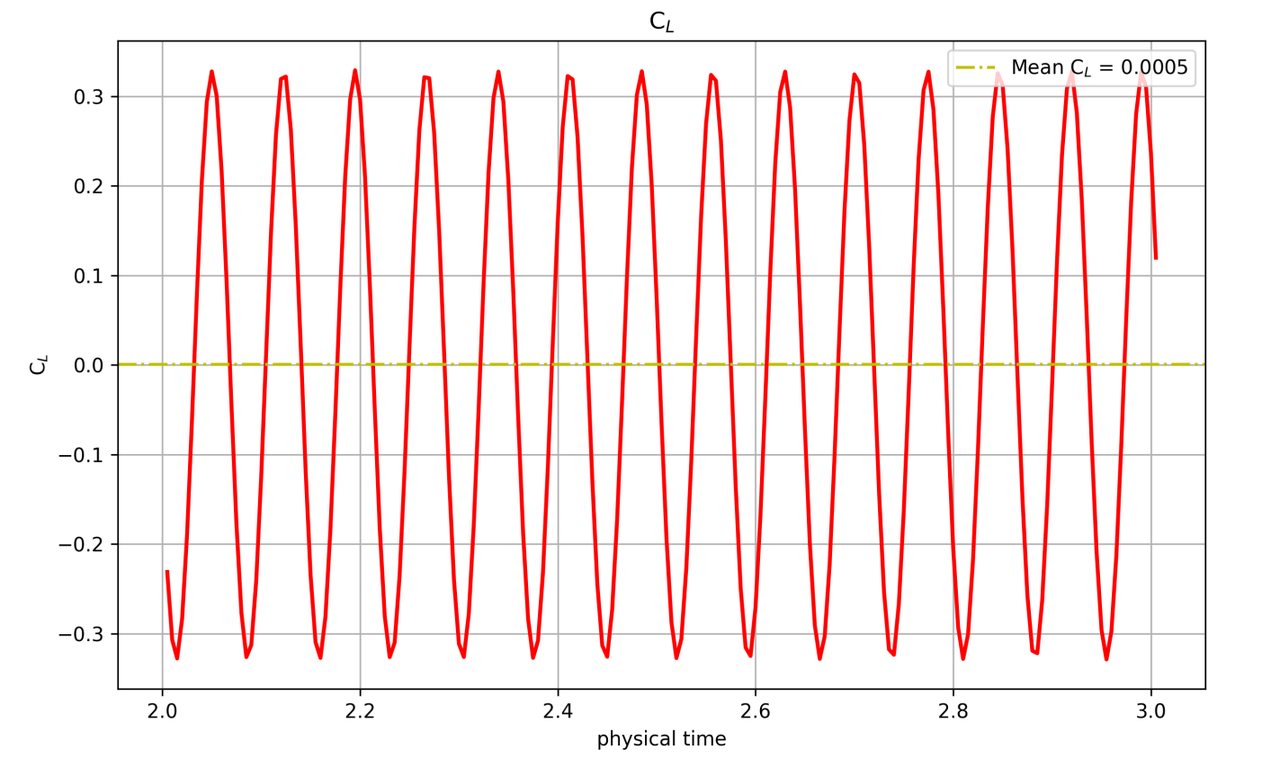}\\[-1mm]
        {\footnotesize (c) Lift coefficient, LF flux}
    \end{minipage}
    \hfill
    \begin{minipage}[t]{0.47\textwidth}
        \centering
        \includegraphics[width=\linewidth]{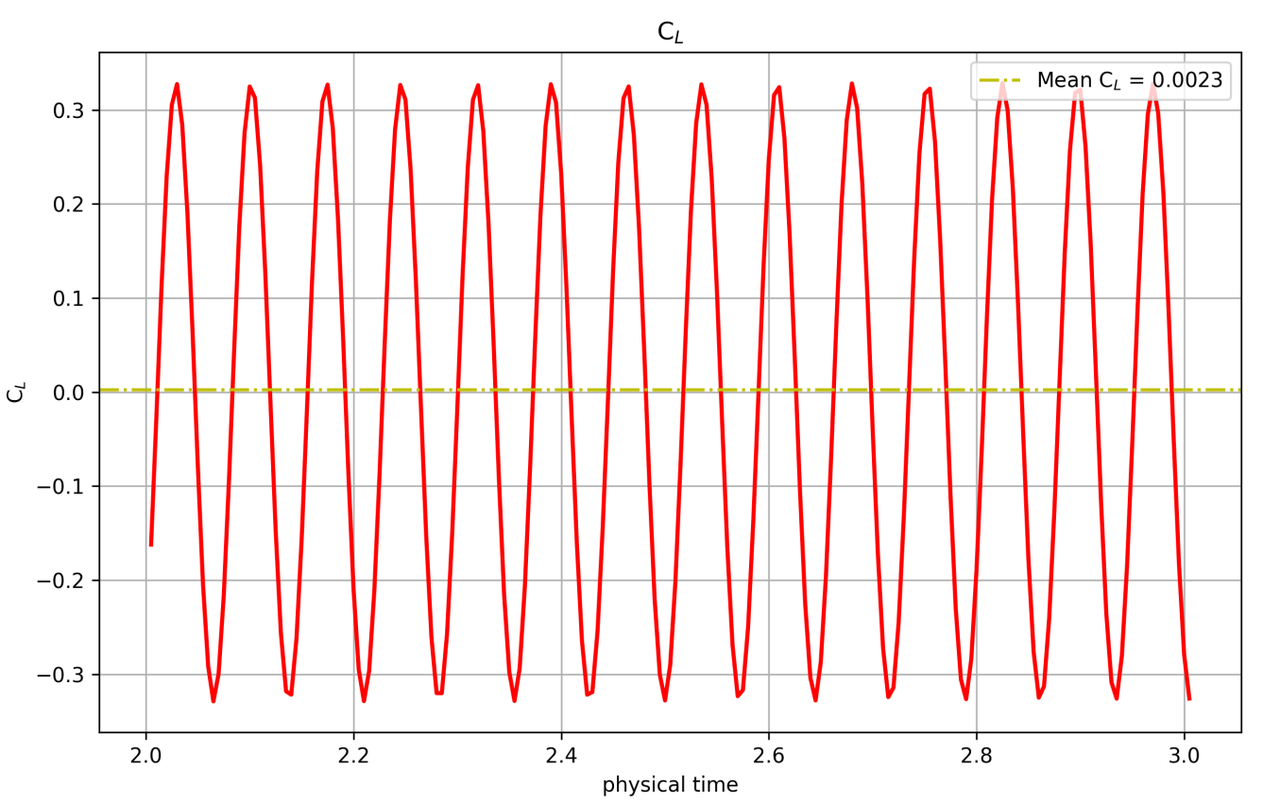}\\[-1mm]
        {\footnotesize (d) Lift coefficient, TR flux}
    \end{minipage}
    \caption{Time histories of the drag and lift coefficients for the $Re=100$ cylinder flow.}
    \label{fig:re100_force_history}
\end{figure}

Figure~\ref{fig:re100_force_history} gives the aerodynamic-coefficient histories after the periodic shedding state has developed. The drag coefficient oscillates around nearly identical mean levels, with $\overline{C_D}=1.3464$ for the LF flux and $\overline{C_D}=1.3466$ for the TR flux. The lift coefficient is approximately periodic with a small mean value for both fluxes, and the corresponding rms lift coefficients are $C_L'=0.227$ for LF and $C_L'=0.233$ for TR. The shedding frequency extracted from the lift histories gives $St=0.166$ for both LF and TR.

\begin{table}[htbp]
\centering
\caption{Comparison of force statistics and Strouhal number for flow past a circular cylinder at $Re=100$.}
\label{tab:re100_cd_st}
\begin{tabular}{l c c c}
\toprule
Source & $\overline{C_D}$ & $C_L'$ & $St$ \\
\midrule
Present, LF flux & 1.3464 & 0.227 & 0.166 \\
Present, TR flux & 1.3466 & 0.233 & 0.166 \\
Aarnes et al.~\cite{AarnesHaugenAndersson2018} & 1.346 & 0.234--0.235 & 0.166 \\
Li et al.~\cite{LiZhangShockChen2009} & 1.336 & -- & 0.164 \\
Posdziech and Grundmann~\cite{PosdziechGrundmann2007} & 1.350 & 0.234 & 0.167 \\
Pan~\cite{Pan2006} & 1.32 & 0.23 & 0.16 \\
Qu et al.~\cite{QuNorbergDavidsonPengWang2013} & 1.326 & 0.219 & 0.166 \\
\bottomrule
\end{tabular}
\end{table}

Table~\ref{tab:re100_cd_st} compares the present force statistics with the overset-grid result of Aarnes et al.~\cite{AarnesHaugenAndersson2018} and the reference data summarized therein. The present mean drag coefficients are essentially identical to the large-domain overset-grid value of Aarnes et al. The TR rms lift coefficient agrees closely with the Aarnes and Posdziech--Grundmann values, while the LF value is slightly lower but still within the range of published results. The computed Strouhal number agrees exactly with the Aarnes and Qu et al. values while remaining within the spread of the other published data. This confirms that the force histories are consistent with established $Re=100$ cylinder-flow benchmarks.

\begin{figure}[htbp]
    \centering
    \begin{minipage}[t]{0.48\textwidth}
        \centering
        \includegraphics[width=\linewidth]{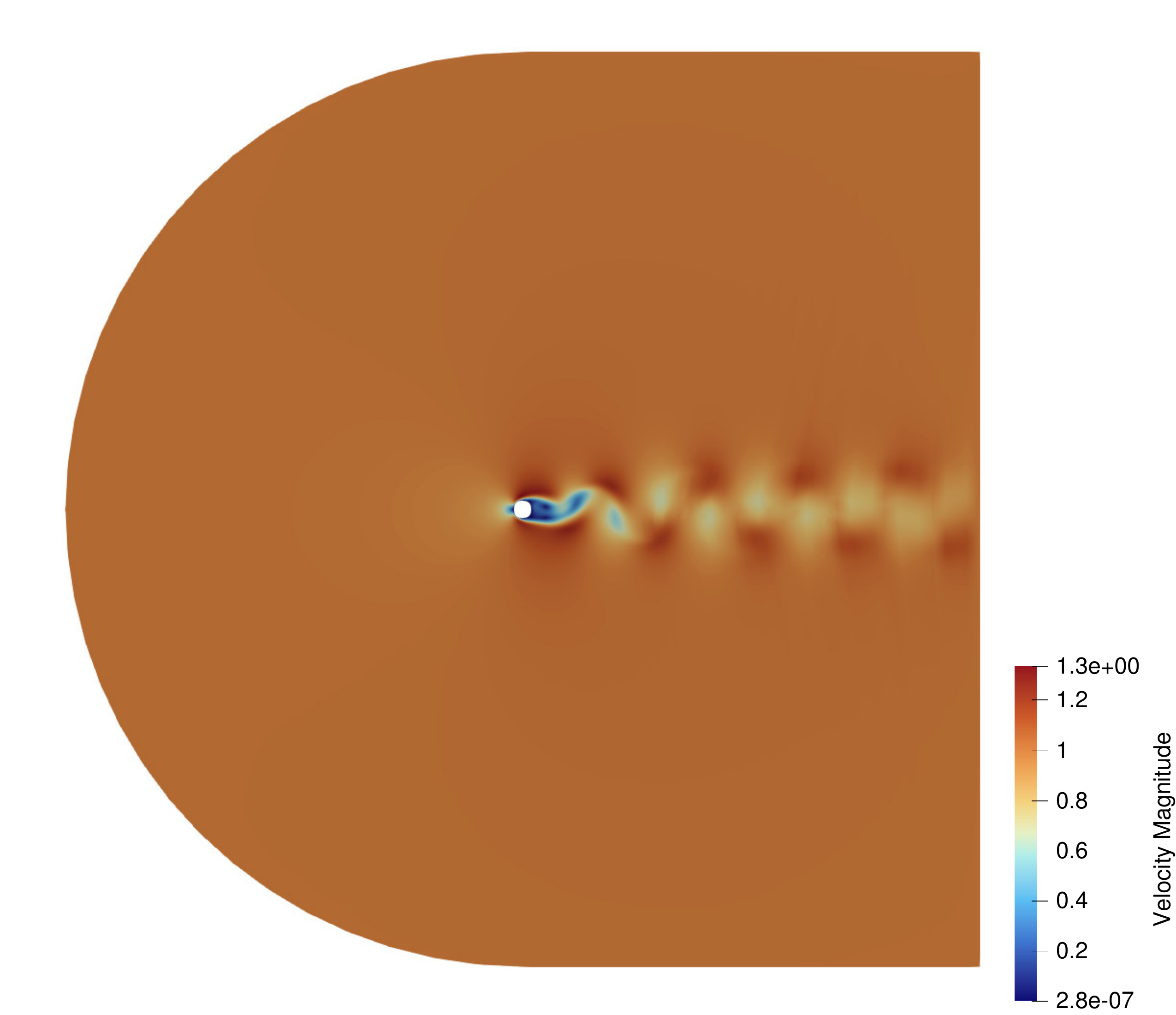}\\[-1mm]
        {\footnotesize (a) LF flux}
    \end{minipage}
    \hfill
    \begin{minipage}[t]{0.48\textwidth}
        \centering
        \includegraphics[width=\linewidth]{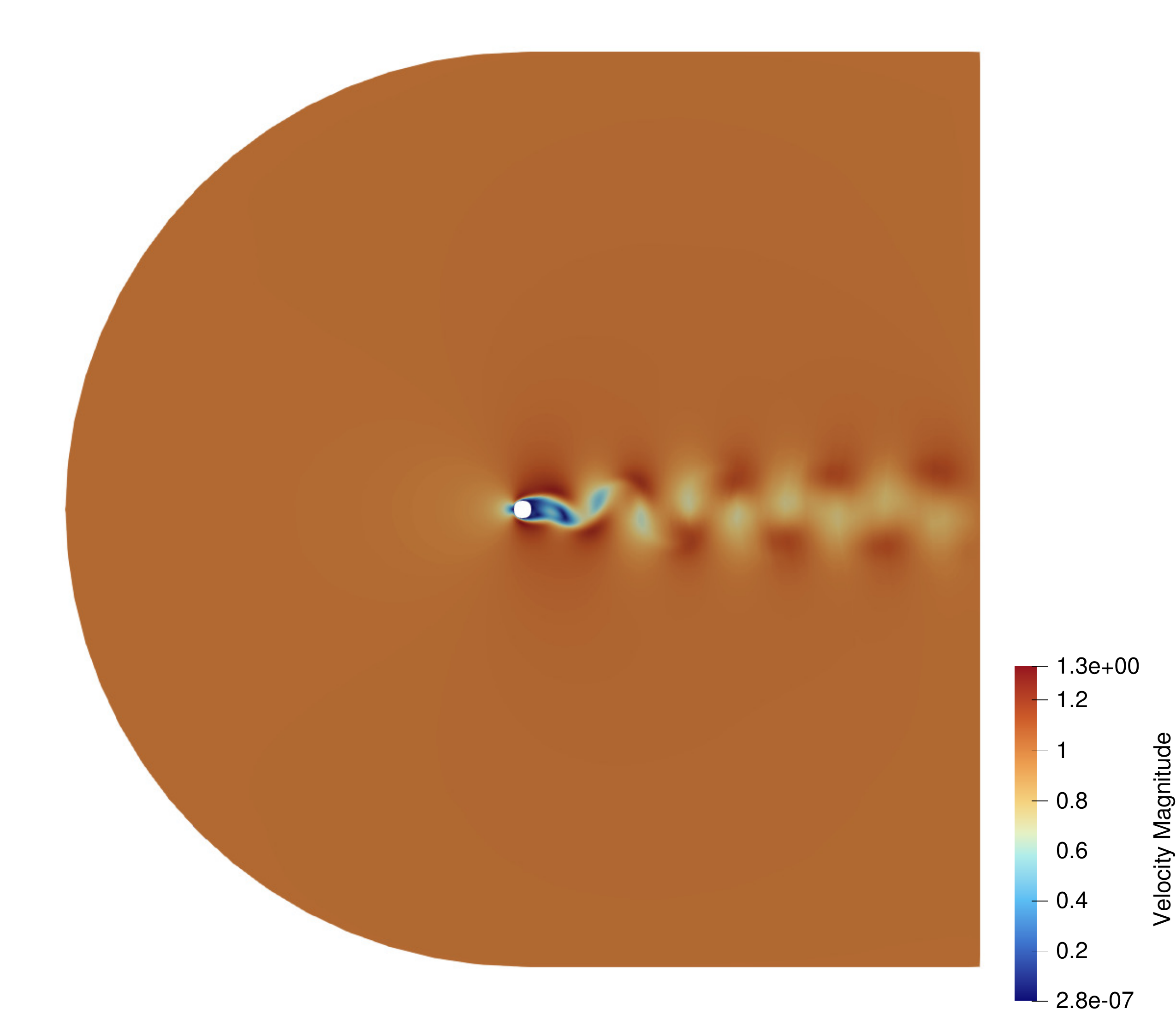}\\[-1mm]
        {\footnotesize (b) TR flux}
    \end{minipage}
    \caption{Velocity-magnitude fields, in $\mathrm{m\,s^{-1}}$, at $t=3.0~\mathrm{s}$ for the $Re=100$ cylinder flow.}
    \label{fig:re100_velocity}
\end{figure}

The velocity-magnitude fields in Fig.~\ref{fig:re100_velocity} show that the LF and TR fluxes produce comparable near-cylinder acceleration and wake development at the final time.

\begin{figure}[htbp]
    \centering
    \begin{minipage}[t]{0.48\textwidth}
        \centering
        \includegraphics[width=\linewidth]{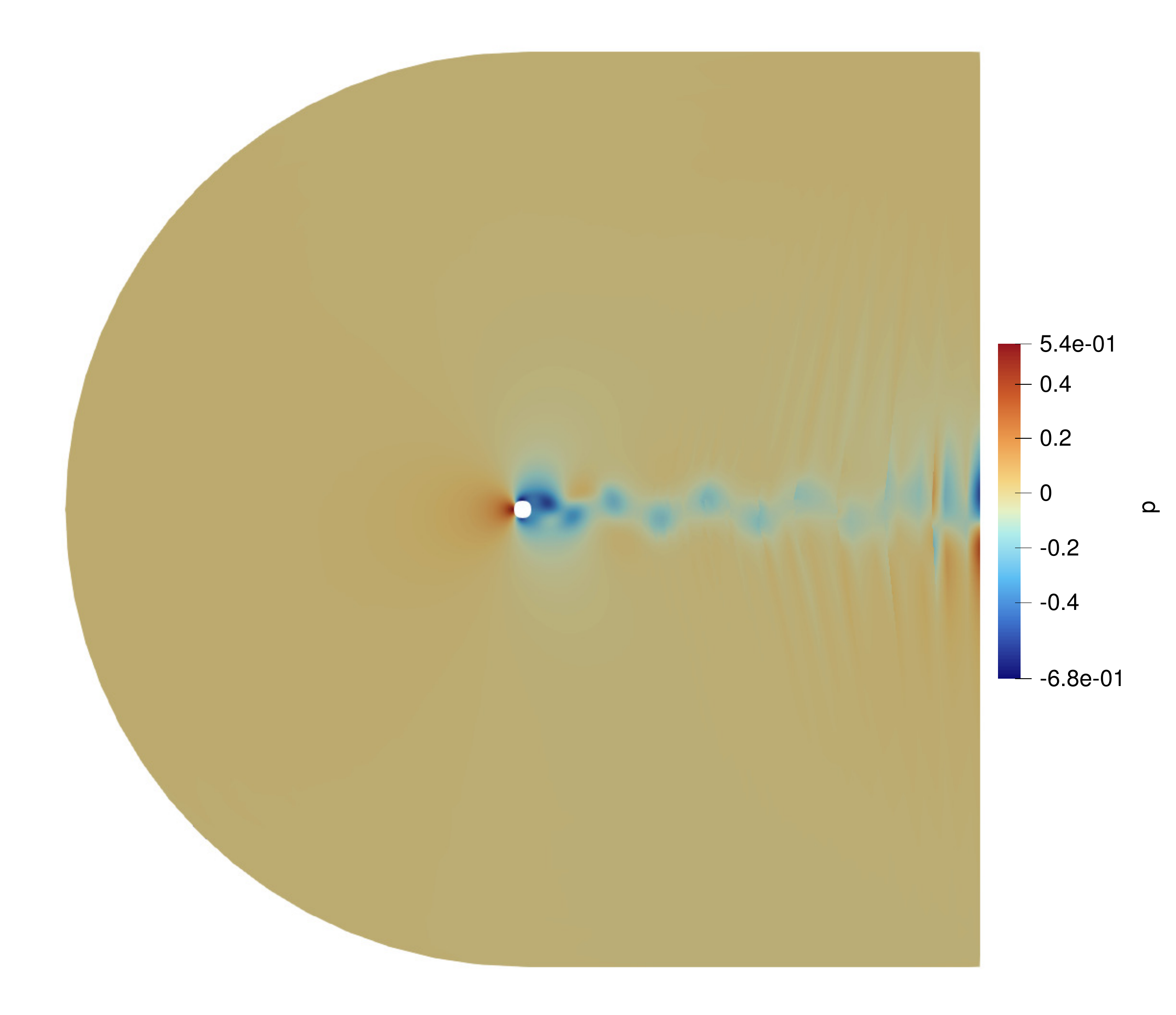}\\[-1mm]
        {\footnotesize (a) LF flux}
    \end{minipage}
    \hfill
    \begin{minipage}[t]{0.48\textwidth}
        \centering
        \includegraphics[width=\linewidth]{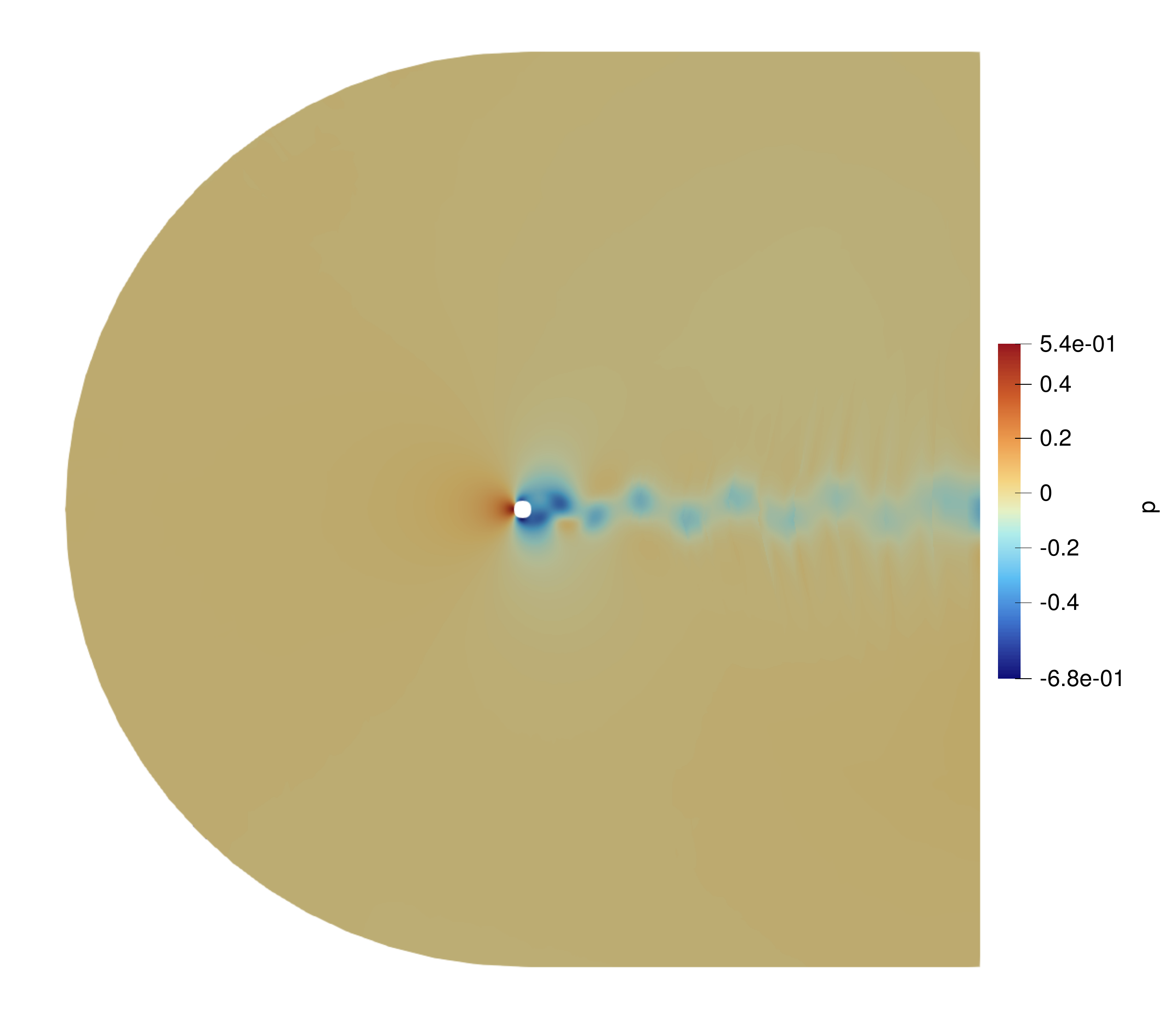}\\[-1mm]
        {\footnotesize (b) TR flux}
    \end{minipage}
    \caption{Pressure fields at $t=3.0~\mathrm{s}$ for the $Re=100$ cylinder flow.}
    \label{fig:re100_pressure}
\end{figure}

The pressure contours in Fig.~\ref{fig:re100_pressure} show a visible difference between the two fluxes. Compared with LF, the TR result exhibits weaker far-field pressure fluctuations, especially close to the downstream outflow boundary. In the present cylinder-flow test, this suggests a potential benefit of the two-rarefaction approximate flux for reducing spurious boundary reflections in this open-boundary setting.

\begin{figure}[htbp]
    \centering
    \begin{minipage}[t]{0.47\textwidth}
        \centering
        \includegraphics[width=\linewidth]{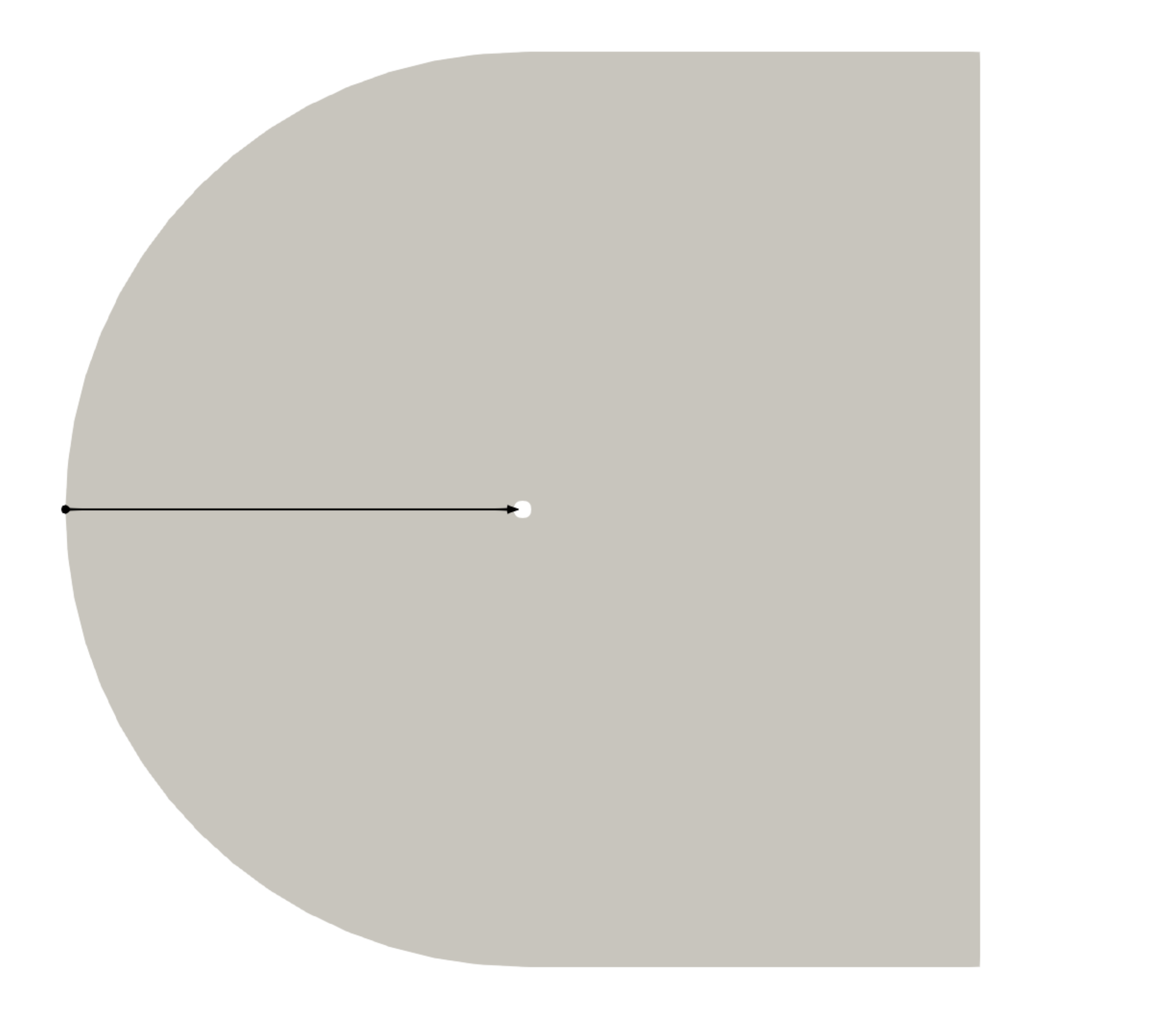}\\[-1mm]
        {\footnotesize (a) Sampling line}
    \end{minipage}
    \hfill
    \begin{minipage}[t]{0.50\textwidth}
        \centering
        \includegraphics[width=\linewidth]{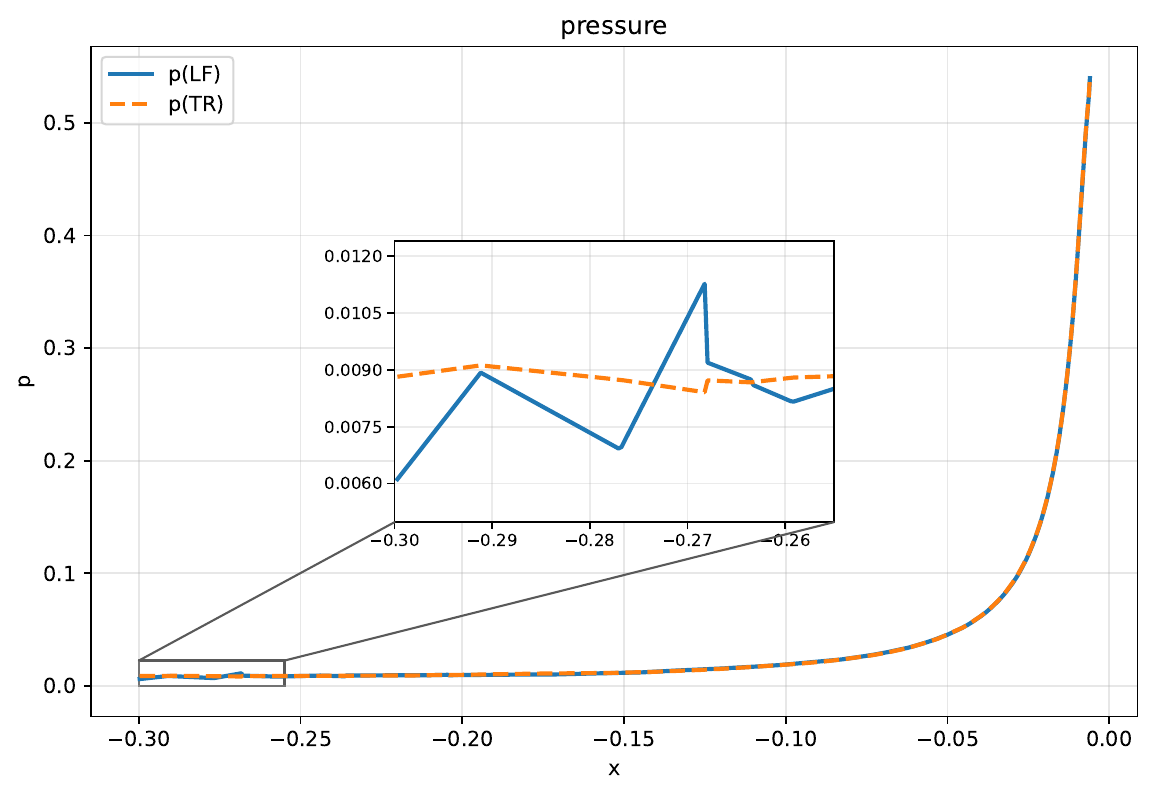}\\[-1mm]
        {\footnotesize (b) Pressure along the sampling line}
    \end{minipage}
    \caption{Pressure sampling line through the cylinder centerline and the corresponding pressure comparison at $t=3.0~\mathrm{s}$.}
    \label{fig:re100_pressure_line}
\end{figure}

To further compare the pressure response near the far-field boundary, Fig.~\ref{fig:re100_pressure_line} shows the pressure sampled along a horizontal line passing through the cylinder centerline. The pressure curve shows weaker pressure fluctuations for the TR flux near the upstream far-field boundary. Together with the pressure-contour comparison, this suggests a potential benefit of the TR flux in the present cylinder-flow test: it retains the near-body solution quality while producing a weaker far-field pressure response than the LF flux.

\subsection{Lid-driven cavity flow}
\label{subsec:cavity}

The lid-driven cavity flow is considered next to examine the capability of the proposed method for incompressible-like viscous recirculating flows. The computational domain is a square cavity with side length $L=1~\mathrm{m}$,
\begin{equation*}
(x,y)\in[0,1]~\mathrm{m}\times[0,1]~\mathrm{m}.
\end{equation*}
No-slip boundary conditions are imposed on all walls, and the top lid moves with a constant horizontal velocity
\begin{equation*}
u=U_{\mathrm{lid}}=1.0~\mathrm{m\,s^{-1}}, \qquad v=0~\mathrm{m\,s^{-1}}.
\end{equation*}
The Reynolds number is defined as
\begin{equation*}
Re=\frac{U_{\mathrm{lid}}L}{\mu/\rho_0},
\end{equation*}

The computations are performed at $Re=1000$ with polynomial degree $k=3$. A $100\times100$ nonuniform grid is used, as shown in Fig.~\ref{fig:cavity_grid}. To examine the sensitivity to the artificial sound speed in the weakly compressible model, three values are considered: $c_0=5~\mathrm{m\,s^{-1}}$, $10~\mathrm{m\,s^{-1}}$, and $20~\mathrm{m\,s^{-1}}$.

\begin{figure}[htbp]
    \centering
    \includegraphics[width=0.64\textwidth]{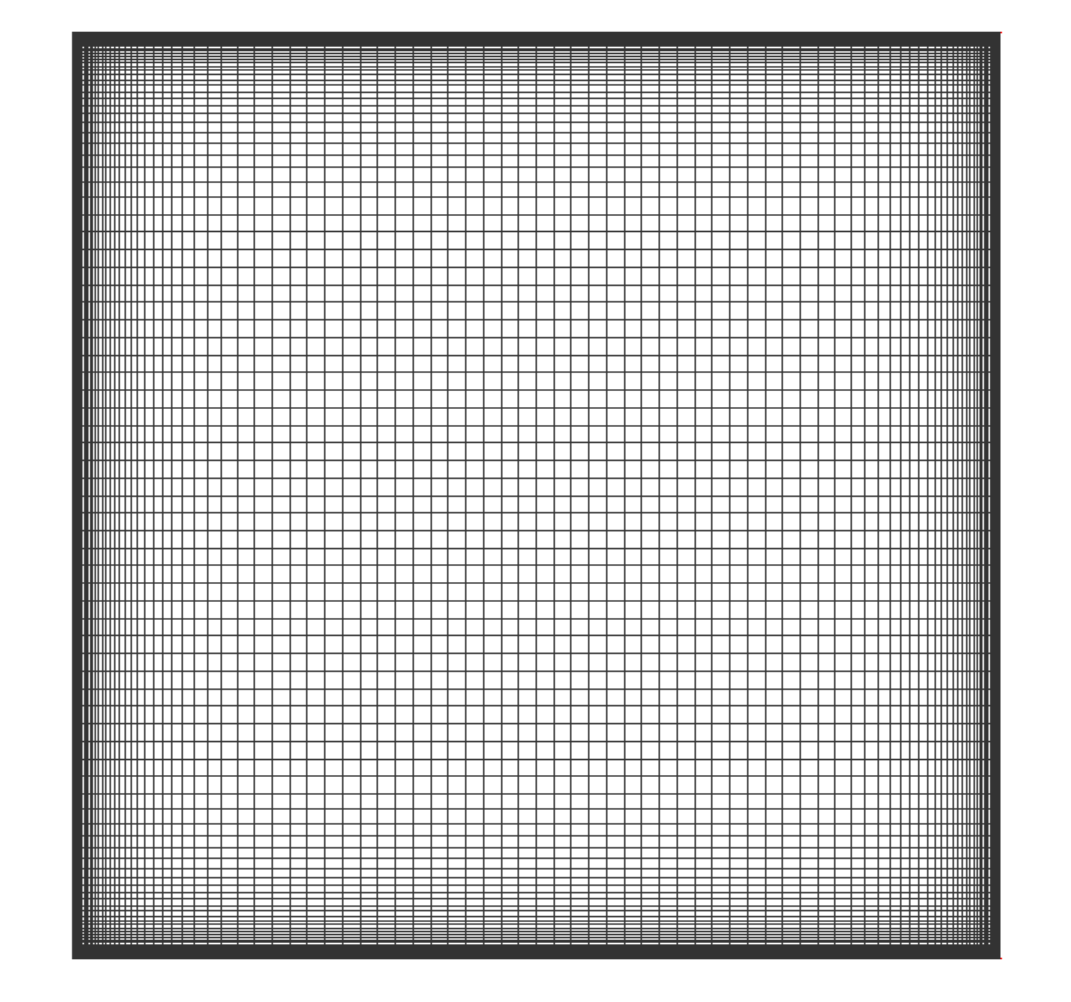}
    \caption{Computational grid for the lid-driven cavity calculation; coordinate lengths are in meters.}
    \label{fig:cavity_grid}
\end{figure}

The main post-processing results for the representative case with $c_0=10~\mathrm{m\,s^{-1}}$ are shown first in Fig.~\ref{fig:cavity_streamlines}. The streamline plot colored by velocity magnitude clearly identifies the primary recirculating vortex and the two lower-corner secondary vortices. The pressure contour shows the high-pressure region near the upper-right corner and the corresponding low-pressure region near the upper-left corner, while the vorticity contour highlights the thin shear layers generated by the moving lid and no-slip side walls.

\begin{figure}[htbp]
    \centering
    \subfigure[Streamlines and velocity magnitude ($\mathrm{m\,s^{-1}}$)]{
        \includegraphics[width=0.31\textwidth]{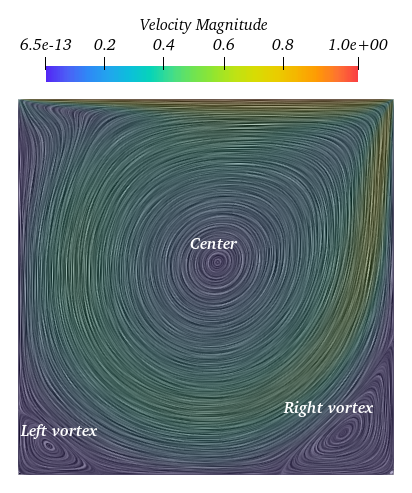}
    }
    \hfill
    \subfigure[Dimensionless pressure contour]{
        \includegraphics[width=0.31\textwidth]{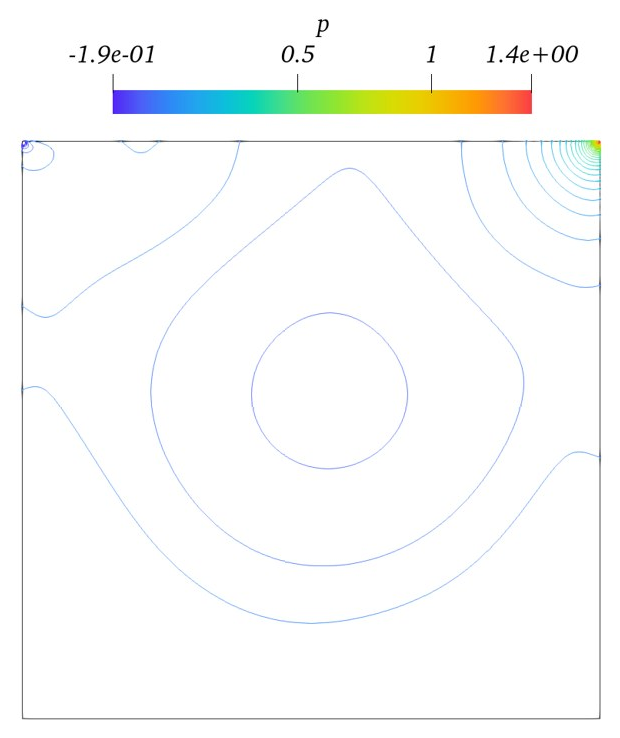}
    }
    \hfill
    \subfigure[Vorticity contour ($\mathrm{s^{-1}}$)]{
        \includegraphics[width=0.31\textwidth]{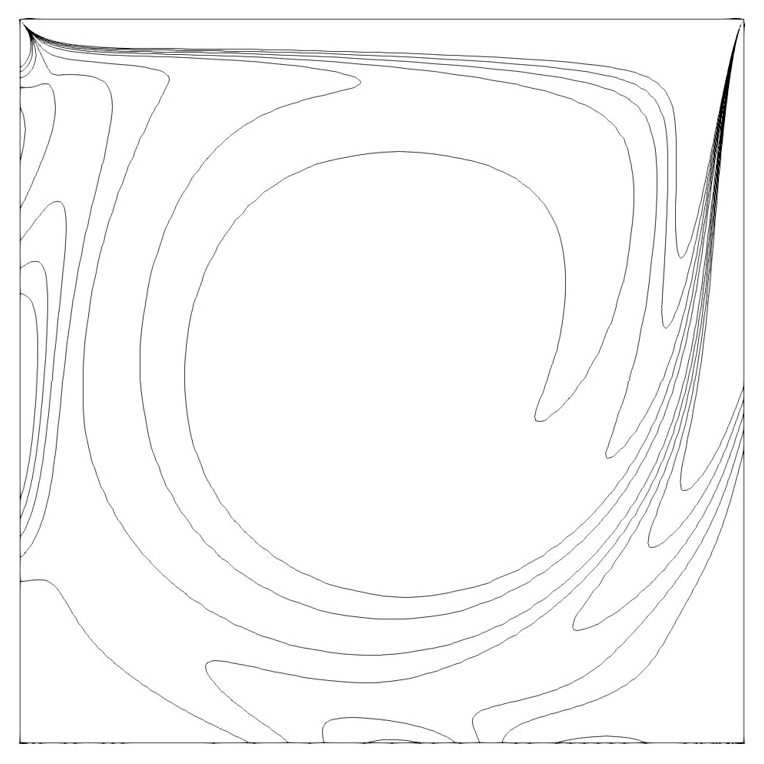}
    }
    \caption{Post-processed flow fields for the lid-driven cavity at $Re=1000$ with $c_0=10~\mathrm{m\,s^{-1}}$.}
    \label{fig:cavity_streamlines}
\end{figure}

For quantitative validation, Fig.~\ref{fig:cavity_centerline} compares the horizontal velocity along the vertical centerline and the vertical velocity along the horizontal centerline with benchmark data from Ghia et al.~\cite{GhiaGhiaShin1982} and Botella and Peyret~\cite{BotellaPeyret1998}. The plotted velocities are normalized by $U_{\mathrm{lid}}$; since $U_{\mathrm{lid}}=1.0~\mathrm{m\,s^{-1}}$, the numerical values are identical to the corresponding SI velocities in $\mathrm{m\,s^{-1}}$. The three values of $c_0$ give very similar velocity profiles on the global scale, and the agreement with the benchmark profiles is close near the extrema and near-wall turning regions.

\begin{figure}[htbp]
    \centering
    \subfigure[$u$ on the vertical centerline]{
        \includegraphics[width=0.47\textwidth]{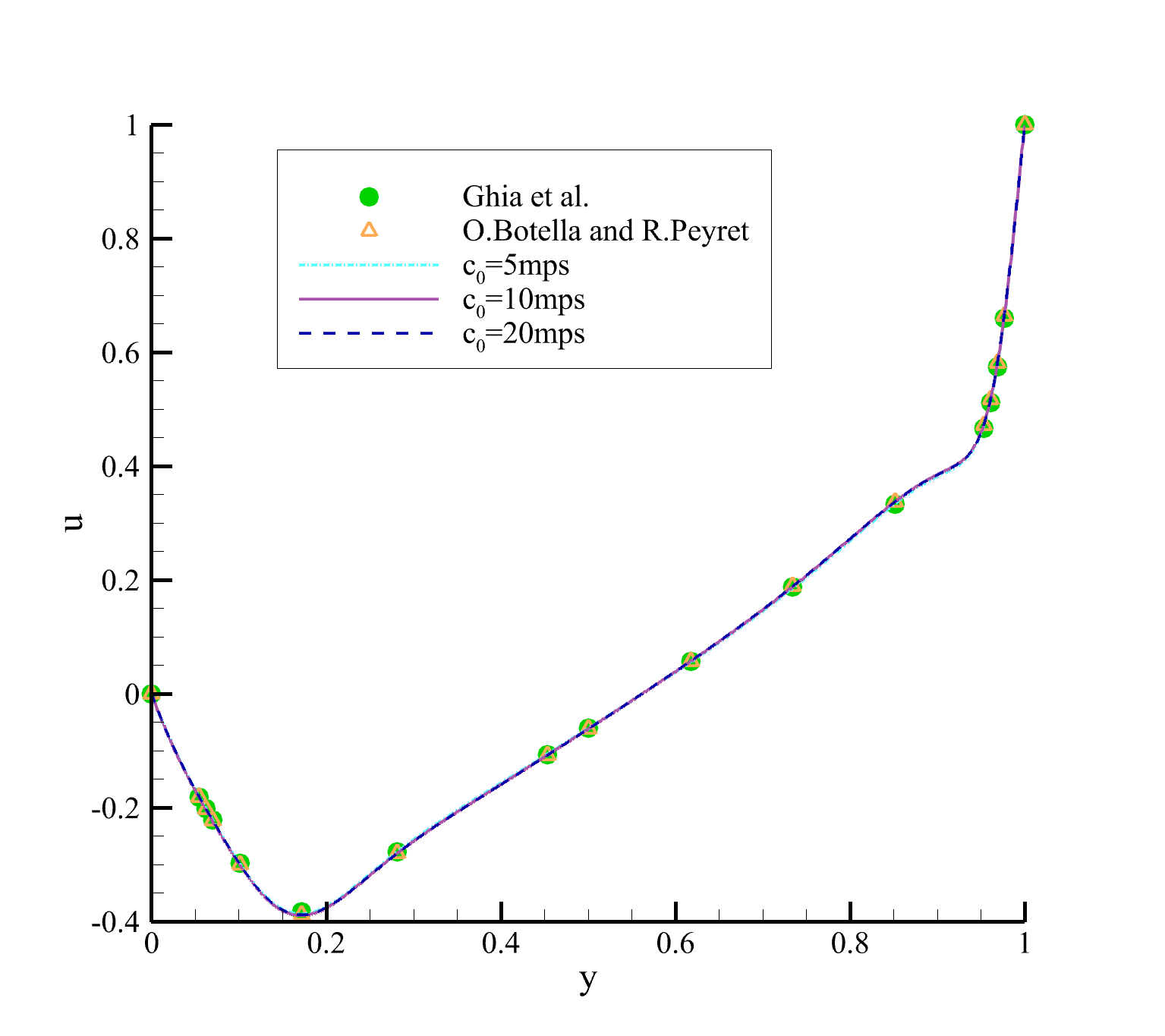}
    }
    \hfill
    \subfigure[$v$ on the horizontal centerline]{
        \includegraphics[width=0.47\textwidth]{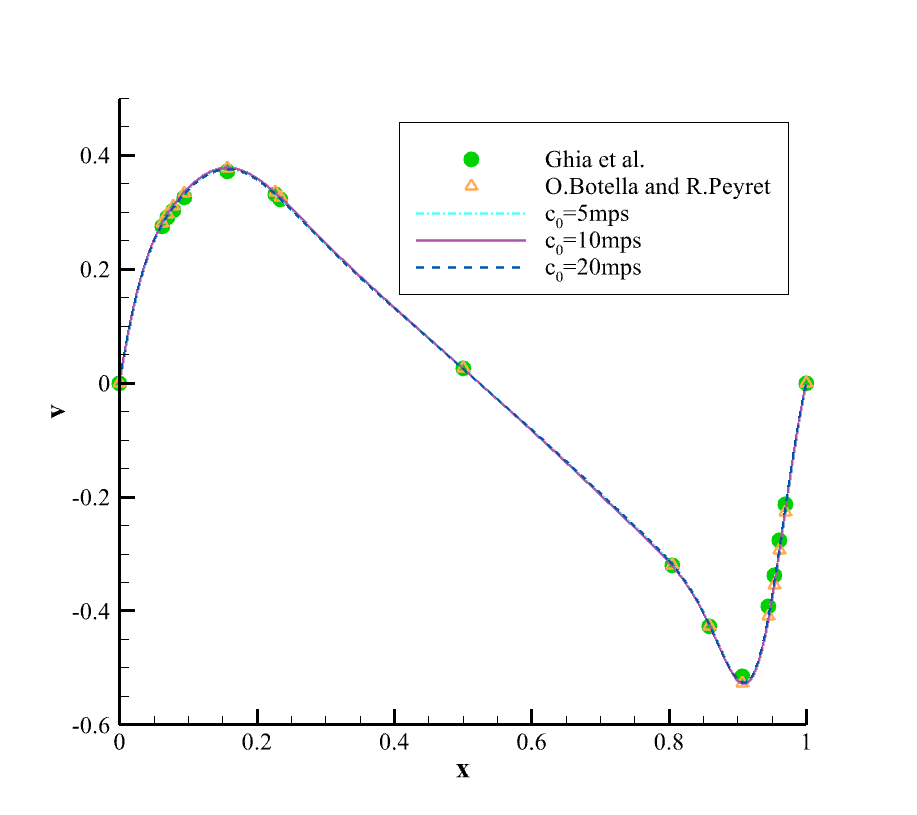}
    }
    \caption{Centerline velocity profiles normalized by $U_{\mathrm{lid}}$ for the lid-driven cavity at $Re=1000$.}
    \label{fig:cavity_centerline}
\end{figure}

The pressure profiles provide a more sensitive view of the influence of $c_0$. Figure~\ref{fig:cavity_pressure_profiles} plots the dimensionless pressure along both centerlines and includes local enlarged views. Although the full-scale curves remain close, the enlarged plots show visible differences among $c_0=5~\mathrm{m\,s^{-1}}$, $10~\mathrm{m\,s^{-1}}$, and $20~\mathrm{m\,s^{-1}}$. The $c_0=5~\mathrm{m\,s^{-1}}$ curve shows the largest deviation in the local pressure extrema, whereas the $c_0=10~\mathrm{m\,s^{-1}}$ and $20~\mathrm{m\,s^{-1}}$ curves are closer to each other and to the Botella--Peyret reference data. This behavior is consistent with the role of the artificial sound speed in the weakly compressible equation of state: the velocity solution is already well converged for all three values, while the pressure field is more sensitive to the compressibility parameter.

\begin{figure}[htbp]
    \centering
    \subfigure[Vertical centerline]{
        \includegraphics[width=0.42\textwidth]{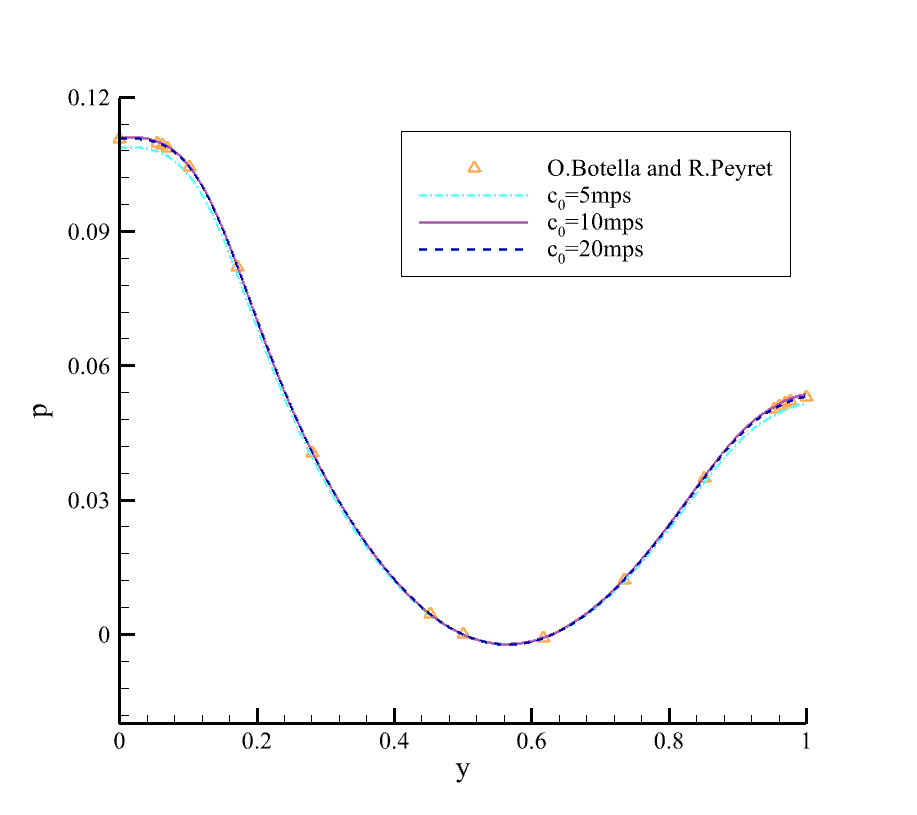}
    }
    \hfill
    \subfigure[Horizontal centerline]{
        \includegraphics[width=0.42\textwidth]{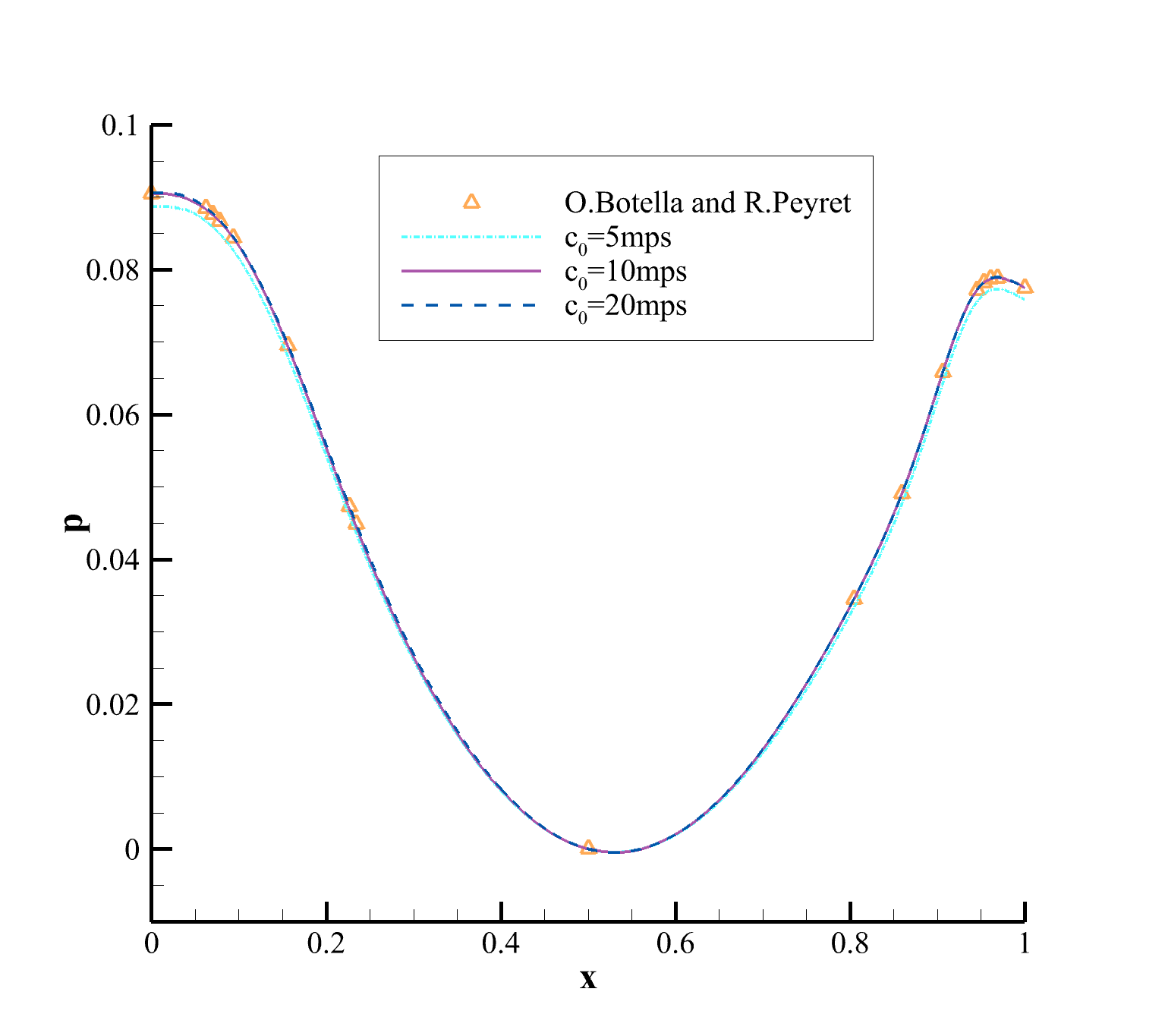}
    }
    
    \vspace{0.2cm}

    \subfigure[Vertical centerline, zoom I]{
        \includegraphics[width=0.42\textwidth]{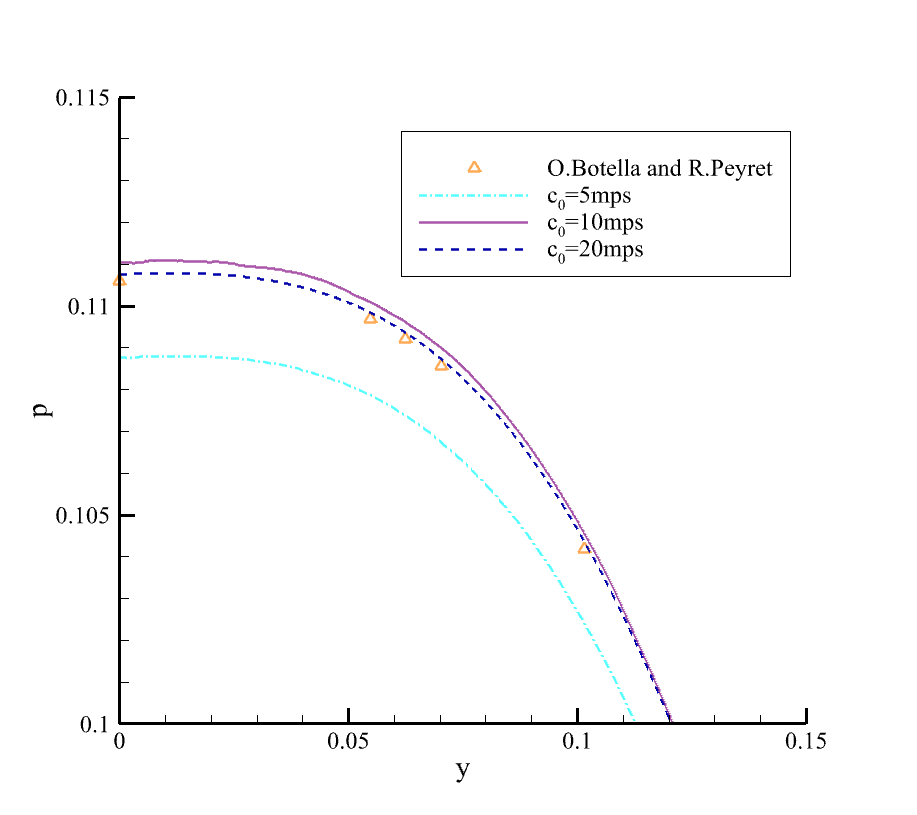}
    }
    \hfill
    \subfigure[Horizontal centerline, zoom I]{
        \includegraphics[width=0.42\textwidth]{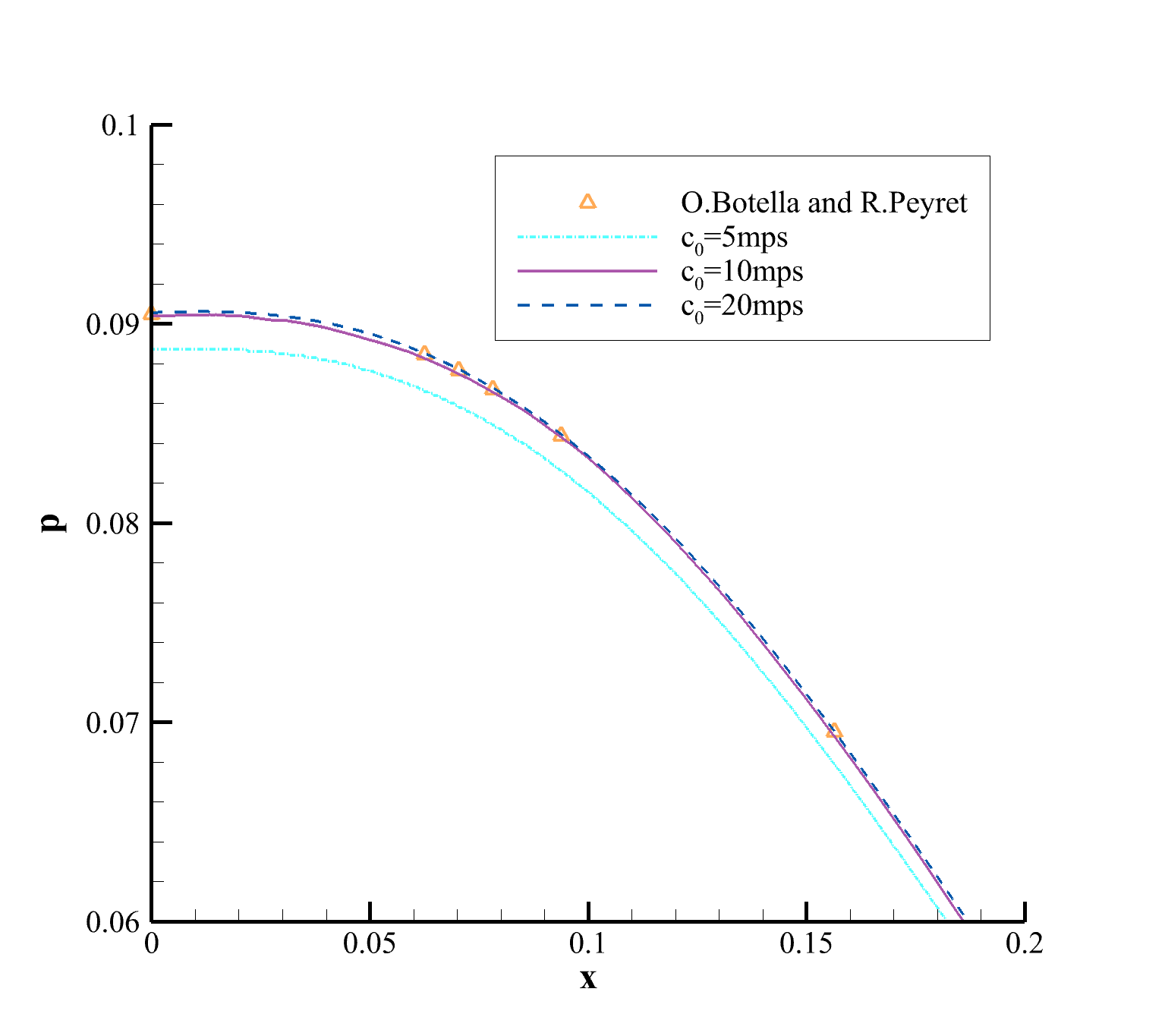}
    }

    \vspace{0.2cm}

    \subfigure[Vertical centerline, zoom II]{
        \includegraphics[width=0.42\textwidth]{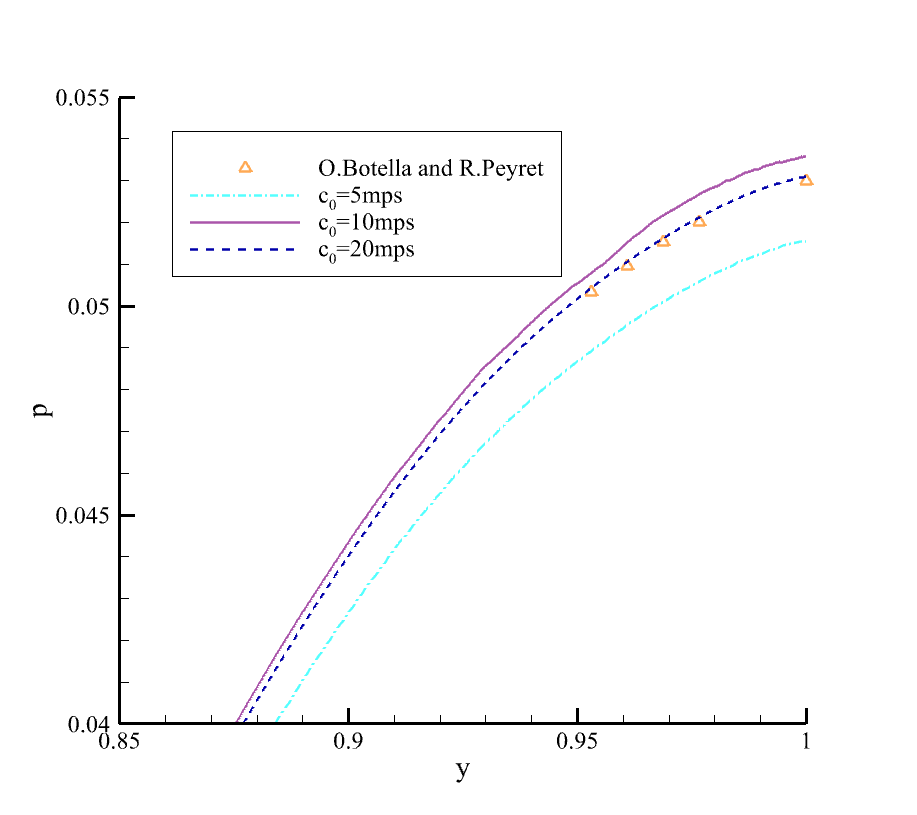}
    }
    \hfill
    \subfigure[Horizontal centerline, zoom II]{
        \includegraphics[width=0.42\textwidth]{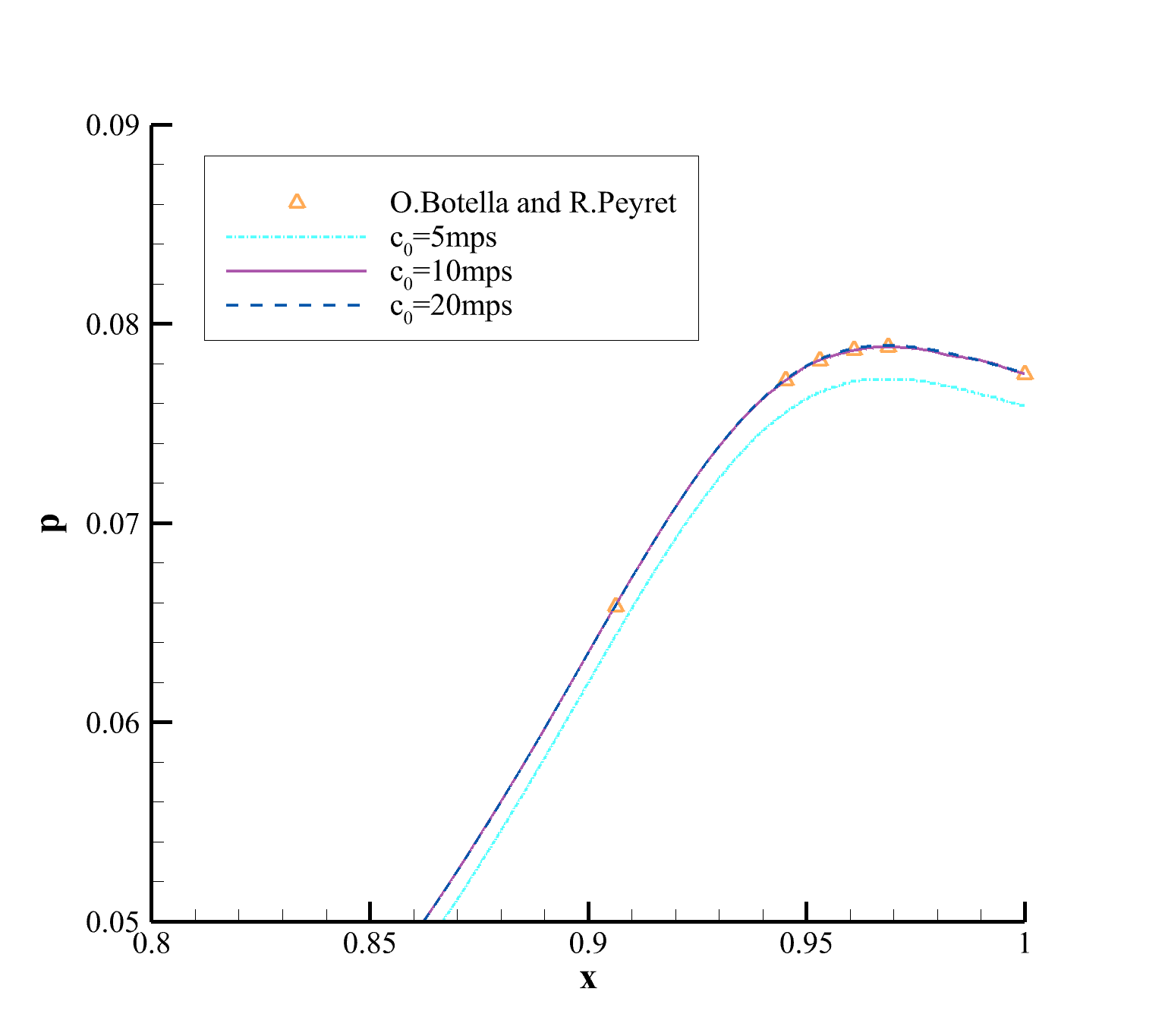}
    }
    \caption{Dimensionless centerline pressure profiles and local enlarged views for the lid-driven cavity at $Re=1000$.}
    \label{fig:cavity_pressure_profiles}
\end{figure}

Table~\ref{tab:cavity_extrema} lists the velocity extrema extracted from the centerline profiles and compares them with representative benchmark values from the literature. The velocity extrema are normalized by $U_{\mathrm{lid}}$. Compared with the spectral benchmark of Botella and Peyret, the $c_0=10~\mathrm{m\,s^{-1}}$ and $20~\mathrm{m\,s^{-1}}$ results reproduce the velocity extrema closely, while the $c_0=5~\mathrm{m\,s^{-1}}$ result remains within the spread of the reference data.

\begin{table}[htbp]
\centering
\caption{Velocity extrema normalized by $U_{\mathrm{lid}}$ on the cavity centerlines at $Re=1000$.}
\label{tab:cavity_extrema}
\resizebox{0.9\textwidth}{!}{%
\begin{tabular}{l c c c c}
\toprule
Reference & Grid & $u_{\min}/U_{\mathrm{lid}}$ & $v_{\max}/U_{\mathrm{lid}}$ & $v_{\min}/U_{\mathrm{lid}}$ \\
\midrule
Present, $c_0=5~\mathrm{m\,s^{-1}}$  & $100\times100$ & -0.386911 & 0.375245 & -0.525427 \\
Present, $c_0=10~\mathrm{m\,s^{-1}}$ & $100\times100$ & -0.389368 & 0.377974 & -0.527792 \\
Present, $c_0=20~\mathrm{m\,s^{-1}}$ & $100\times100$ & -0.388598 & 0.377014 & -0.527080 \\
Botella and Peyret~\cite{BotellaPeyret1998} & $N=48$ & -0.388527 & 0.376899 & -0.527017 \\
Botella and Peyret~\cite{BotellaPeyret1998} & $N=64$ & -0.388570 & 0.376944 & -0.527076 \\
Botella and Peyret~\cite{BotellaPeyret1998} & $N=96$ & -0.388570 & 0.376945 & -0.527077 \\
Botella and Peyret~\cite{BotellaPeyret1998} & $N=128$ & -0.388570 & 0.376945 & -0.527077 \\
Botella and Peyret~\cite{BotellaPeyret1998} & $N=160$ & -0.388570 & 0.376945 & -0.527077 \\
Deng et al., CPI~\cite{DengPiquetQueuteyVisonneau1994} & $64\times64$ & -0.37436 & 0.36364 & -0.51015 \\
Deng et al., CPI~\cite{DengPiquetQueuteyVisonneau1994} & $96\times96$ & -0.38233 & 0.37109 & -0.51947 \\
Deng et al., CPI~\cite{DengPiquetQueuteyVisonneau1994} & $128\times128$ & -0.38511 & 0.37369 & -0.52280 \\
Deng et al., CPI~\cite{DengPiquetQueuteyVisonneau1994} & extrap. & -0.38867 & 0.37702 & -0.52724 \\
Deng et al., staggered~\cite{DengPiquetQueuteyVisonneau1994} & $64\times64$ & -0.375726 & 0.34556 & -0.48858 \\
Deng et al., staggered~\cite{DengPiquetQueuteyVisonneau1994} & $96\times96$ & -0.37441 & 0.36271 & -0.50982 \\
Deng et al., staggered~\cite{DengPiquetQueuteyVisonneau1994} & $128\times128$ & -0.38050 & 0.36884 & -0.51727 \\
Deng et al., staggered~\cite{DengPiquetQueuteyVisonneau1994} & extrap. & -0.38855 & 0.37705 & -0.52690 \\
Ghia et al.~\cite{GhiaGhiaShin1982} & $129\times129$ & -0.38289 & 0.37095 & -0.51550 \\
Bruneau and Jouron~\cite{BruneauJouron1990} & $256\times256$ & -0.3764 & 0.3665 & -0.5208 \\
Vanka~\cite{Vanka1986} & $321\times321$ & -0.387 & -- & -- \\
\bottomrule
\end{tabular}
}
\end{table}

The instantaneous element time-step distributions in Fig.~\ref{fig:cavity_dt} illustrate the local time-stepping behavior for the three artificial sound speeds, with the local time step reported in seconds. For all three cases, smaller time steps are concentrated near the walls and corner regions where viscous layers and stronger gradients appear, whereas the interior of the cavity permits larger local steps. The distribution also reflects the dependence of the acoustic time scale on $c_0$: increasing $c_0$ reduces the admissible local step, so the $c_0=20~\mathrm{m\,s^{-1}}$ case is globally more restrictive than the $c_0=5~\mathrm{m\,s^{-1}}$ case. The spatial pattern, however, remains consistent, confirming that the local time-step allocation follows the flow and mesh constraints rather than introducing irregular numerical artifacts.

\begin{figure}[htbp]
    \centering
    \begin{minipage}[t]{0.32\textwidth}
        \centering
        \includegraphics[width=\linewidth]{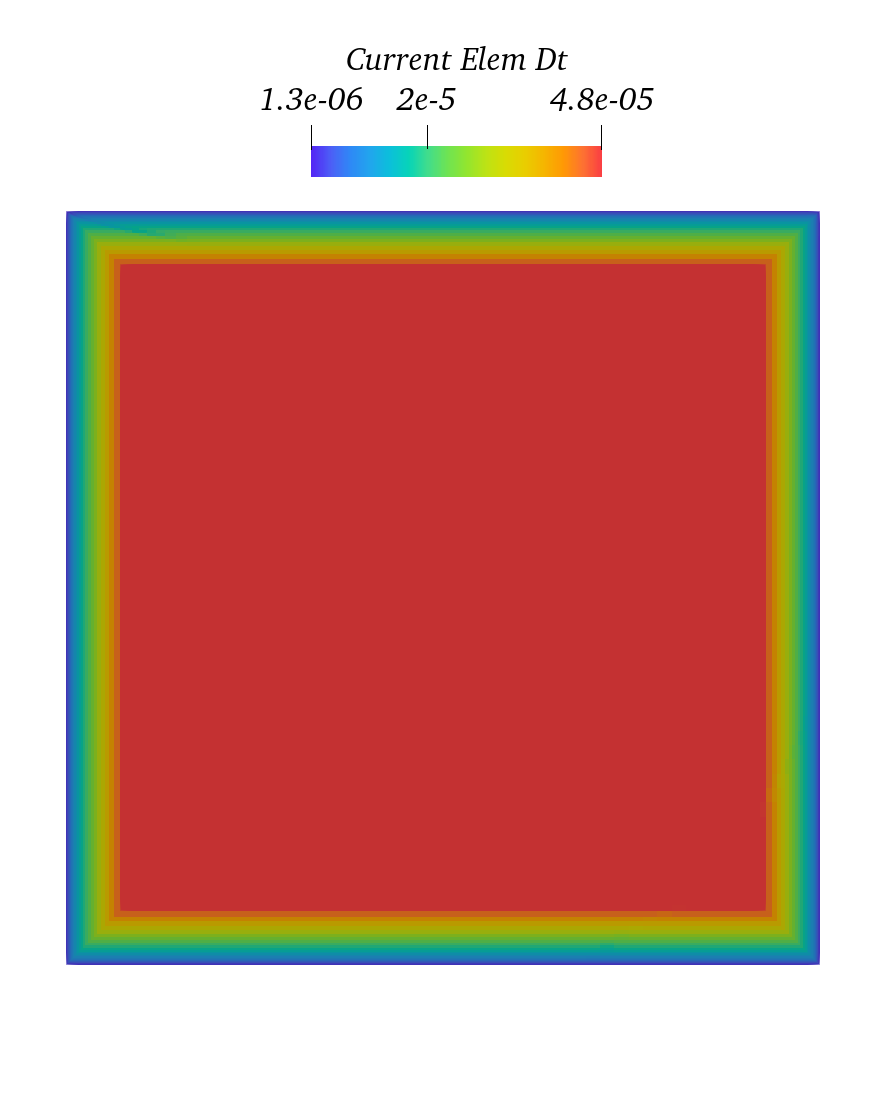}\\[-1mm]
        {\footnotesize (a) $c_0=5~\mathrm{m\,s^{-1}}$}
    \end{minipage}
    \hfill
    \begin{minipage}[t]{0.32\textwidth}
        \centering
        \includegraphics[width=\linewidth]{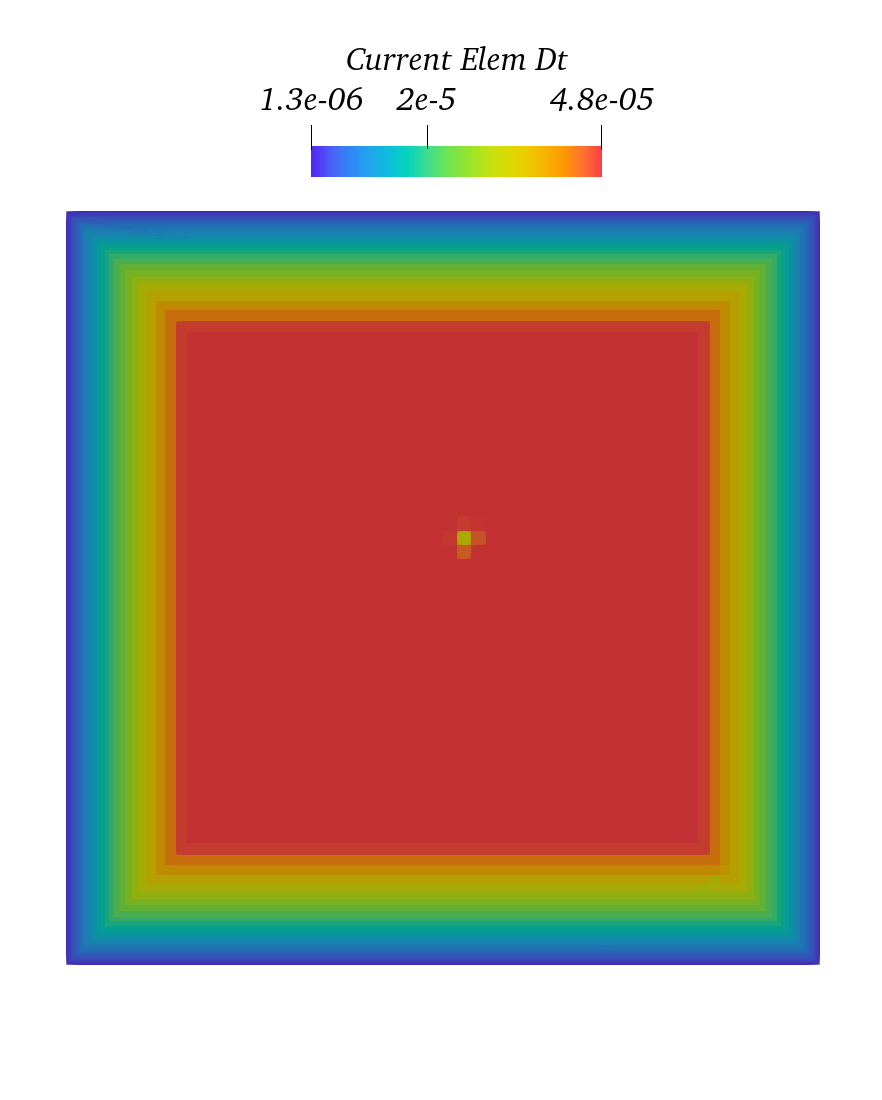}\\[-1mm]
        {\footnotesize (b) $c_0=10~\mathrm{m\,s^{-1}}$}
    \end{minipage}
    \hfill
    \begin{minipage}[t]{0.32\textwidth}
        \centering
        \includegraphics[width=\linewidth]{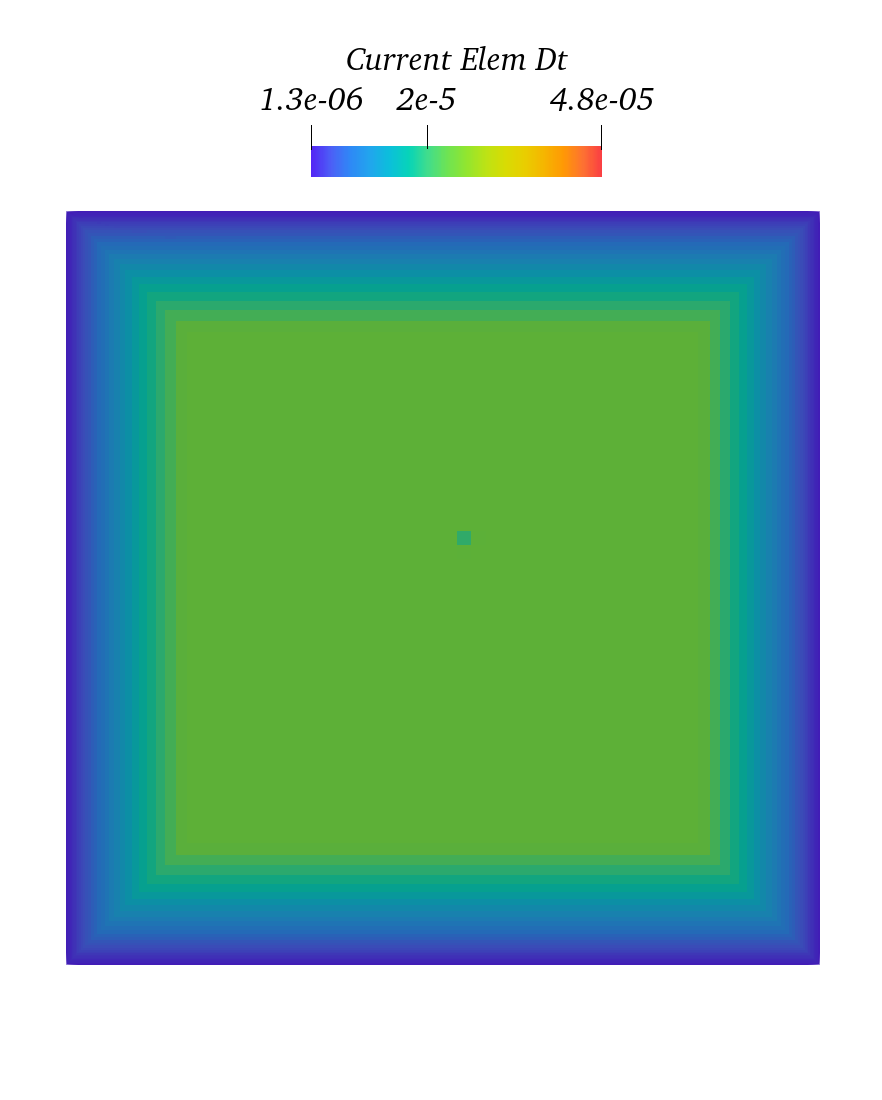}\\[-1mm]
        {\footnotesize (c) $c_0=20~\mathrm{m\,s^{-1}}$}
    \end{minipage}
    \caption{Instantaneous element time-step distributions, in seconds, for the lid-driven cavity calculation.}
    \label{fig:cavity_dt}
\end{figure}

Overall, the cavity-flow computation confirms that the weakly compressible DG formulation with the proposed local time-stepping treatment can reproduce the canonical $Re=1000$ cavity solution, including the primary recirculation, secondary corner vortices, centerline velocity extrema, pressure variation. The velocity profiles are already nearly insensitive to the tested values of $c_0$, while the pressure profiles and local time-step distributions retain visible and physically consistent differences among $c_0=5~\mathrm{m\,s^{-1}}$, $10~\mathrm{m\,s^{-1}}$, and $20~\mathrm{m\,s^{-1}}$.

\subsection{Three-dimensional 30P30N high-lift benchmark}
\label{subsec:30p30n}

Finally, the proposed method is evaluated using the 30P30N high-lift benchmark to validate its performance in complex geometries and unsteady low-speed environments. This classic three-element configuration (slat, main element, and flap), developed by McDonnell Douglas, is widely used to study slat noise \cite{choudhari2015assessment,zhang2026sweep,zhang2026multiple}. It provides a rigorous test for high-fidelity CFD methods due to its complex flow physics, characterized by strong shear-layer interactions and multiple separation zones \cite{zhang2024slat}.

The present study employs the modified 30P30N configuration developed by the Japan Aerospace Exploration Agency (JAXA) \cite{murayama2014experimental,murayama2018experimental}, as illustrated in Fig.~\ref{fig:30p30n}. The stowed chord length is $c_s=0.4572$ m. Both slat and flap are deflected at $30^\circ$ with chord lengths of $0.15c_s$ and $0.3c_s$, respectively. The computational setup follows the $3^{rd}$ AIAA Workshop on Benchmark Problems for Airframe Noise Computations (BANC-III) conditions in terms of inflow parameters \cite{choudhari2015assessment}. The free-stream Mach number, Reynolds number (based on $c_s$), and incidence angle are set to $Ma = 0.17$, $Re = 1.71 \times 10^6$, and $5.5^\circ$, respectively, corresponding to the nominal BANC-III case. The artificial sound speed is set to $c_0=341~\mathrm{m\,s^{-1}}$.

\begin{figure}[htbp]
\centering
\includegraphics[width=0.55\textwidth]{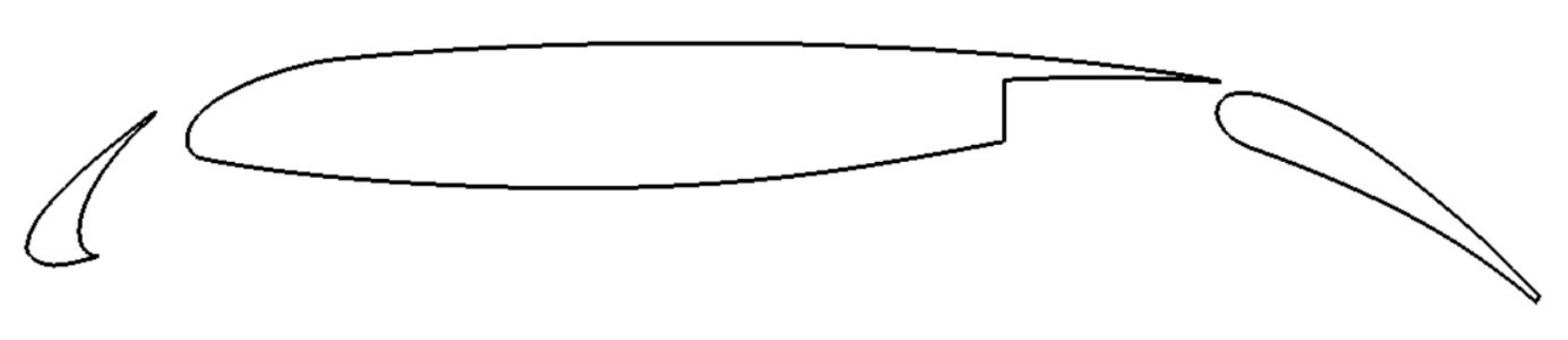}
\caption{Geometric profile of the JAXA-modified 30P30N high-lift airfoil.}
\label{fig:30p30n}
\end{figure}

The three-dimensional mesh contains 250,784 elements / 31.3 million degrees of freedom under polynomial degree $k=4$, with a boundary layer and geometric mesh growth rate of 1.3. Figure~\ref{fig:30p30n_mesh} illustrates the computational mesh and representative surface details. The same mesh from \cite{zhang2026sweep}, validated for FR method simulations, is utilized here. The computational domain extends $50c_s$ in both forward and vertical directions, reaching $60c_s$ aftward. The z-axis spans $1/9c_s$, following BANC-III guidelines \cite{choudhari2015assessment}. 

\begin{figure}[htbp]
    \centering
    \subfigure[Overview]{
        \includegraphics[width=0.45\textwidth]{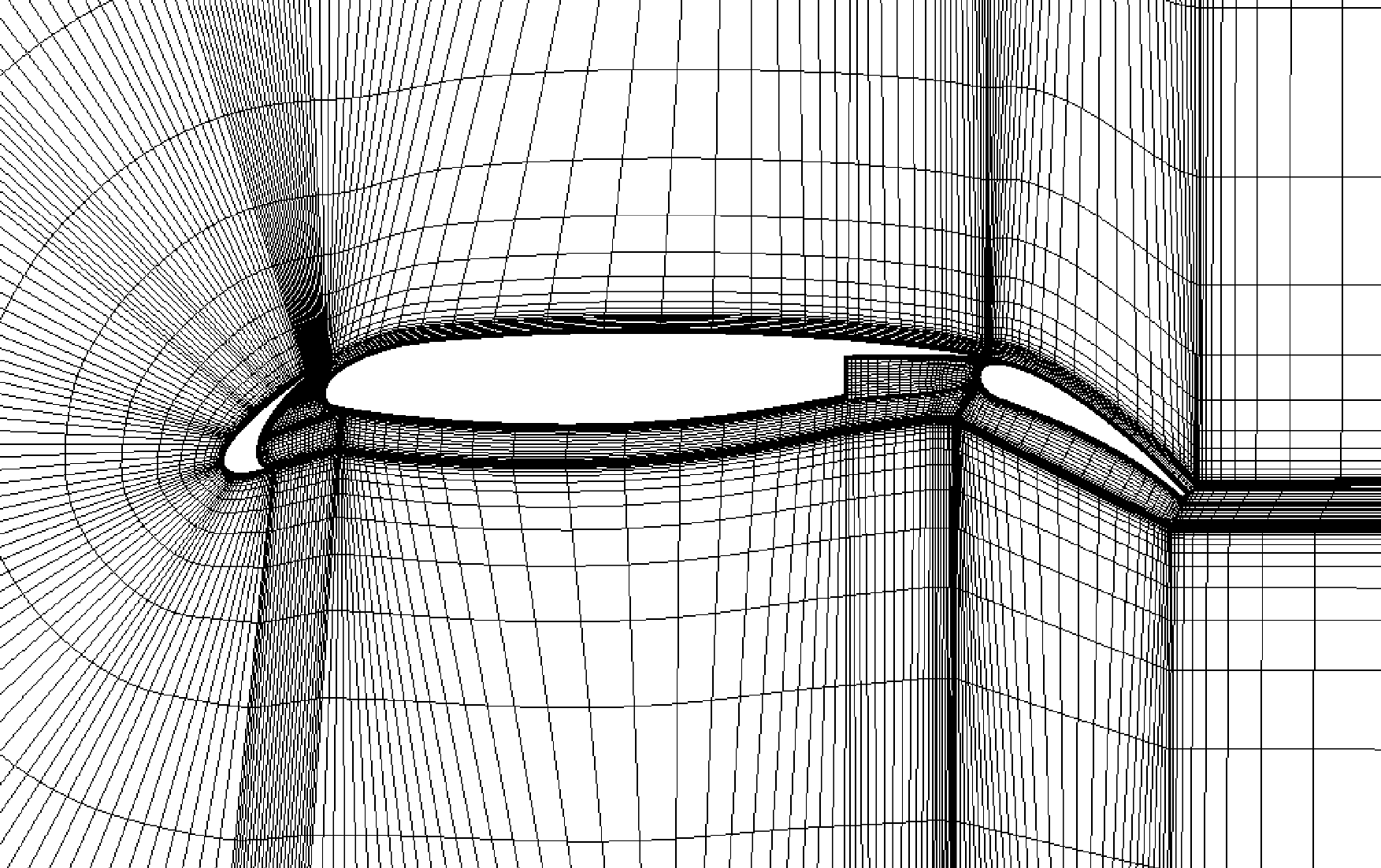}
    }
    \hfill
    \subfigure[Around the slat]{
        \includegraphics[width=0.45\textwidth]{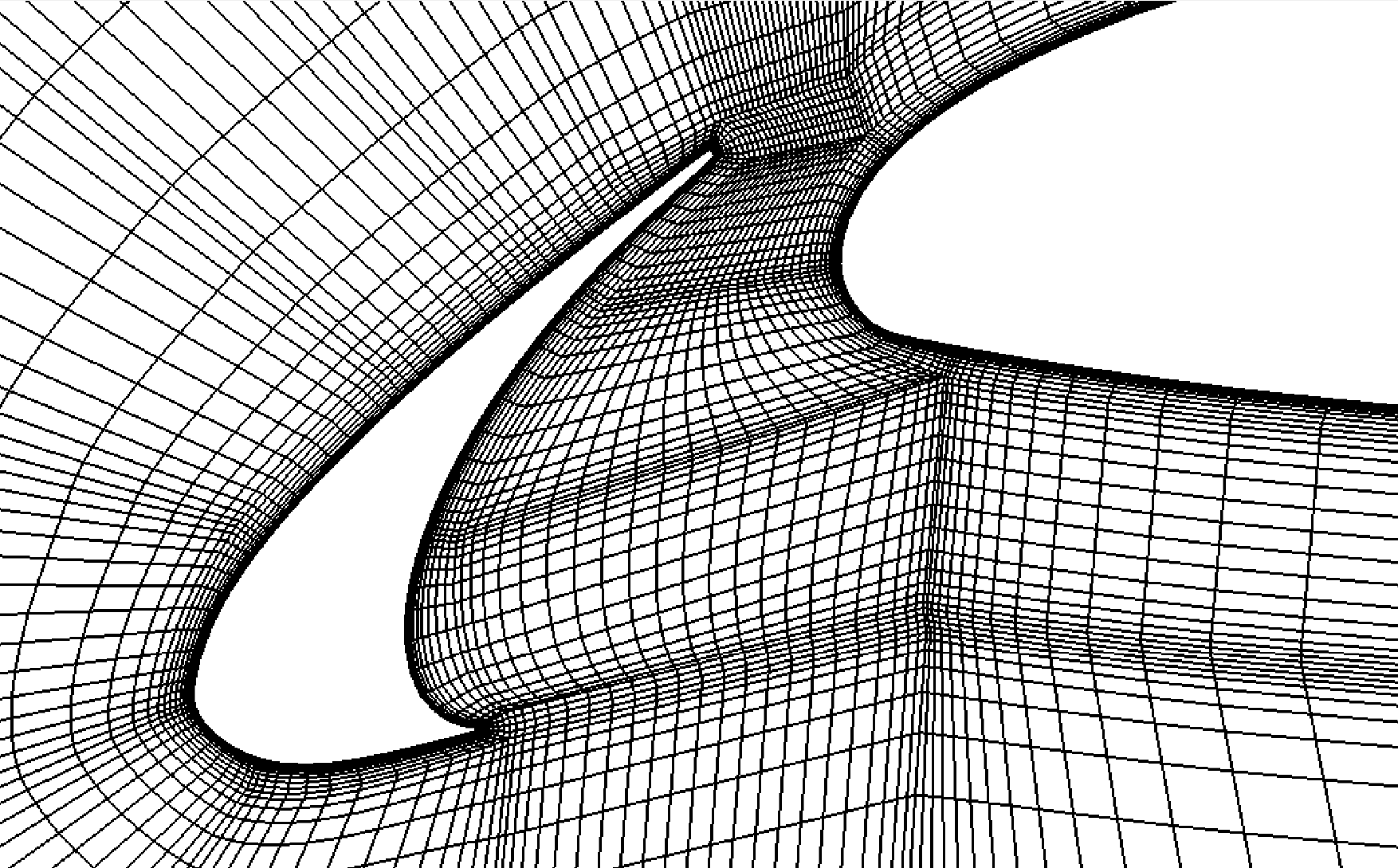}
    }
    
    \vspace{0.3cm}
    
    \subfigure[Around the main wing]{
        \includegraphics[width=0.45\textwidth]{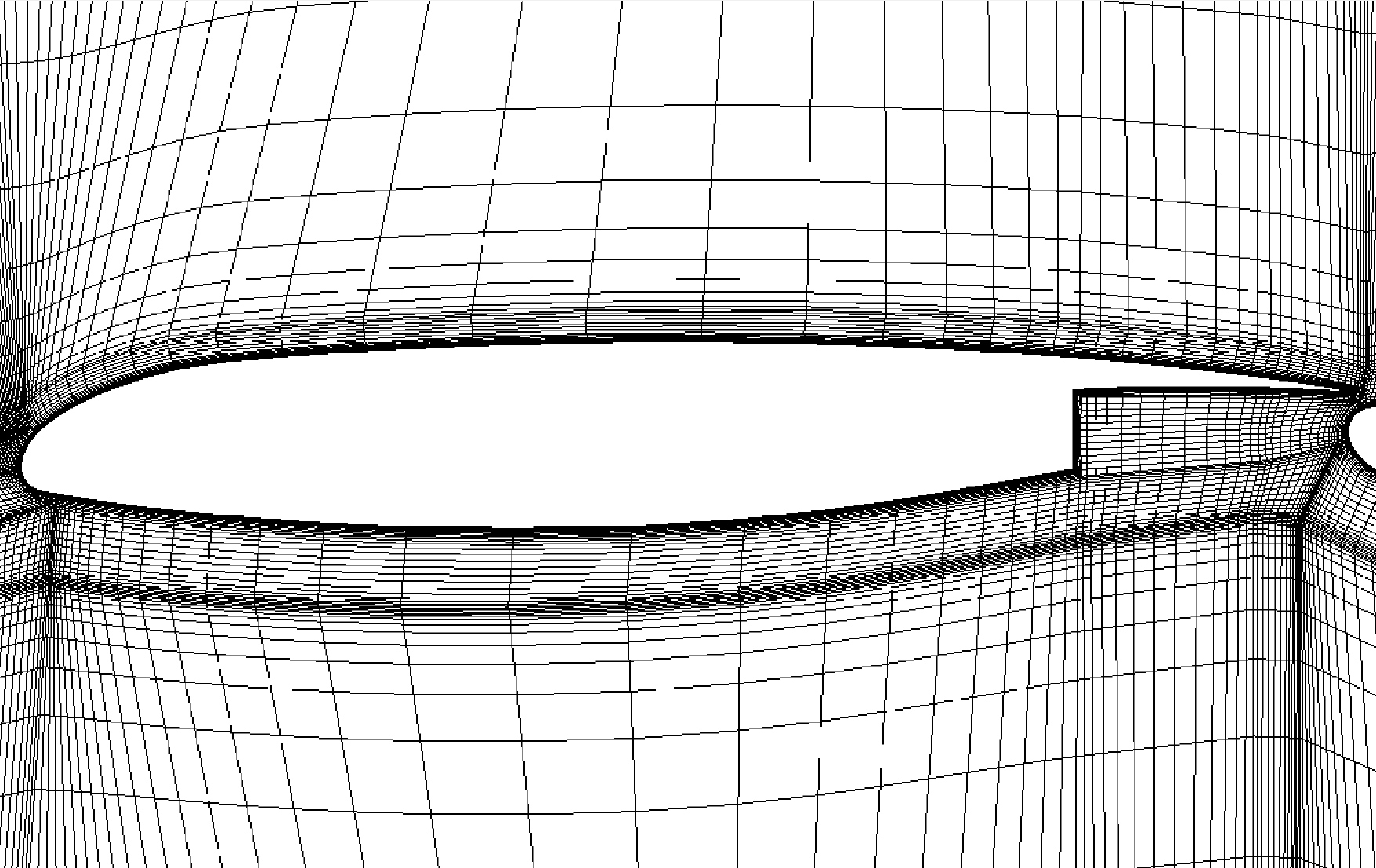}
    }
    \hfill
    \subfigure[Around the flap]{
        \includegraphics[width=0.45\textwidth]{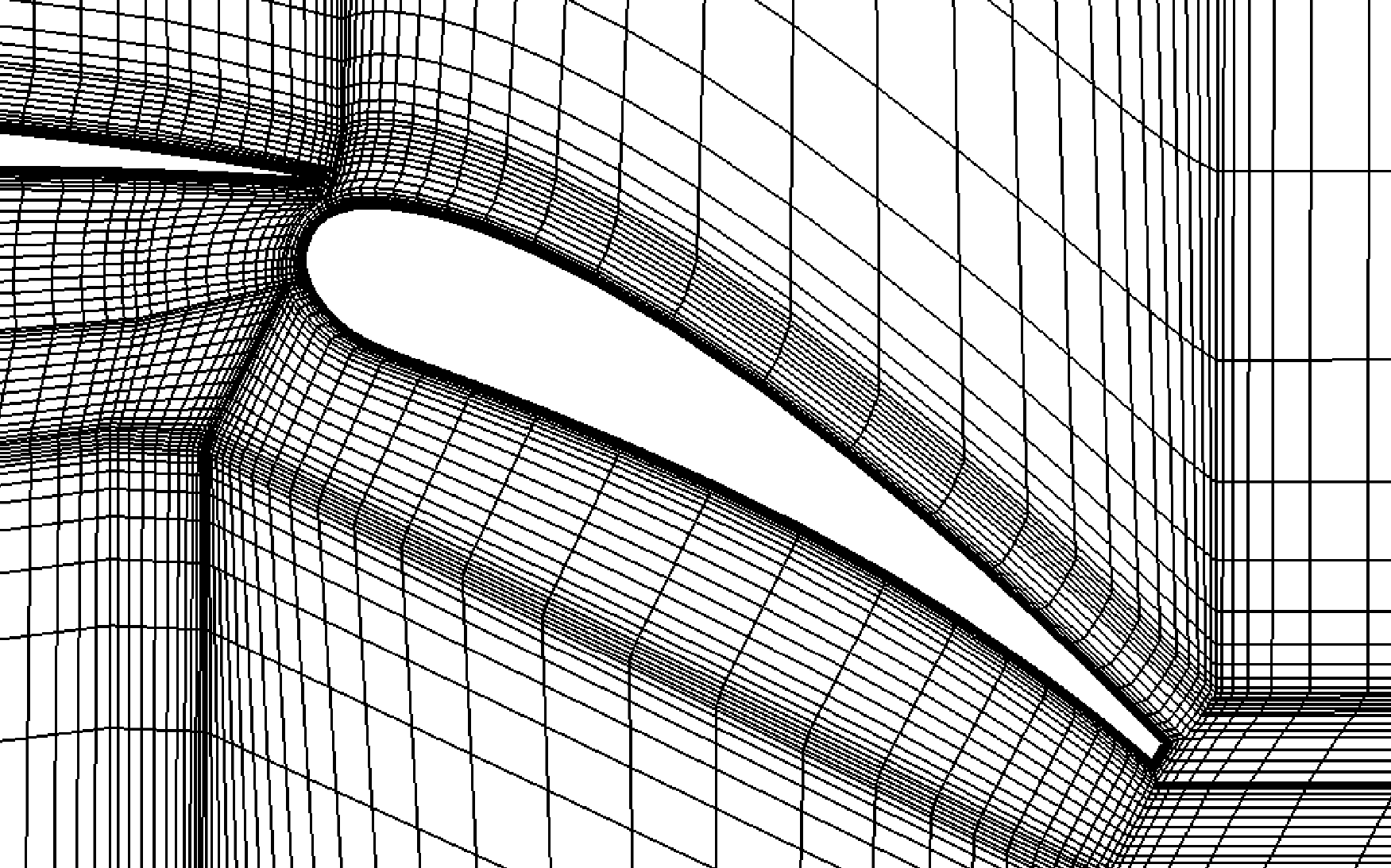}
    }
    \caption{Computational mesh of the 30P30N high-lift airfoil.}
    \label{fig:30p30n_mesh}
\end{figure}

Starting from a steady-RANS solution, the simulation is advanced until the flow reaches statistical convergence. The simulation spans a total of 30 flow pass times (FPTs, based on $c_s$), requiring 252 hours on four RTX 4090D GPUs. Samples are collected during the final 10 FPTs at 100 kHz. Spectra are obtained via the periodogram method with a Hanning window and 50\% overlap, giving a frequency resolution of 100 Hz.

Figure~\ref{fig:30p30n_Q} presents the instantaneous iso-surfaces of the $Q$ criterion within the slat cove region. The method successfully captures the main features of the multielement configuration: a shear layer originates at the slat cusp and transitions from a two-dimensional transverse vortex state to a three-dimensional vortical system, driven by Kelvin–Helmholtz instability; this system subsequently impinges on the slat lower surface, while a fraction of the large vortices breaks down into smaller structures within the slat cove. The majority are advected and stretched by the high-speed gap flow, eventually bursting towards the slat trailing edge.

\begin{figure}[htbp]
\centering
\includegraphics[width=0.55\textwidth]{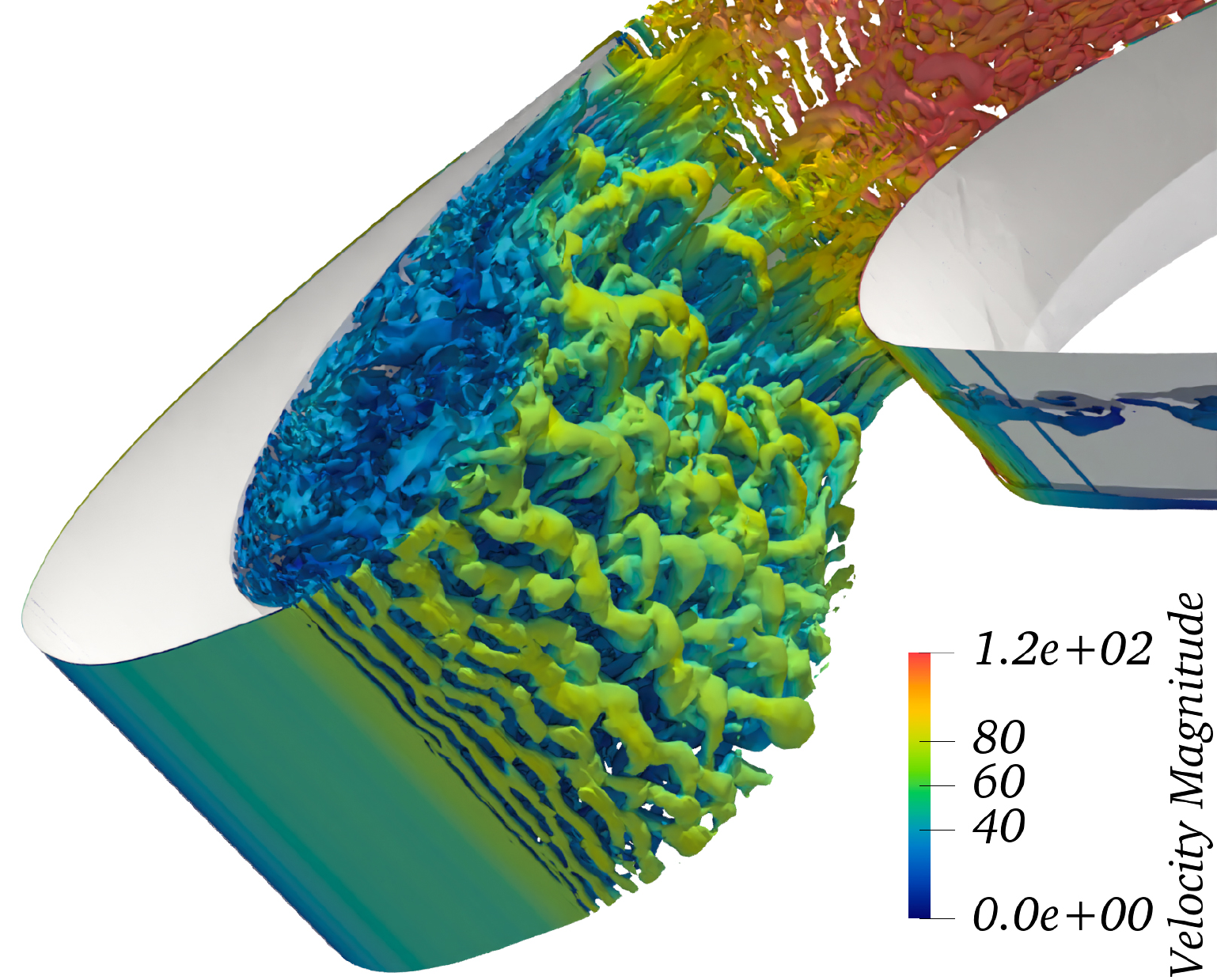}
\caption{Instantaneous Q-criterion iso-surface ($Q=2\times10^7$) in the slat cove region, colored by velocity magnitude.}
\label{fig:30p30n_Q}
\end{figure}

For quantitative validation, the surface pressure coefficient ($C_p$) distributions on the airfoil midplane are presented in Fig.~\ref{fig:30p30n_Cp} alongside experimental data. It should be noted that the angle of attack was corrected to account for wall interference inherent in the closed test section, due to the significant blockage effect induced by the high-lift devices; thus, the free-stream value of $5.5^\circ$ corresponds to $7^\circ$ and $10^\circ$ in the JAXA \cite{murayama2018experimental} and Florida State University (FSU) \cite{pascioni2016aeroacoustic} tests, respectively. The computed distributions are in good agreement with the measurements, although the suction pressures on the upper surface are slightly underestimated. The predicted lift coefficient $C_L = 2.654$ is consistent with the BANC-III workshop mean of 2.643 \cite{choudhari2015assessment,chen2021noise,xie2026noise}, with a relative difference of merely 0.42\%.

\begin{figure}[htbp]
    \centering
    \subfigure[All three elements]{
        \includegraphics[width=0.45\textwidth]{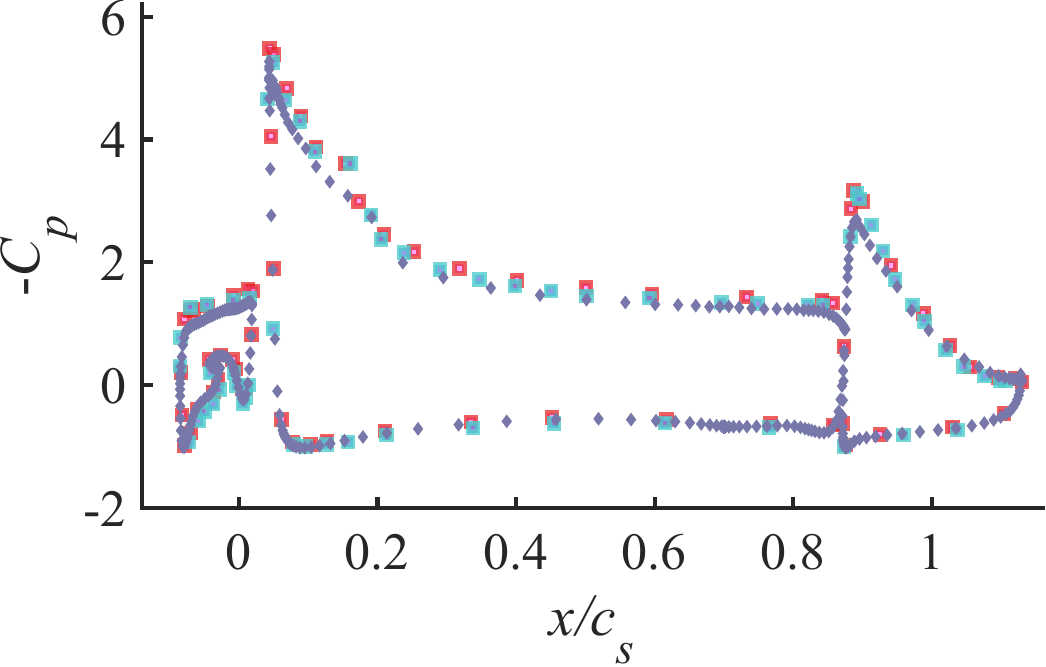}
    }
    \hfill
    \subfigure[Around the slat]{
        \includegraphics[width=0.45\textwidth]{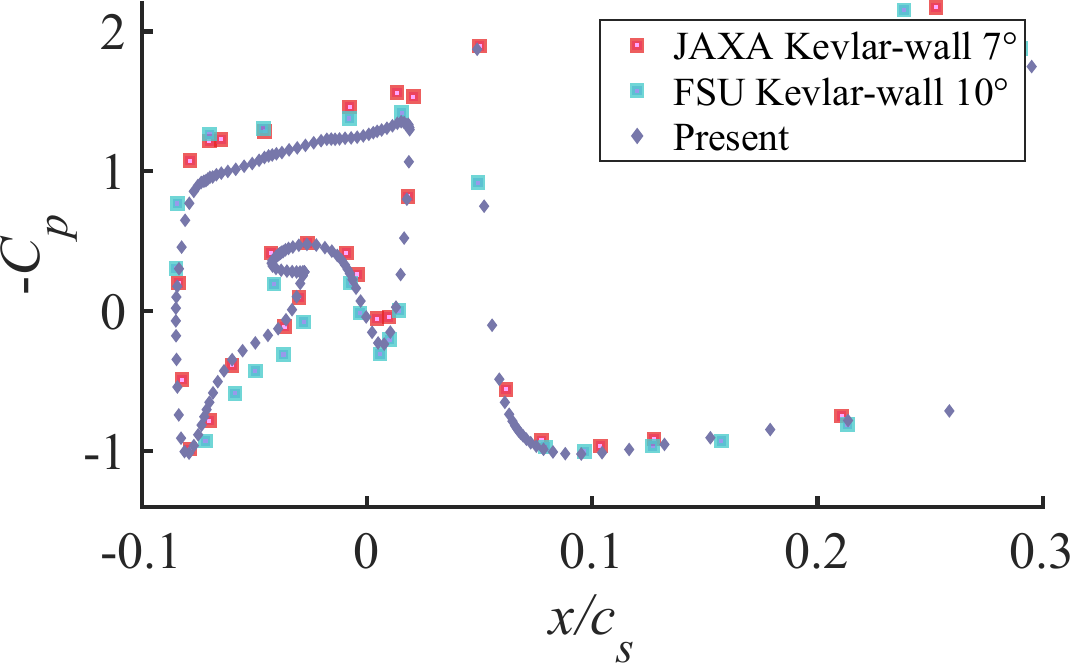}
    }
    \caption{Time-averaged $C_p$ distributions on the airfoil midplane.}
    \label{fig:30p30n_Cp}
\end{figure}

Figure~\ref{fig:probe} presents the wall surface pressure probes used in this study. Wall pressure spectra are compared with experimental datasets \cite{murayama2014experimental,murayama2018experimental,pascioni2014experimental,pascioni2016aeroacoustic} in Fig.~\ref{fig:30p30n_nf}. The numerical results reproduce the broadband spectrum characteristics at the S3 location. At the other three probe locations, the numerical simulation accurately captures the discrete tones in the mid-low frequency ranges, while the predicted broadband levels and hump amplitudes fall between the two experimental datasets. Overall, the comparison with the measured spectra is favorable.

\begin{figure}[htbp]
\centering
\includegraphics[width=0.5\textwidth]{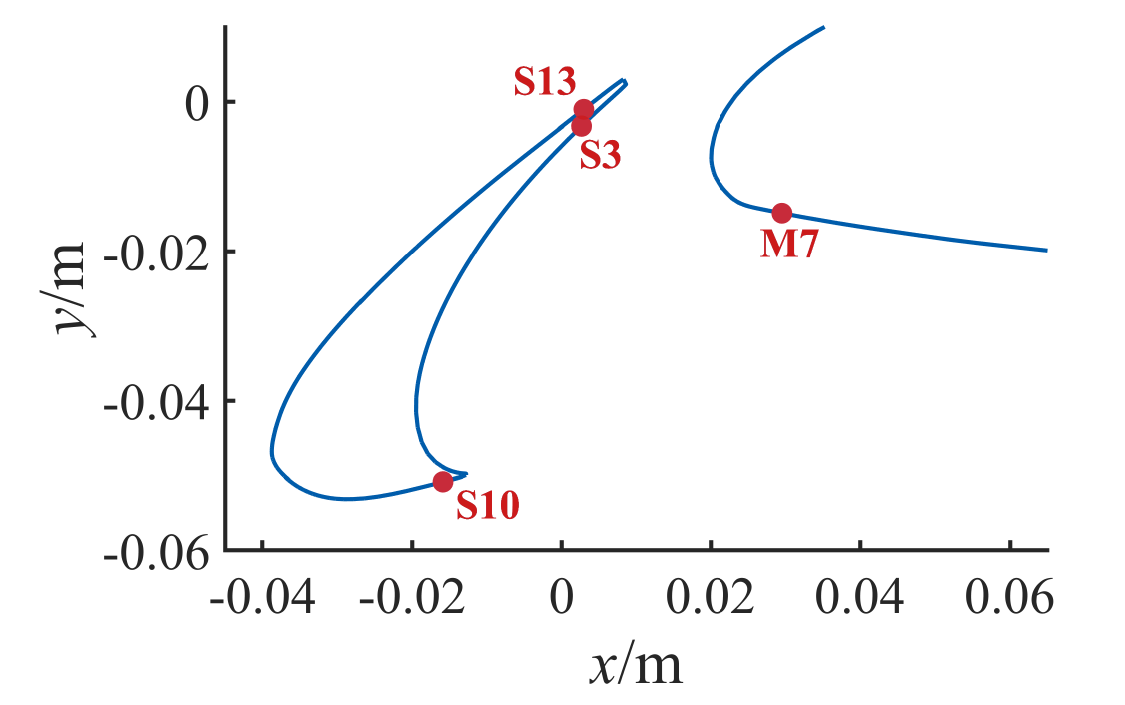}
\caption{Schematic of the wall surface pressure probes used in this study.}
\label{fig:probe}
\end{figure}

\begin{figure}[htbp]
    \centering
    \subfigure[S3]{
        \includegraphics[width=0.45\textwidth]{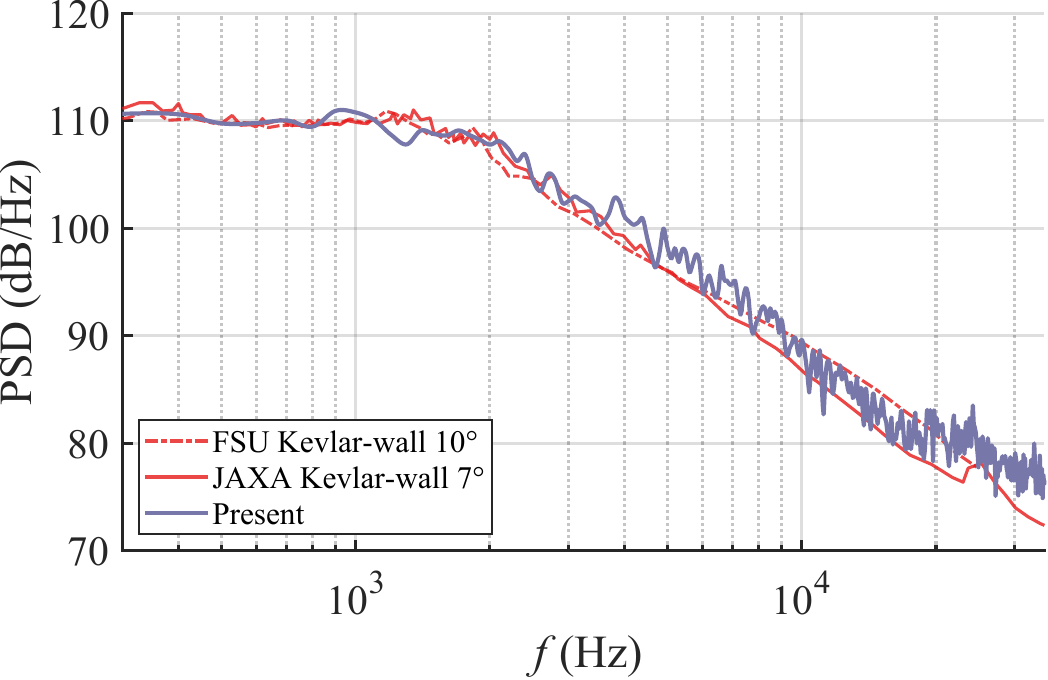}
    }
    \hfill
    \subfigure[S10]{
        \includegraphics[width=0.45\textwidth]{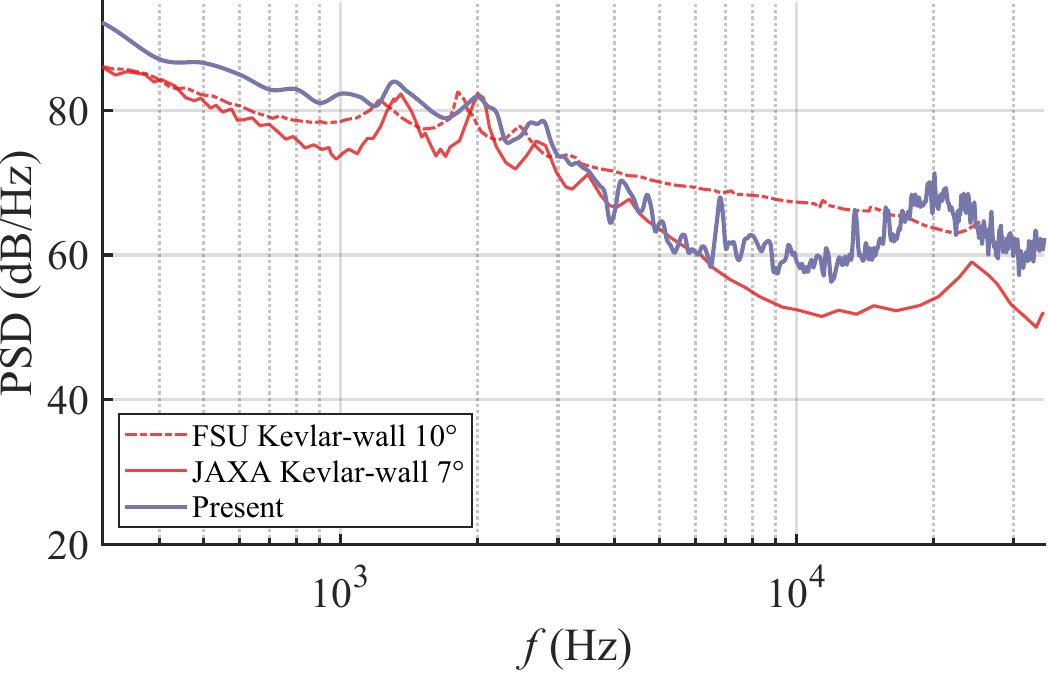}
    }
    
    \vspace{0.3cm}
    
    \subfigure[S13]{
        \includegraphics[width=0.45\textwidth]{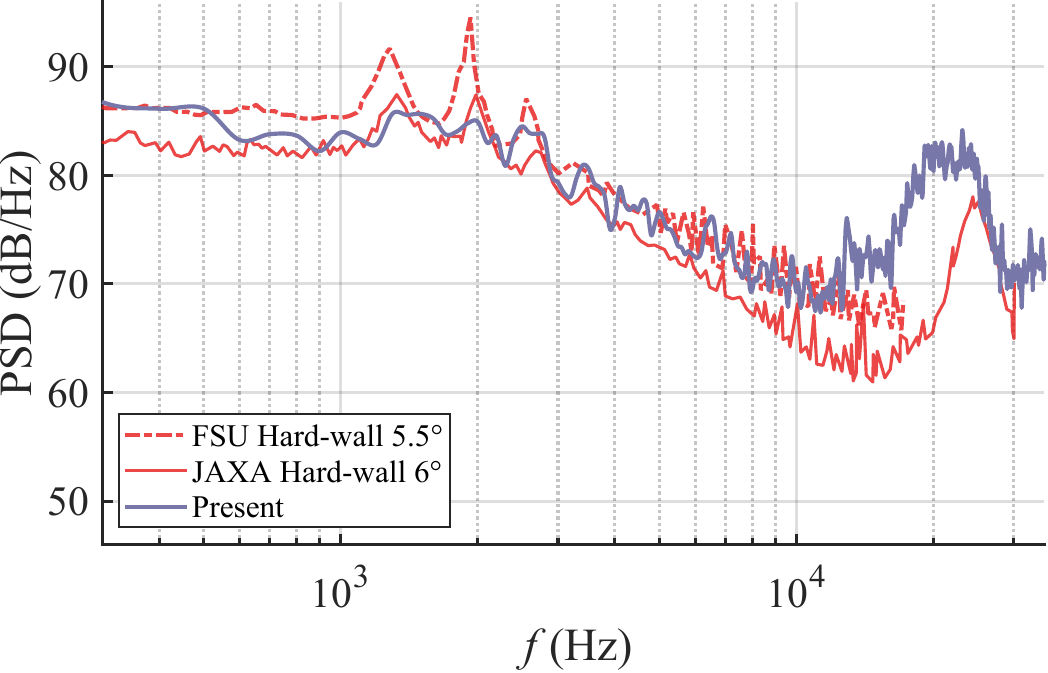}
    }
    \hfill
    \subfigure[M7]{
        \includegraphics[width=0.45\textwidth]{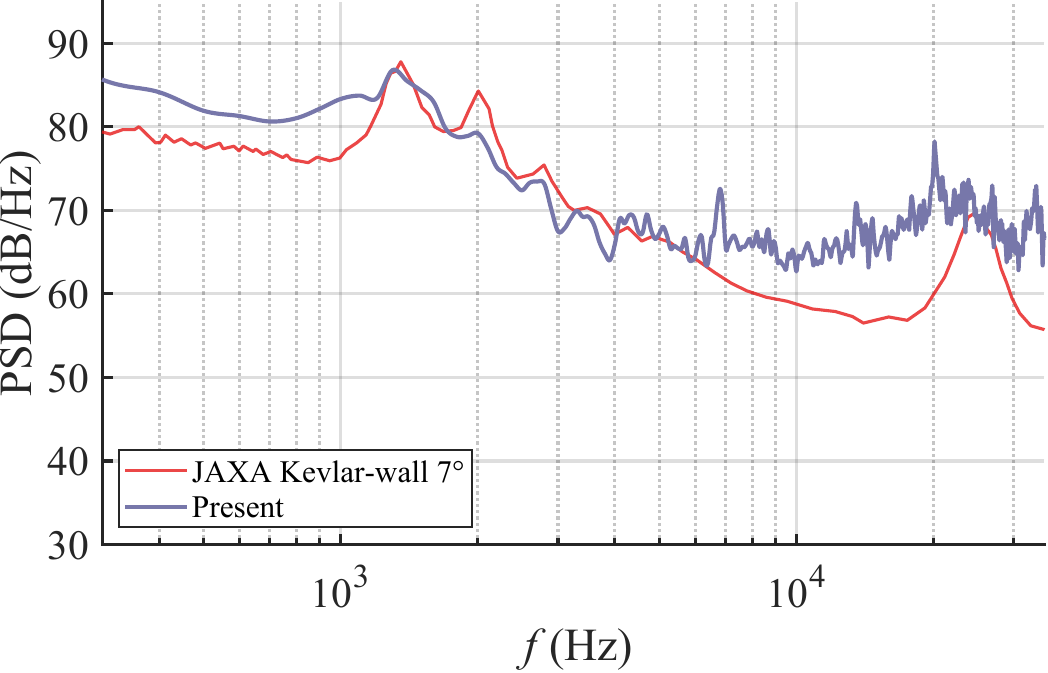}
    }
    \caption{Wall pressure spectra compared with experimental datasets.}
    \label{fig:30p30n_nf}
\end{figure}

The time step distribution is shown in Fig.~\ref{fig:30p30n_dt}. Local time stepping assigns larger admissible time steps to most mesh cells, while critical small time steps persist only within boundary layers and locally refined regions. The two-order-of-magnitude spread between extreme time steps illustrates the motivation for local time stepping, although no wall-clock speedup is quantified in the present study.
\begin{figure}[htbp]
\centering
\includegraphics[width=0.7\textwidth]{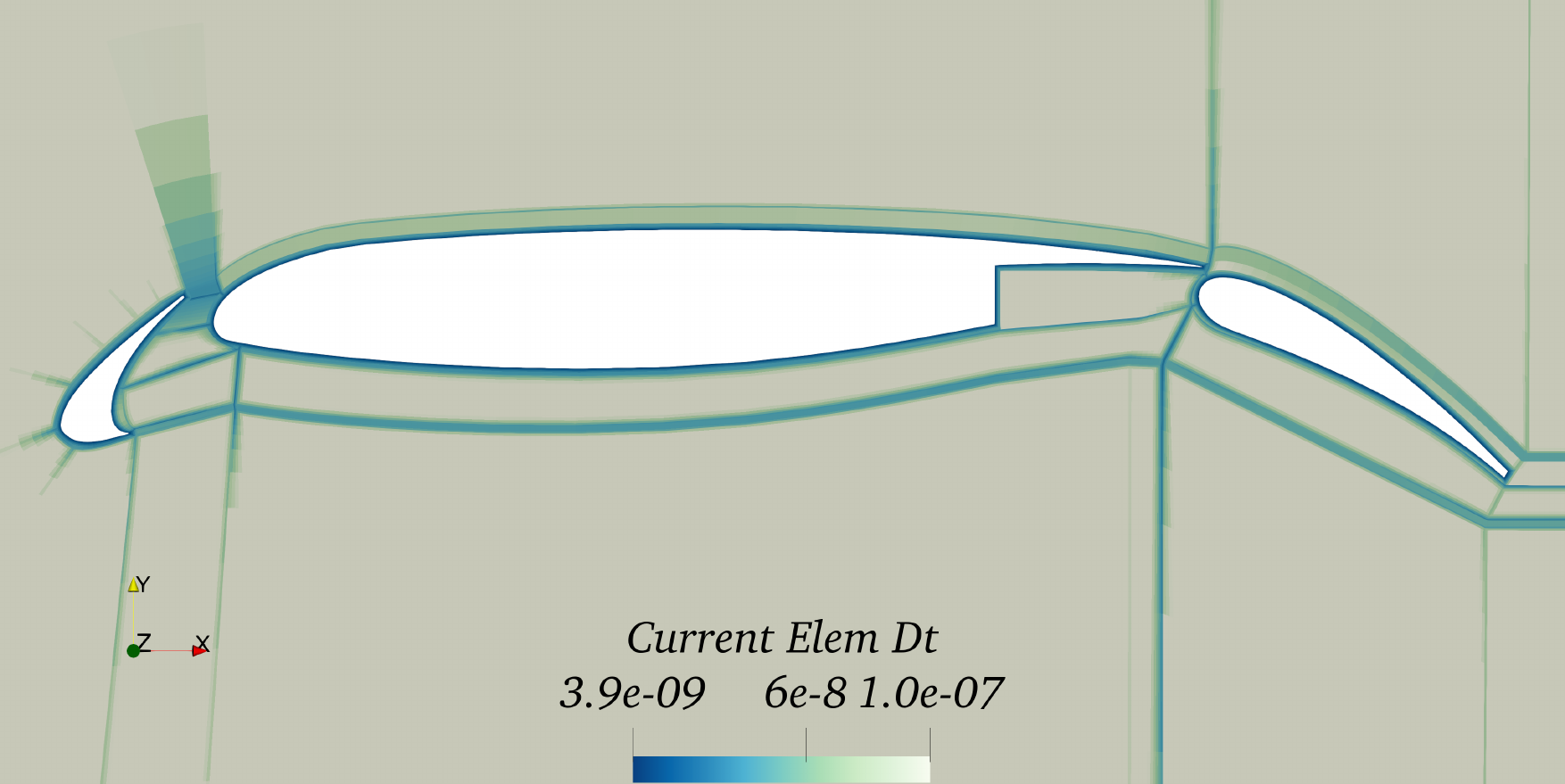}
\caption{Instantaneous time step distribution for the 30P30N high-lift benchmark.}
\label{fig:30p30n_dt}
\end{figure}

The present three-dimensional computation highlights two important features of the proposed method. First, the weakly compressible DG framework remains robust on a geometrically complex multielement configuration. Second, the local time-stepping strategy is compatible with the strongly nonuniform time-step distribution induced by the near-wall layers and inter-element gaps. Therefore, the 30P30N benchmark simulation further confirms the practical potential of the proposed method for high-fidelity simulations of realistic low-Mach-number aerodynamic flows.

\subsection{Summary of numerical findings}
\label{subsec:numerical_summary}

From the above numerical experiments, several observations can be made. First, the one-dimensional weakly compressible simple wave and the two-dimensional weakly compressible shear-wave exact-solution tests verify the expected convergence behavior on nonuniform meshes, while the mass variation remains at roundoff level for the conservative split LTS formulation. In the two-dimensional shear-wave test, the additional comparison also shows that directly applying the Gassner-type unsplit update to the nodal DG discretization destroys this conservation property, increasing the mass drift from about $10^{-15}$ to about $10^{-8}$. Second, the $Re=100$ circular-cylinder case shows that the force statistics agree with established benchmark data, while the TR flux used here produces weaker far-field pressure fluctuations than the LF flux in the present cylinder-flow test. Third, the lid-driven cavity flow shows that the method can accurately predict canonical viscous low-speed flows with strong wall effects. Finally, the three-dimensional 30P30N benchmark confirms the robustness and practical applicability of the method to a realistic complex configuration.

Overall, the numerical results indicate that the proposed weakly compressible DG method with time-accurate local time stepping preserves conservation and maintains the expected accuracy on the tested nonuniform meshes, while retaining robustness and local time-step flexibility for low-Mach-number flow simulations.

\section{Conclusions and Future Work}
\label{sec:conclusions}

This paper has developed a time-accurate local time-stepping DG method for unsteady low-Mach-number flows based on a weakly compressible formulation. The method combines a strong-form nodal DGSEM discretization on quadrilateral and hexahedral elements, IIPG treatment of viscous terms, a two-rarefaction approximate Riemann flux specialized to the weakly compressible convective system, and a CERK-based continuous predictor for local time stepping. For the nodal DGSEM discretization, the face contribution is split into a purely interior part and a common-flux part so that the volume term and the purely interior face contribution are integrated at the same element-local quadrature points, while the conservative exchange across interfaces is handled by face-wise time matching of the common flux.

The numerical results show that the proposed method achieves the expected convergence behavior on nonuniform meshes for the one-dimensional simple wave and the two-dimensional weakly compressible shear wave, while preserving mass to roundoff accuracy in the periodic verification problems. The comparison with the Gassner-type unsplit update further confirms the need for the split treatment in the present nodal DG formulation. For the $Re=100$ cylinder-flow case, the TR flux used here gives force statistics consistent with reference data and produces weaker far-field pressure fluctuations than the LF flux in the present test. The lid-driven cavity calculation confirms that the method can resolve canonical viscous low-speed flow features, including centerline velocity extrema, and pressure variation. The three-dimensional 30P30N computation further demonstrates the applicability of the method to realistic complex geometries and large-scale unsteady low-Mach-number flow simulations.

Several directions remain for future work. First, more systematic performance studies will be carried out on large CPU--GPU heterogeneous platforms to quantify the wall-clock performance and possible gains of the local time-stepping strategy for different mesh nonuniformities and polynomial degrees. Second, implicit--explicit (IMEX) hybrid time integration will be investigated within the present conservative LTS-DG framework. This extension can be considered from two complementary viewpoints. Equation-based splitting will emphasize implicit treatment of stiff source terms, while convective fluxes and other nonstiff contributions remain under explicit local time stepping. In terms of spatial-region splitting, near-wall, small-cell, or strongly refined regions may be advanced by implicit or locally implicit treatments, while the remaining regions continue to use explicit LTS. Key issues include conservative common-flux exchange across explicit--implicit interfaces, time matching between different temporal treatments, and consistency with the CERK continuous predictor. Third, the LTS treatment of viscous terms will be extended from the present IIPG discretization to a broader IPDG family, in particular the SIPG and NIPG variants. A key issue is to formulate the time integration of the symmetric or nonsymmetric consistency terms, penalty terms, and viscous numerical fluxes so that the element-local quadrature, face-wise time matching, stability, and conservation requirements remain mutually consistent. This will also allow systematic comparisons of IIPG, SIPG, and NIPG in terms of accuracy, robustness, and computational cost under local time stepping for viscous weakly compressible flows.

\appendix

\section{CERK Coefficients}
\label{app:cerk_coefficients}

Following the continuous explicit Runge--Kutta notation of Owren and Zennaro \cite{OwrenZennaro1992}, the stage values and the continuous predictor are written as
\begin{equation*}
Y_i
=
y_n+\Delta t\sum_{j=1}^{i-1}a_{ij}K_j,
\qquad
K_i=f(Y_i),
\end{equation*}
and
\begin{equation*}
P(t_n+\theta\Delta t)
=
y_n+\Delta t\sum_{i=1}^{s}b_i(\theta)K_i,
\qquad 0\le \theta\le 1.
\end{equation*}
The first- to fourth-order CERK coefficients used in the local predictor are listed below. For $p=1,2$, the standard minimal continuous extensions are used; for $p=3,4$, the coefficients are those reported by Owren and Zennaro \cite{OwrenZennaro1992}.

\paragraph{First order.}
\begin{equation*}
c_1=0,\qquad b_1^{(1)}(\theta)=\theta .
\end{equation*}

\paragraph{Second order.}
\begin{equation*}
\begin{array}{c|cc}
0 &  &  \\
1 & 1 &  \\
\end{array}
\qquad
b_1^{(2)}(\theta)=\theta-\frac{1}{2}\theta^2,\qquad
b_2^{(2)}(\theta)=\frac{1}{2}\theta^2 .
\end{equation*}

\paragraph{Third order.}
{\small
\renewcommand{\arraystretch}{1.35}
\setlength{\arraycolsep}{5pt}
\[
\begin{array}{c|cccc}
0 &  &  &  &  \\
\frac{12}{23} & \frac{12}{23} &  &  &  \\
\frac{4}{5} & -\frac{68}{375} & \frac{368}{375} &  &  \\
1 & \frac{31}{144} & \frac{529}{1152} & \frac{125}{384} & 
\end{array}
\]
\begin{align*}
b_1^{(3)}(\theta)&=\frac{41}{72}\theta^3-\frac{65}{48}\theta^2+\theta,\\
b_2^{(3)}(\theta)&=-\frac{529}{576}\theta^3+\frac{529}{384}\theta^2,\\
b_3^{(3)}(\theta)&=-\frac{125}{192}\theta^3+\frac{125}{128}\theta^2,\\
b_4^{(3)}(\theta)&=\theta^3-\theta^2 .
\end{align*}
}

\paragraph{Fourth order.}
{\small
\renewcommand{\arraystretch}{1.35}
\setlength{\arraycolsep}{3.5pt}
\[
\begin{array}{c|cccccc}
0 &  &  &  &  &  &  \\
\frac{1}{6} & \frac{1}{6} &  &  &  &  &  \\
\frac{11}{37} & \frac{44}{1369} & \frac{363}{1369} &  &  &  &  \\
\frac{11}{17} & \frac{3388}{4913} & -\frac{8349}{4913} & \frac{8140}{4913} &  &  &  \\
\frac{13}{15} & -\frac{36764}{408375} & \frac{767}{1125} & -\frac{32708}{136125} & \frac{210392}{408375} &  &  \\
1 & \frac{1697}{18876} & 0 & \frac{50653}{116160} & \frac{299693}{1626240} & \frac{3375}{11648} & 
\end{array}
\]
\begin{align*}
b_1^{(4)}(\theta)&=-\frac{866577}{824252}\theta^4+\frac{1806901}{618189}\theta^3-\frac{104217}{37466}\theta^2+\theta,\\
b_2^{(4)}(\theta)&=0,\\
b_3^{(4)}(\theta)&=\frac{12308679}{5072320}\theta^4-\frac{2178079}{380424}\theta^3+\frac{861101}{230560}\theta^2,\\
b_4^{(4)}(\theta)&=-\frac{7816583}{10144640}\theta^4+\frac{6244423}{5325936}\theta^3-\frac{63869}{293440}\theta^2,\\
b_5^{(4)}(\theta)&=-\frac{624375}{217984}\theta^4+\frac{982125}{190736}\theta^3-\frac{1522125}{762944}\theta^2,\\
b_6^{(4)}(\theta)&=\frac{296}{131}\theta^4-\frac{461}{131}\theta^3+\frac{165}{131}\theta^2 .
\end{align*}
}

\end{document}